\documentclass[runningheads,envcountsect,envcountsame,orivec]{llncs}
\usepackage[a4paper,margin=25mm,includehead,headheight=13pt,headsep=5mm]{geometry}
\usepackage[T1]{fontenc}
\usepackage{amsmath,amssymb}
\numberwithin{equation}{section}
\usepackage{graphicx,pgfplots,tikz}
\pgfplotsset{compat=1.18} 
\usepackage{hyperref}
\usepackage{booktabs,tabularx}

\newcommand{\R}{\mathbb R}
\newcommand{\Z}{\mathbb Z}
\newcommand{\E}{\mathbb E}
\newcommand{\PP}{\mathbb P}

\newcommand{\dd}{\,\mathrm d}
\newcommand{\ind}{\mathbf 1}
\newcommand{\Gr}{\operatorname{Gr}}
\newcommand{\rank}{\operatorname{rank}}
\newcommand{\norm}[1]{\lVert#1\rVert}
\spnewtheorem{heuristicassumption}{Assumption}{\bfseries}{\itshape}

\title{Asymptotics of Random-Lattice Enumeration: \\Beyond HKZ via Dense Sublattices}
\titlerunning{Asymptotics of Random-Lattice Enumeration}
\author{Thijs Laarhoven}
\authorrunning{T. Laarhoven}
\institute{}

\begin{document}
\maketitle

\begin{abstract}
To bridge the gap between worst-case and heuristic analyses of lattice enumeration, we study the sizes of (unpruned) enumeration trees on $n$-dimensional Haar-random lattices. Writing $n^{c n + o(n)}$ for the tree size at the shortest vector radius, we investigate the asymptotic scaling of $c$, and which bases minimize it.

For quasi-HKZ bases, we rigorously analyze the random-lattice HKZ geometry and establish $c = 1/(2e) \approx 0.1839$. This matches the worst-case enumeration exponent of Hanrot--Stehl\'{e} and confirms earlier predictions for the geometry of HKZ bases on random lattices.

Searching beyond quasi-HKZ bases, we prove that with high probability, the minimum achievable tree size exponent $c$ over all bases of a random lattice satisfies:
\begin{align}
0.1464 \approx \frac{2 - \sqrt{2}}{4} \leq c < 0.1474.
\end{align}

For the lower bound, we use Kim's mean-value formula for nested chains of primitive sublattices to rule out bases with small prefix determinants. This strengthens the earlier Rankin-based heuristic lower bound exponent $c \geq 1/8$, discussed by Nguyen and Hanrot--Stehl\'{e}, and rules out the existence of random-lattice bases with trees as small as predicted by the geometric series assumption.

For the upper bound, we show that with high probability a dense primitive half-rank sublattice exists with determinant at most $n^{n/8 + o(n)}$. Combining dual HKZ on this sublattice with HKZ on its quotient gives a basis with $c \leq 1/(4\sqrt{e}) \approx 0.1516$. Recursively splitting each half lowers the upper bound below $0.1474$, and under a natural heuristic assumption we prove that the lower bound can be attained with suitable sublattice splits.

Combined with the collision search of Feng--Gao--Liu, our two-block construction gives a polynomial-space algorithm solving the closest vector problem with preprocessing (CVPP) on random lattices in query time $n^{0.0892 n + o(n)}$ for arbitrary targets.
\keywords{Random lattices \and Enumeration \and HKZ reduction \and Dense sublattices}
\end{abstract}


\section{Introduction}

Currently, one of the main challenges in cryptography is the migration to quantum-safe cryptography. After decades of research, lattice-based cryptography has emerged as a prime candidate due to its favorable performance trade-offs over other proposals. The security of lattice-based schemes relies on the hardness of high-dimensional lattice problems such as the shortest vector problem (SVP) and the closest vector problem (CVP). A good understanding of the best algorithms for solving these problems is critical for designing secure and efficient lattice-based cryptographic primitives.

For solving hard lattice problems, perhaps the most important tool in the cryptanalyst's toolbox is the BKZ algorithm: a blockwise basis reduction algorithm turning a "bad" lattice basis into a "good" one~\cite{schnorreuchner,chennguyen}. The BKZ algorithm relies on an exact SVP subroutine to do the heavy lifting; within lower-dimensional projected sublattices (blocks), an SVP oracle finds a shortest lattice vector, which is then used to improve the overall basis. The block size controls the cost of the algorithm and the quality of the basis, and the overall costs of BKZ are often estimated to be proportional to solving SVP in random lattices whose dimension equals the block size. As such, the costs of exact SVP oracles directly impact the estimated costs of BKZ.

For solving exact SVP, there are two main lines of work. Lattice enumeration~\cite{finckepohst,schnorreuchner,gnr,chennguyen} can be seen as a brute-force search through all potential combinations of basis vectors that might produce a shortest vector. By enumerating all combinations and keeping track only of the shortest vector found so far, this algorithm only requires polynomial memory, but its running time in dimension $n$ scales as $n^{\Theta(n)}$~\cite{kannan83,kannan87}. Its competitor, lattice sieving~\cite{ajtai01,g6k,bdgl}, is a saturation algorithm which first samples a lot of lattice vectors, and then repeatedly tries to form shorter combinations until it finds a shortest vector. This method runs in time $2^{\Theta(n)}$, but additionally requires $2^{\Theta(n)}$ space for storing all the sampled vectors. Currently, the highest records in the SVP challenge database~\cite{svpchallenge} were achieved using lattice sieving. However, for a long time lattice enumeration clearly dominated this same database, and improvements to enumeration may shift the balance back in its favor.

In this work, we study the main ingredient dominating the costs of lattice enumeration: the size of the full (unpruned) enumeration tree when searching at a radius proportional to the shortest vector norm. In particular, we focus on the sizes of these trees on Haar-random lattices of dimension $n$ and determinant one. For a basis $\mathbf{B}$ of a random lattice $\mathcal{L}$, write:
\begin{align}
T(\mathbf{B}) = \#\{\text{nodes in the full enumeration tree at radius }\lambda_1(\mathcal{L})\} = n^{c n + o(n)}.
\end{align}
We are interested in finding bases $\mathbf{B}$ which minimize $c$ for Haar-random lattices.\footnote{Note that for random lattices, with high probability $\lambda_1(\mathcal{L}) = (1 + o(1)) \cdot \sqrt{n / (2 \pi e)}$. As the leading constant $c$ in the enumeration tree size does not change when multiplying the radius with $1 + o(1)$, we can equivalently replace $\lambda_1(\mathcal{L})$ in the definition of $T(\mathbf{B})$ and $c$ with the Gaussian heuristic.}


\subsection{Related work}

A classical approach for obtaining a basis with small enumeration trees is HKZ reduction~\cite{kz1873}. To reduce an $n$-dimensional basis, this greedy approach selects a shortest non-zero vector as its first basis vector, projects away its span, and calls HKZ on the remainder (and lifts the result back to the original lattice). Hanrot--Stehl\'{e} established that, even for worst-case lattices, the tree size exponent for (quasi-)HKZ bases satisfies $c \leq 1/(2e)$~\cite{hs2007,hscomplete}. The same authors showed that this worst-case bound is tight~\cite{hsworst} and conjectured that the same exponent applies to random lattices~\cite[Section~6]{hscomplete}. The heuristic HKZ model of Hanrot--Pujol--Stehl\'{e}~\cite[Section~4.1]{hps} and its subsequent refinement by Haque--Pieprzyk~\cite{haque} support this prediction, but rely on unproven heuristic assumptions about the geometry of HKZ-reduced bases. Aono--Nguyen--Seito--Shikata~\cite{anss} experimentally observed that HKZ bases may be suboptimal for minimizing the tree sizes. In general, it is not known whether HKZ bases are asymptotically optimal (up to lower order terms), or if bases exist achieving $c < 1/(2e) - \varepsilon$ for fixed $\varepsilon > 0$ and large $n$.

When considering enumeration trees for other bases, past results have hinted that one may hope to do better. The best known heuristic lower bound of Nguyen for arbitrary bases~\cite[``A Heuristic Lower Bound'']{nguyenhermite} says $c \geq 1/8$ for all bases, which leaves a clear gap with the heuristic estimate of $c = 1/(2e)$ for HKZ bases. In the idealized setting of a basis following the geometric series assumption~\cite{schnorrgsa}, one would actually obtain $c = 1/8$~\cite[Section~6, p.~27]{hscomplete}. This reduced tree size for GSA bases was cleverly used by Albrecht et al.~\cite{faster} to obtain a faster BKZ algorithm. However, for random lattices we generally do not expect bases with such a GSA-profile (with $c = 1/8$) to exist.

A concrete direction for finding bases with smaller enumeration trees is based on finding dense sublattices. This approach was already suggested by Gama--Howgrave-Graham--Koy--Nguyen and Hanrot--Stehl\'{e}~\cite{ghkn2006,hscomplete}: search for a half-rank sublattice with small determinant, split the lattice basis in two halves, and reduce each part separately. The same relation between small enumeration trees and dense sublattices was observed by Li--Nguyen~\cite{linguyen14}. An open question is whether using dense sublattices can help find bases with strictly smaller $c$ than HKZ. 

For using dense sublattices in cryptanalytic algorithms, Ducas--Loyer~\cite[Sections~3--6]{ducasloyer} recently studied BKZ with low-rank densest-sublattice oracles, especially ranks $2$ and $3$, using Kim's sublattice statistics~\cite{kimflags} to predict the output slope. Their heuristic estimates give only small gains, insufficient to compensate for the loss of dimensions for free, and their experiments show that these estimates are optimistic. Their experiments further reveal strong clustering in dense-sublattice counts, illustrating the difficulty of inferring typical existence from first moments.


\subsection{Contributions}

In this work, we answer several of the open questions above. Can we rigorously prove asymptotics on $c$ for quasi-HKZ bases on random lattices? Are HKZ bases asymptotically optimal, or can we find better bases with lower asymptotics for $c$? And if so, how small can $c$ be, and can we prove sharper lower bounds on $c$ than the heuristic bound $c \geq 1/8$?

\paragraph{Random-lattice HKZ geometry.}

Our first main contribution confirms heuristic predictions about the HKZ geometry of Hanrot--Pujol--Stehl\'{e} and Haque--Pieprzyk~\cite{hps,haque}, as well as the conjecture of Hanrot--Stehl\'{e}~\cite{hscomplete}: for random lattices, quasi-HKZ bases have trees of size $n^{n/(2e)+o(n)}$. More precisely: with high probability over the randomness of $\mathcal{L} \sim \mu_n$, with $\mu_n$ denoting the Haar probability measure over the space $\mathrm{SL}_n(\R)/\mathrm{SL}_n(\Z)$ of $n$-dimensional lattices of determinant one, \textit{all} quasi-HKZ bases of $\mathcal{L}$ have this tree size scaling. To prove this, we show that the Gram--Schmidt profile of quasi-HKZ bases on random lattices essentially matches what heuristics predicted, from which the constant $c$ follows. This matches the asymptotics of Hanrot--Stehl\'{e} for the sizes of enumeration trees for quasi-HKZ bases on worst-case lattices.

\begin{theorem}[Tree sizes of all quasi-HKZ bases]
\label{thm:enumeration}
With probability $1 - o(1)$ over $\mathcal{L} \sim \mu_n$, every quasi-HKZ basis $\mathbf{B}_{\mathrm{HKZ}}$ of $\mathcal{L}$ has enumeration tree size:
\begin{align}\label{eq:main-enum}
T(\mathbf{B}_{\mathrm{HKZ}}) = n^{c_{\mathrm{HKZ}} \cdot n + o(n)}, \qquad c_{\mathrm{HKZ}} = \frac{1}{2e} = 0.18393\dots.
\end{align}
\end{theorem}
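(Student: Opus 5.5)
The argument has two parts. First I control the Gram--Schmidt profile $\|\mathbf b_i^*\|$ of every quasi-HKZ basis of a random lattice. Then I insert that profile into the standard Gaussian-volume estimate for the number of nodes at each level of the enumeration tree.

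\textbf{Profile.} For a quasi-HKZ basis, $\|\mathbf b_i^*\|$ agrees up to a polynomial factor with $\lambda_1(\pi_i(\mathcal L))$, where $\pi_i(\mathcal L)$ is the projected lattice of rank $k=n-i+1$ and determinant $\prod_{j\ge i}\|\mathbf b_j^*\|$. I want to show that, with probability $1-o(1)$ and simultaneously for all $i$,
\begin{align*}
\log\|\mathbf b_i^*\| = \tfrac12\bigl(\log k - \log n\bigr) + \tfrac{1}{2}\Bigl(\textstyle\sum_{j=i+1}^{n}\tfrac{1}{j}\Bigr)\cdot O(1)+o(\log n),
\end{align*}
or more precisely that $\|\mathbf b_i^*\| = \sqrt{k}\,(\det\pi_i\mathcal L)^{1/k} n^{o(1)}$. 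This is the HPS/Haque--Pieprzyk recursion, in which each projected block behaves like a Gaussian-heuristic lattice.

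The upper bound $\lambda_1(\pi_i\mathcal L)\le \sqrt{k}\,\det^{1/k}$ holds deterministically by Minkowski. The lower bound is the hard part, because it is not deterministic. I would handle it through the dual/primitive-sublattice viewpoint. The sublattice $\mathcal L_{i-1}=\mathcal L\cap\mathrm{span}(\mathbf b_1,\dots,\mathbf b_{i-1})$ is primitive with determinant $D_{i-1}=\prod_{j<i}\|\mathbf b_j^*\|$, and a short vector in $\pi_i\mathcal L$ corresponds to a primitive rank-$i$ sublattice of determinant $D_{i-1}\lambda_1(\pi_i\mathcal L)$. Kim's mean-value formula (Rogers-type) gives the expected number of flags $L_1\subset\dots\subset L_i$ of primitive sublattices with prescribed determinant bounds. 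Applying Markov's inequality together with a union bound over $i$ and over a polynomial grid of determinant profiles shows that no flag has all prefix determinants much smaller than the profile predicted by integrating the recursion. This rules out unusually small $\|\mathbf b_i^*\|$ simultaneously for all $i$ and all quasi-HKZ bases.

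The main obstacle is the loss in the first-moment count. The union over the $n$ levels and the grid must cost only $n^{o(n)}$ in total. I would therefore work at scale $\log D_i$ with errors $o(n\log n)$, which is all that the exponent $c$ needs.

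\textbf{Solving the recursion.} With $x=i/n$ and $\ell(x)=\log\|\mathbf b_{xn}^*\|/\log n$, the recursion becomes $\ell(x)=\tfrac12\log_n(n(1-x))\cdot$ plus lower-order terms, up to $o(1)$. On the $\log n$ scale this gives $\ell(x)\approx \tfrac12\log(1-x)/\log n\cdot 0+\dots$, so the leading profile is $\|\mathbf b_i^*\|\approx (n-i+1)^{1/2}\,n^{-\,\cdot}$. Concretely, I expect $\log D_{i-1}=\tfrac{1}{2}\sum_{j}\log\frac{n-j+1}{?}$, and evaluating it should give the HS exponent. I would verify this by direct calculation.

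\textbf{Tree size.} The number of level-$k$ nodes (last $k$ coordinates) is, up to $2^{O(n)}$ and rounding effects that are negligible at the $n^{o(n)}$ scale, equal to $\mathrm{vol}(B_k(R))/\prod_{j>n-k}\|\mathbf b_j^*\|$ with $R=\lambda_1\approx\sqrt{n/2\pi e}$. The lower bound on $\|\mathbf b_j^*\|$ gives the upper bound on the tree size; the matching lower bound uses Minkowski for the upper bound on $\|\mathbf b_j^*\|$ together with a standard lattice-point lower count, e.g.\ via Hanrot--Stehl\'e's argument or a mean-value bound on the projected lattice. Maximizing $\tfrac{k}{2}\log(n/k)$ minus the profile sum over $k=\alpha n$ gives $\max_\alpha \tfrac{\alpha}{2}\log(1/\alpha)$ at $\alpha=1/e$, hence exponent $1/(2e)$. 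The upper bound also follows directly from Hanrot--Stehl\'e's worst-case result, so the real content of the proof is the lower bound on $T$.
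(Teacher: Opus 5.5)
There is a genuine gap at the step that carries all the content: the probabilistic lower bound on the prefix determinants $P_k(\mathbf B)=\sum_{i\le k}\ln\|\mathbf b_i^*\|$.

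A first-moment count of flags via Kim's formula only excludes flags whose \emph{entire} profile sits under a ceiling of too small area. For complete flags, $\E[F_n(h)]=n^{(2A(h)-1/4+o(1))n^2}$ (Proposition~\ref{prop:area-rule}). So ``no flag has all prefix determinants much smaller than predicted'' is the wrong event. A quasi-HKZ basis whose profile is typical except for a dip at a single rank $k_0=\alpha n$ is not such a flag, so ``this rules out unusually small $\|\mathbf b_i^*\|$ for all $i$'' does not follow. Pointwise control of Gram--Schmidt lengths up to $n^{o(1)}$ is in any case beyond a first moment: one length off by a factor $n^K$ changes the profile area by only $O(K/n)$, which is invisible on the scale $n^{\Theta(n^2)}$.

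The missing idea is to combine the dip with the deterministic ceilings of Lemma~\ref{lem:hkz-bounds} on all other prefixes, and to use that $p_{\mathrm H}$ is exactly critical. Suppose $P_{k_0}\le(p_{\mathrm H}(\alpha)-\epsilon)n\ln n$. Apply the lemma to the HKZ tail after $k_0$, together with $p_{\mathrm H}(\alpha+(1-\alpha)t)=(1-t)p_{\mathrm H}(\alpha)+(1-\alpha)p_{\mathrm H}(t)$. This puts the profile below $p_{\mathrm H}(x)-\epsilon(1-x)/(1-\alpha)$ on $[\alpha,1]$, while it lies below $p_{\mathrm H}$ on $[0,\alpha]$ (both up to $o(1)$). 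Since $A(p_{\mathrm H})=1/8$, this ceiling has area $1/8-\epsilon(1-\alpha)/2$, plus an arbitrarily small cost for smoothing the jump at $\alpha$. The area rule then excludes it with probability $1-n^{-\Omega(n^2)}$. The single choice $\alpha=1-1/e$ already gives $\max_kP_k\ge(1/(2e)-\epsilon)n\ln n$ for all quasi-HKZ bases at once. Without this criticality computation, nothing shows the flag count is small on the event that actually occurs. Your ``solving the recursion'' paragraph never produces $p_{\mathrm H}(x)=-\frac{1-x}{2}\ln(1-x)$, which is precisely the ceiling required. Nor is there a delicate union bound: the slack is $n^{\Omega(n^2)}$, not $n^{o(n)}$.

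Your upper bound via the Hanrot--Stehl\'e worst-case estimate is fine, but the tree-size paragraph reverses the directions. Minkowski's bound $h_j\le\frac12\ln j$ gives $P_k\le(p_{\mathrm H}(k/n)+o(1))n\ln n$, and hence the \emph{upper} bound on $T$. It cannot give the lower bound, which fails deterministically: $\Z^n$ has an HKZ basis with all $P_k=0$ and a tree of polynomial size at radius $\lambda_1$. The lower bound needs three ingredients:
\begin{itemize}
\item the probabilistic lower bound on $P_{n-d}$ discussed above;
\item the averaging count $N_d\ge V_d(R/2)^de^{P_{n-d}}$;
\item $\lambda_1(\mathcal L)\ge a\sqrt n$ with high probability.
\end{itemize}
Note also that the quantity $\frac{k}{2}\ln(n/k)$ you maximize is, up to $O(k)$, the ball-volume term $\ln(V_kR^k)=O(n)$, which is negligible at scale $n\ln n$. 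The constant $1/(2e)$ comes from the profile term $P_{n-k}\approx\frac{k}{2}\ln(n/k)\ln n$, and that is what must be established.

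Once repaired as above, your route differs genuinely from the paper's. The paper bounds the exponential moment $\E[e^{Z_n/2}]=e^{O(n^2)}$ of the weighted cumulative HKZ deficit by unfolding over all primitive vectors, which handles the adaptive choices. That also yields averaged control of individual Gram--Schmidt lengths in the bulk. The flag/area-rule argument needs only Kim's chain formula, which is the tool the paper reserves for Theorem~\ref{thm:any-basis}, plus the deterministic HKZ bound. It gives exactly the prefix profile that Theorem~\ref{thm:enumeration} requires.
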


\paragraph{Beyond HKZ via dense sublattices.}

Whereas Theorem~\ref{thm:enumeration} proves what heuristics already suggested, to the best of our knowledge the following concrete asymptotic improvement has not been established before: (1) showing that useful dense sublattices exist with high probability, and (2) showing that with these dense sublattices, one can indeed go strictly below the $c_{\mathrm{HKZ}} = 1/(2e)$ exponent of quasi-HKZ bases, thereby showing quasi-HKZ bases are asymptotically suboptimal in the leading constant $c$. This idea was previously suggested by Hanrot--Stehl\'{e}~\cite{hscomplete}, and here we rigorously establish the asymptotic improvement over quasi-HKZ bases. Using a single dense primitive sublattice of half the rank, and optimizing for the volume of this sublattice so that with high probability it exists, we obtain an explicit limiting Gram--Schmidt profile and asymptotic tree size constant $c_2$. 

\begin{theorem}[Tree size of a two-block split]
\label{thm:two-way}
With probability $1 - o(1)$, $\mathcal{L} \sim \mu_n$ has a primitive sublattice $\mathcal{M}$ of rank $m = \lfloor(n-1) / 2\rfloor$ with $\det\mathcal{M} \leq e^{O(n)} n^{n/8}$. Combining a reversed dual-HKZ basis of $\mathcal{M}$ with lifts of an HKZ basis of its orthogonal quotient gives a basis $\mathbf{B}_2$ of $\mathcal{L}$ with enumeration tree size:
\begin{align}\label{eq:two-way-main}
T(\mathbf{B}_2) = n^{c_2 \cdot n + o(n)}, \qquad c_2 = \frac{1}{4\sqrt e} = 0.15163\dots.
\end{align}
\end{theorem}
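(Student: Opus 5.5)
The proof splits into two independent parts: (i) existence, with high probability, of a primitive rank-$m$ sublattice $\mathcal{M}$ with $\det\mathcal{M}\le e^{O(n)}n^{n/8}$; (ii) a deterministic computation of the Gram--Schmidt profile of $\mathbf{B}_2$, and hence of $T(\mathbf{B}_2)$, given such an $\mathcal{M}$. Part (i) carries the probabilistic content. Part (ii) should reduce to the same Hanrot--Stehl\'e-type bookkeeping used for Theorem~\ref{thm:enumeration}, applied separately to the two blocks.

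For (i), let $N_V$ be the number of primitive rank-$m$ sublattices of $\mathcal{L}$ with determinant at most $V$. A Siegel-type mean-value formula for primitive sublattices (Schmidt's, or Kim's version for flags) gives $\E[N_V]=a_{n,m}V^n$. The constant $a_{n,m}$ is an explicit ratio of products of $\zeta(k)$'s and unit-ball volumes $\kappa_k$. For $m\approx n/2$, Stirling gives $a_{n,m}=n^{-n^2/8+O(n)}e^{O(n^2)}$. I would tune the $e^{O(n)}$ factor so that $V=e^{O(n)}n^{n/8}$ makes $\E[N_V]\to\infty$ (indeed superexponentially large). Existence then follows from a second-moment bound $\E[N_V^2]\le(1+o(1))\E[N_V]^2$, via Chebyshev or Paley--Zygmund.

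To bound $\E[N_V^2]$, I would decompose pairs $(\mathcal{M},\mathcal{M}')$ by the rank $r$ of $\mathcal{M}\cap\mathcal{M}'$, equivalently by the flag $\mathcal{M}\cap\mathcal{M}'\subset\mathcal{M}\subset\mathcal{M}+\mathcal{M}'$. I would then apply Kim's nested-chain mean-value formula to each class, with the constraint $\det(\mathcal{M}\cap\mathcal{M}')\det(\mathcal{M}+\mathcal{M}')\le\det\mathcal{M}\det\mathcal{M}'$. The $r=0$ class essentially reproduces $\E[N_V]^2$, and every $r\ge1$ class should be a vanishing fraction. This step is the main obstacle, because the introduction explicitly warns that dense-sublattice counts cluster. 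If the pure second moment is too lossy for intermediate $r$, I would fall back to a restricted count: only sublattices whose own successive minima and dual minima are not too skewed. This removes the clustering coming from sublattices that share a very short vector, and it costs only an $e^{O(n)}$ factor in $V$.

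For (ii), write $D=\det\mathcal{M}$ and note that the quotient $\pi(\mathcal{L})$ has rank $n-m$ and determinant $1/D$. The basis $\mathbf{B}_2$ lists a reversed dual-HKZ basis of $\mathcal{M}$ first, then lifts of an HKZ basis of $\pi(\mathcal{L})$. The first block then has increasing Gram--Schmidt norms and the second block decreasing ones, so the profile is valley-shaped. Minkowski's bound on each projected block gives two-sided control of tail determinants. In the quotient part, $\prod_{i>n-k}\norm{b_i^*}\ge k^{-k/2+o(k)}\cdot(\text{block determinant})$, and the recursion as in Theorem~\ref{thm:enumeration} gives a $\log$-profile equal to $\tfrac12\int\log$ of the Minkowski bound. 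Dually, the reversed dual-HKZ block on $\mathcal{M}$ has the mirror profile, anchored at $\det\mathcal{M}=D$.

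I would bound $T(\mathbf{B}_2)$ above by $\sum_k$ of the number of points of the projected tail lattice in a ball of radius $\approx\sqrt{n/(2\pi e)}$, controlled by the Hanrot--Stehl\'e counting lemma (points $\lesssim$ volume/determinant times $2^{O(n)}$ for HKZ-type tails). Writing $k=tn$, the exponent becomes $\max_t f(t)$ for an explicit piecewise function. One piece comes from tails inside the quotient, where one gets the HKZ shape rescaled by $\log D\approx\tfrac n8\log n$. The other piece comes from tails reaching into $\mathcal{M}$. Maximizing gives $1/(4\sqrt e)$, and I would verify that the maximum sits at the point where the derivative of $\tfrac t2\log t$-type terms produces $e^{-1/2}$.

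The matching lower bound on $T(\mathbf{B}_2)$ follows from the Gaussian-heuristic lower count on the maximizing tail. The main difficulty is existence plus the correct lower count on that tail. For the count, I would use the same argument as the lower bound in Theorem~\ref{thm:enumeration}, namely that projected tails of a random lattice contain at least volume/det$\cdot2^{-O(n)}$ points. Finally, the paper only claims $T(\mathbf{B}_2)=n^{c_2 n+o(n)}$, and I would obtain the two directions from these upper and lower counts.
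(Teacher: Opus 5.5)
\textbf{Part (ii) is not a deterministic computation, and this is the main gap.} Minkowski's bound only bounds $\lambda_1$ from above. So for an arbitrary primitive $\mathcal M$ with $\det\mathcal M\le e^{O(n)}n^{n/8}$ it gives only the one-sided estimate $P_k(\mathbf B_2)\le(p_2(k/n)+o(1))\,n\ln n$, via Lemma~\ref{lem:hkz-bounds} in each block. That yields only the upper bound on $T(\mathbf B_2)$.

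The matching lower bound needs $P_k(\mathbf B_2)\ge(1/(4\sqrt e)-o(1))\,n\ln n$ at a peak. This requires two things:
\begin{itemize}
\item $\ln D\ge(1/8-o(1))\,n\ln n$, which the paper gets from the window $H/2\le D\le H$;
\item the HKZ lengths of the normalized quotient $D^{1/(n-m)}\mathcal Q$, and of $D^{1/m}\mathcal M^*$, do not fall below the Gaussian heuristic in aggregate.
\end{itemize}
Your volume-over-determinant lower count turns a lower bound on $P_k$ into a node count, but it supplies no lower bound on $P_k$ itself.

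The random-lattice HKZ profile behind Theorem~\ref{thm:enumeration} is a statement about Haar-random lattices. The blocks of a sublattice selected because it is unusually dense are not Haar-distributed. Nor can you argue that every dense candidate has typical blocks: the expected number of atypical ones is $\varepsilon\,\E[N_V]$, which is still $e^{\Omega(n^2)}$.

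The paper therefore builds typicality into the existence step. By Kim's disintegration (Lemma~\ref{lem:block-average}), candidates with $D\in[H/2,H]$ whose normalized blocks violate a property of Haar probability $1-o(1)$ make up an $o(1)$ fraction of the first moment. The count $Y$ of good candidates is dominated by the truncated count $Z$, so $\PP\{Y>0\}\ge\E[Y]^2/\E[Z^2]=1-o(1)$.

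Your upper count has a related problem. Tails of $\mathbf B_2$ that reach into the reversed dual-HKZ block are not HKZ-type. Because the profile has two peaks, no bound depending only on Gram--Schmidt lengths can give $c_2$ there. For $k\le n/(2\sqrt e)$, the orthogonal lattice with Gram--Schmidt lengths $a_{k+1},\dots,a_n$ already has at least $\exp(\sum_{i>k}[-y_i]_+)$ points of norm at most $\sqrt n$. For the intended $\mathbf B_2$ this is $n^{(1/(2\sqrt e)-1/8+o(1))n}\approx n^{0.178n}$, exceeding $n^{c_2 n}$. You need that the dual of each projected lattice is $\mathcal L^*\cap\mathcal M_k^\perp\subseteq\mathcal L^*$, used for instance through Gaussian mass and Poisson summation as in Proposition~\ref{prop:tree-prefix}. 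This gives $\ln N_{n-k}=P_k+O(n)$ uniformly over all bases.

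\textbf{In part (i), the untruncated second moment is infinite, not merely lossy.} This holds for every intersection rank $1\le r<m$, not just intermediate ones. Project away $\mathcal C=\mathcal M\cap\mathcal M'$, with $\det\mathcal C=D$, and apply Kim's zero-intersection pair formula in the normalized rank-$(n-r)$ quotient. The pair density is proportional to $D^{n-1}\cdot D^{-2(n-m)}=D^{2m-n-1}$, and $\int_0 D^{2m-n-1}\dd D$ diverges since $2m-n-1\le-2$. Truncation is therefore mandatory rather than a fallback, and you must show it retains $(1-o(1))\E[N_V]$.

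Your idea of restricting the candidates' internal geometry points the right way, provided it forces a lower bound on $\det(\mathcal M\cap\mathcal M')$ at every rank $r$. The paper proceeds as follows:
\begin{itemize}
\item it discards every $\mathcal M$ containing a proper primitive $\mathcal C$ of rank $r$ with $\det\mathcal C<n^{r(n-r)/(2n)}$;
\item it shows via Lemma~\ref{lem:block-average} that this removes only an $e^{-\Omega(n)}$ fraction;
\item the truncated integral then leaves relative overlap terms $n^{-r(m-r)(n-m)/n}(H/n^{m(n-m)/(2n)})^{-2r}e^{O(rn)}$, which become $e^{-\Omega(rn)}$ after enlarging the $e^{O(n)}$ slack in $H$.
\end{itemize}
Both the disjoint-pair formula and convergence as $D\to\infty$ need $n-2m\ge1$, which is why $m=\lfloor(n-1)/2\rfloor$.

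Finally, your flag $\mathcal M\cap\mathcal M'\subset\mathcal M\subset\mathcal M+\mathcal M'$ does not determine $\mathcal M'$, and $\mathcal M+\mathcal M'$ need not be primitive. So a chain count with the determinant inequality does not directly count pairs; projecting away the intersection does.
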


Note that if we split a typical HKZ basis halfway, it has a sublattice $\mathcal{M}$ with determinant $\det \mathcal{M} = n^{((\ln 2) / 4 + o(1)) n}$. This is exactly the boundary where a denser sublattice would lead to smaller trees. As finding dense sublattices is difficult, a first goal for a faster SVP algorithm may be to find \textit{any} half-rank sublattice with density higher than what we get from HKZ.

A natural question to ask then is: why stop at a two-way split? Can we recursively look for dense sublattices in each half, and further reduce the constant $c$? Or can we immediately go for multi-block splits, to obtain smaller trees? Using recursive binary splits, the constant can be lowered below $0.1474$, with the limit numerically approaching $0.14739700\dots$.

\begin{theorem}[Tree size of recursive binary splits]
\label{thm:eight-blocks}
With probability $1 - o(1)$, $\mathcal{L} \sim \mu_n$ contains a sequence of dense sublattices $\mathcal{M}_0 \subseteq \mathcal{M}_1 \subseteq \dots \subseteq \mathcal{L}$ such that, combining (lifted) HKZ and reversed dual-HKZ bases of (orthogonal quotients of) these blocks, we obtain a basis $\mathbf{B}_{\mathrm{rec}}$ of $\mathcal{L}$ with enumeration tree size:
\begin{align}\label{eq:eight-block-main}
T(\mathbf{B}_{\mathrm{rec}}) \leq n^{c_{\mathrm{rec}} \cdot n + o(n)}, \qquad c_{\mathrm{rec}} < 0.14740.
\end{align}
\end{theorem}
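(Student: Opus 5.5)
We prove Theorem~\ref{thm:eight-blocks} by applying the two-block construction of Theorem~\ref{thm:two-way} recursively: first inside each half, then inside each quarter, and so on to a fixed finite depth $k$. In the limit $n\to\infty$ each level turns into a deterministic optimization over limiting Gram--Schmidt log-profiles. Throughout we write $\ell_i = \ln\norm{\mathbf b_i^*}/\ln n$ and record only leading-order exponents. The tree size is then
\begin{align}
T(\mathbf B) = \max_k n^{\sum_{i>n-k}\left(\frac{1}{2}-\ell_i\right)+o(n)}\cdot e^{O(n)}.
\end{align}
This holds because level $k$ of the tree contributes, up to $e^{O(n)}$ factors, the ball volume at radius $\sqrt{n}$ divided by the product of the last $k$ Gram--Schmidt norms. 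So $c$ is the maximum over $t\in[0,1]$ of the integral of $\tfrac12-\ell$ over the final $t$-fraction of the profile.

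\textbf{Step 1: the recursive building block.} We will need the two-block construction not only for Haar-random lattices of determinant one, but for rank-$r$ blocks $\mathcal M$ (or quotients $\mathcal L/\mathcal M$) with a prescribed determinant. Suppose the chosen sublattice is a uniformly random element of a suitable Hecke-type family, conditioned on its determinant. Then, after rescaling to determinant one, the block and its quotient are again close to Haar-random in the sense needed for the existence statements. For this transfer we would use the mean-value and second-moment (Rogers/Kim-type) formulas that give the existence part of Theorem~\ref{thm:two-way}. They give, with probability $1-o(1)$, a primitive sub-block of rank $r/2$ with determinant at most $n^{r\cdot\alpha/8}\cdot(\text{shift})$, where the exponent is read off from the rescaled determinant.

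\textbf{Step 2: the existence chain.} Rather than conditioning at each level, we would do a single first- and second-moment count. Take a fixed depth $k$ and fixed target determinant exponents $d_1,\dots,d_{2^k-1}$ for the nested flag $\mathcal M_0\subseteq\mathcal M_1\subseteq\dots$ at the dyadic ranks. By Kim's mean-value formula for flags, the expected number of such flags is $n^{nF(d)+o(n)}$ for an explicit concave $F$. The existence claim is then: if every partial constraint has $F>0$ with margin, the count is concentrated, i.e.\ the variance is $o(\text{mean}^2)$. This follows from Kim's second-moment bounds on flags, in the same way as in the proof of Theorem~\ref{thm:two-way}.

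\textbf{Step 3: the induced profile.} Given the flag, each dyadic block is HKZ- or dual-HKZ-reduced, reversed appropriately. By the random-lattice HKZ geometry behind Theorem~\ref{thm:enumeration}, applied blockwise after rescaling, each block's profile is the explicit HKZ curve, shifted by that block's determinant. The global profile is therefore piecewise explicit. The constant $c$ becomes an explicit function $c_k(d)$, and we minimize it subject to the existence constraints from Step 2.

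\textbf{Step 4: the numerical bound.} Evaluating the optimum at depth $k=3$ (eight blocks) gives a value below $0.14740$. Step 4 is a finite numerical optimization, which we would certify with explicit feasible choices of $d$ plus rounding margins.

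\textbf{Main obstacle.} We expect the main difficulty to be Step 2, and in particular the blockwise use of HKZ geometry in Step 3. The blocks attached to a flag chosen for its small determinants are not Haar-random, so Theorem~\ref{thm:enumeration} cannot be applied to them directly. To handle this we would avoid asserting anything about \emph{which} flag is chosen. Instead we would show that, with high probability, \emph{all} quasi-HKZ bases of \emph{all} admissible sub-blocks have the expected profile up to $o(n)$ errors. To get this, we would bound by a union bound the expected number of flags whose blocks contain abnormally short vectors or dense sublattices, again via Kim's formula. The error terms are $n^{-\Omega(n)}$ per bad event, against $n^{O(n)}$ flags, so the bad-event exponents must beat the flag count. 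If this is too lossy, we would fall back to proving only the upper bound on each block's HKZ profile. That direction comes from Minkowski-type bounds in the worst-case analysis of Hanrot--Stehl\'e, and it suffices because the theorem only claims $T(\mathbf B_{\mathrm{rec}})\le n^{c_{\mathrm{rec}} n+o(n)}$.
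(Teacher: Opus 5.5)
Your Step~2 is a genuine gap. The claim that positive first-moment exponents, with margin, for all partial constraints make the flag count concentrate does not follow ``in the same way'' as the existence part of Theorem~\ref{thm:two-way}.
\begin{itemize}
\item That proof handles a \emph{single} rank at its typical scale $e^{O(n)}n^{m(n-m)/(2n)}$. There, discarding candidates that contain a proper sublattice below $U_r$ loses only an $e^{-\Omega(n)}$ fraction of the mean and makes every overlap term exponentially small.
\item For a multi-rank flag, a positive first moment can come from conditionally rare inner sublattices, compensated by many choices of the outer ones. For example, in the plateau chain of Section~\ref{sec:heuristic-upper}, an inner boundary beyond rank $an$ must be denser than is typical inside a typical outer sublattice. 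No truncation is known that controls the resulting correlations.
\item In fact your Step~2 is a strengthening of Assumption~\ref{heur:assumption}. Combined with Lemma~\ref{lem:heuristic-moments}, it would prove Theorem~\ref{thm:heuristic} unconditionally, and the paper leaves that open.
\end{itemize}
The missing idea is that no flag count is needed. Fix a finite binary tree of split fractions, and split each node only at its typical scale, i.e.\ $q_\alpha=\alpha(1-\alpha)/2$ in normalized units. Then induct on the tree. Lemma~\ref{lem:marked-dense} lets you prescribe, for both normalized blocks, any property of Haar probability $1-o(1)$. You take that property to be ``admits the construction prescribed by its subtree''. Consequences:
\begin{itemize}
\item The chosen sublattice never has to be Haar-like, so your Step~1 is unnecessary.
\item No union bound over bad flags is needed. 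Such a bound would fail anyway: there are $e^{\Omega(n^2)}$ candidates, against per-candidate failure probabilities of only $n^{-\Omega(n)}$.
\item At the leaves the deterministic Lemma~\ref{lem:hkz-bounds} suffices, as in your fallback, and the profiles compose by \eqref{eq:block-composition}.
\end{itemize}

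Step~4 is also wrong: depth $3$ is nowhere near enough. The paper's search uses non-dyadic split fractions with tilts. It first drops below $0.1474$ at depth $13$, and the proof uses a depth-$14$ tree with $16\,384$ blocks; the limit is $0.147397\ldots$.

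Dyadic boundaries $i/8$ fail even under your own heuristic constraints. Write $h_i$ for the exponent at rank $in/8$, and work in units of $n^2\ln n$.
\begin{itemize}
\item The subchain $\{1/2\}$ forces $h_4\geq1/8$.
\item By Kim's chain formula, the subchain $\{3/8,1/2,5/8\}$ has exponent $\tfrac12(h_3+h_5)+\tfrac14h_4-\tfrac{11}{64}$.
\item Take a block of width $w$ whose endpoints differ by $w\theta\leq w/2$. Its best-oriented HKZ fill peaks $\tfrac w2e^{2\theta-1}$ above the lower endpoint.
\end{itemize}
Together these force the profile above $0.15$ somewhere on $[3/8,5/8]$.

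Finally, your level-count formula carries a spurious $\tfrac12$ per coordinate. A $k$-dimensional ball of radius $\sqrt n$ has volume $e^{O(n)}$, not $n^{k/2}$. Hence the level counts are $e^{P_{n-k}(\mathbf B)+O(n)}$ (Proposition~\ref{prop:tree-prefix}), and $c$ is the maximum of the prefix profile, not of $\int(\tfrac12-\ell)$.
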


\paragraph{Heuristic sublattice splits.}

For the previous upper bounds, it is important to note that these are based on ``safe'' splits: for every binary split, we make sure that with high probability the desired dense sublattice exists. Better subdivisions may be possible if we allow splits which are exponentially rare, but where there are exponentially many potential sublattices to consider. However, analyzing subdivisions with rare sublattices rigorously appears challenging, as it requires controlling correlations between different sublattices. For instance, showing that with the subdivision strategy there are $n^{0.002 n^2}$ potential sublattices, each having a chance of $n^{-0.001 n^2}$ of being sufficiently dense, is not sufficient to show that one dense sublattice exists with high probability, since sublattices are not independent.

In general, it may be natural to assume that if the first moments of the probabilities align when averaged over all random lattices, then we expect such towers of primitive sublattices to exist for most random lattices. Below we informally state this heuristic assumption; a precise definition follows later.

\begin{heuristicassumption}[First-moment sufficiency, informal]
\label{heur:assumption}
Consider nested chains of primitive sublattices, with prescribed ranks and determinant bounds. Suppose that, for some $\delta > 0$, the expected number of such chains in a Haar random lattice is at least $n^{\delta n^2 + o(n^2)}$, and the same bound holds for every non-empty subchain obtained by omitting intermediate sublattices. Then a Haar-random lattice contains such a complete chain with probability $1 - o(1)$.
\end{heuristicassumption}

Ducas--Loyer \cite[Section 6.2]{ducasloyer} observed strong clustering in rank-two dense-sublattice counts at determinant thresholds where the expected count is one. Assumption~\ref{heur:assumption} concerns a different asymptotic regime: a fixed number of prescribed sublattice ranks, each proportional to $n$, with the expected count of every nonempty subchain at least $n^{\delta n^2}$ for some fixed $\delta > 0$ and all sufficiently large $n$. The assumption only predicts that the complete chain exists with high probability, without assuming Poisson statistics or concentration of the counts around their means. The observations of~\cite{ducasloyer} do not contradict this statement, but emphasize that proving the existence of such chains in general may require justification beyond first moments.

Under this assumption, we show that we can slightly lower the previous upper bounds. Proving this result without the need for the heuristic assumption is left as an open problem.

\begin{theorem}[Tree size of improved heuristic splits]
\label{thm:heuristic}
Suppose Assumption~\ref{heur:assumption} holds, and let $\varepsilon > 0$ be fixed. Then with probability $1 - o(1)$, $\mathcal{L} \sim \mu_n$ contains a basis $\mathbf{B}$ satisfying:
\begin{align}\label{eq:heuristic}
T(\mathbf{B}) \leq n^{(c_{\mathrm{flat}} + \varepsilon) \cdot n + o(n)}, \qquad c_{\mathrm{flat}} = \frac{2-\sqrt2}{4} = 0.14644\dots.
\end{align}
\end{theorem}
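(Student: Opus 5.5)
We reduce Theorem~\ref{thm:heuristic} to an optimization over \emph{profiles}. Write $\ell_i = \log_n \norm{\mathbf b_i^*}$ and rescale $i = tn$. Up to $n^{o(n)}$ factors, the tree size at radius $\approx\sqrt n$ is the maximum over levels $k$ of $\prod_{i>n-k} \sqrt n/\norm{\mathbf b_i^*}$. So we need a limiting profile $\ell(t)$ with $\int_0^1 \ell = 0$, which corresponds to $\det\mathcal L = 1$. Taking $t=n-i$ to measure depth from the end, we want to minimize
\[
c(\ell) = \sup_{s\in[0,1]} \int_{1-s}^{1}\bigl(\tfrac12 - \ell(t)\bigr)\dd t .
\]

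\textbf{Step 1: the lower-bound shape.} The lower bound $c\ge (2-\sqrt2)/4$ comes from Kim's constraints on prefix determinants. The profile attaining it is, I expect, ``flat'': the suffix constraint $\int_{1-s}^1(\tfrac12-\ell)\le c$ is tight on an initial interval of $s$, while on the remaining range the prefix-determinant constraint (a quadratic in $s$) is tight. I would solve the resulting variational problem to get an explicit target profile $\ell^\star$ with $c(\ell^\star)=c_{\mathrm{flat}}$, and check that the value equals $(2-\sqrt2)/4$.

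\textbf{Step 2: discretize the profile.} Fix $\varepsilon$ and pick a fixed number $K$ of breakpoints $0<t_1<\dots<t_K<1$. Set the target log-determinants $\log_n\det\mathcal M_j \approx n\int_0^{t_j}\ell^\star$. Inside each block $\mathcal M_j/\mathcal M_{j-1}$ we use HKZ or reversed dual-HKZ bases, exactly as in Theorems~\ref{thm:two-way} and~\ref{thm:eight-blocks}. Those earlier results give the limiting Gram--Schmidt shape inside a block whose rank and determinant are prescribed; this shape is a rescaled HKZ profile with slope $\approx\frac{1}{2}\log$ of the local dimension. Two things must then be shown. First, the piecewise profile obtained this way is within $\varepsilon$ of $\ell^\star$ in the functional $c(\cdot)$ once $K$ is large. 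Second, because the blocks are concatenated, the block-internal profiles can only add $o(1)$ plus a term controlled by the mesh. I would slightly inflate the determinant targets, at cost $\varepsilon/2$, to create slack for Step~3.

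\textbf{Step 3: existence via the assumption.} By Kim's mean-value formula for flags, the expected number of chains $\mathcal M_1\subset\dots\subset\mathcal M_K$ with $\det\mathcal M_j\le n^{d_j n}$ is, up to $n^{o(n^2)}$, $n^{n^2 E}$. Here $E$ is a sum over $j$ of $d_j\cdot(\text{rank gaps})$ minus Rankin-type terms of the form $\tfrac12 t_j(1-t_j)$. The same computation applies to every subchain. The lower-bound argument shows that the exponent is $\ge 0$ exactly along the admissible region. With the $\varepsilon$-slack we get $E\ge\delta>0$ for the full chain and for every nonempty subchain, because each subchain's exponent is a sum of terms that are each strictly positive after inflation. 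Assumption~\ref{heur:assumption} then yields the chain with probability $1-o(1)$. Intersecting with the high-probability events that every block quotient (itself a roughly random lattice, conditionally) has the typical HKZ profile gives the basis $\mathbf B$.

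\textbf{Main obstacle.} I expect the hardest part to be the uniform positivity of the first-moment exponent over all subchains together with the block-wise profile control. Omitting an intermediate sublattice changes the Rankin terms non-additively, so positivity of the full-chain exponent does not by itself give positivity for subchains. I would first try to write the exponent as $\sum_j (\text{gap}_j)(\text{excess}_j)$ with all excesses positive after inflation. If that fails, I would choose $\ell^\star$ to lie strictly inside the constraint region, giving up a further $\varepsilon$.
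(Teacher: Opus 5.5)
\textbf{There are genuine gaps.} Your overall architecture matches the paper's (prescribe a chain, invoke Assumption~\ref{heur:assumption}, fill blocks with HKZ/dual-HKZ), but the reduction, the target profile and the key positivity step are wrong or missing.

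\textbf{The tree-size functional.} Your reduction in the first step is off by a factor $n^{\Theta(n)}$. The product $\prod_{i>n-k}\sqrt n/\norm{\mathbf b_i^*}$ counts integer coefficients in a box. It overestimates the number of projected points in a ball of radius $\asymp\sqrt n$ by a factor $k^{k/2}e^{O(k)}$. The true width is $V_kR^k/\prod_{i>n-k}\norm{\mathbf b_i^*}=e^{P_{n-k}(\mathbf B)+O(n)}$ (Proposition~\ref{prop:tree-prefix}). With your functional, $s=1$ gives $\int_0^1(\tfrac12-\ell)=\tfrac12$ for every profile, so no profile can reach $c_{\mathrm{flat}}$. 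The correct objective is $\max_x p(x)$ with $p(x)=\int_0^x\ell$.

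\textbf{The target profile.} Step~1 then rests on the wrong constraint. The only pointwise quadratic prefix constraint available is the single-rank first-moment bound $p(x)\geq x(1-x)/2$, which is what your Rankin-type terms encode. A profile resting on it is the GSA profile. That profile has area $1/12$ and is excluded by the complete-chain count (Proposition~\ref{prop:area-rule}); optimizing against it only reproduces $1/8$. What you need is an explicit target: the plateau of height $c$ on $[a,1-a]$, with $a=2c=1-1/\sqrt2$. Its flanks $x(1-\ln(x/a))/2$ are not a first-moment constraint at all. They are exactly the deterministic bound obtained by reducing a single outer block of width $a$ with reversed dual-HKZ (respectively HKZ). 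The value of $c$ is pinned down by $c-c^2=1/8$.

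\textbf{Positivity over subchains.} The step you flag as the main obstacle is the actual content of the proof. It is settled by a direct computation, not by appeal to the lower-bound argument, which only concerns complete chains. Your formula for $E$ is also inaccurate. For a subchain $t_1<\dots<t_s$, the height-independent term is $\frac14\bigl(1-t_1^2-(1-t_s)^2-\sum_\nu(t_{\nu+1}-t_\nu)^2\bigr)$, not a sum of single-rank terms $\frac12t_j(1-t_j)$. Place the internal boundaries at $a,a+w,\dots,1-a$, all carrying the same bound $n^{cn}$. Then the identity $a-a^2/2=1/4$ turns the exponent into $\frac14\bigl[(t_1-a)^2+(1-a-t_s)^2+\sum_\nu(t_{\nu+1}-t_\nu)^2\bigr]\geq(1-2a)^2/(4(K-2))$ for every nonempty subchain. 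This holds because all these distances are multiples of $w$ summing to $(K-2)w$. Your inflation trick would also work once nonnegativity is known, since raising all heights by $\eta$ adds $\eta(1+t_s-t_1)\geq\eta$ to every subchain exponent.

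\textbf{Block completion.} Do not intersect with ``typical HKZ profile'' events for the blocks. Assumption~\ref{heur:assumption} only supplies some chain with the determinant bounds, selected after seeing the lattice among superexponentially many candidates. Nothing makes its blocks Haar-typical. In the unconditional constructions this required the marked second-moment argument of Lemma~\ref{lem:marked-dense}, which has no analogue here. Only the deterministic Lemma~\ref{lem:hkz-bounds} is needed. Within each block, the prefix profile lies below the interpolation of its boundary heights plus its width times $p_{\mathrm H}$ or $p_{\mathrm D}$. This reproduces the flanks exactly and adds at most $w/(2e)$ in the middle blocks. The result is $c+w/(2e)<c+\varepsilon$ for a fixed large $K$.
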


\paragraph{A sharp(er) tree size lower bound.}

To understand whether our improved upper bounds are (close to) optimal, we derived an improved asymptotic lower bound on $c$ which with high probability over the randomness of the lattice holds simultaneously for all bases of this lattice. The previous best (heuristic) lower bound described by Nguyen~\cite{nguyenhermite} was of the form $c \geq 1/8 = 0.125$; we show this bound can be increased to $c \geq (2 - \sqrt{2})/4 \approx 0.146$. This leaves a gap of slightly less than $0.001$ in the constant $c$ between our best proven upper and lower bounds, and this matches our best heuristic upper bound.

\begin{theorem}[Lower bound on the tree size]
\label{thm:any-basis}%
With probability $1 - o(1)$ over $\mathcal{L} \sim \mu_n$, every basis $\mathbf{B}$ of $\mathcal{L}$ satisfies:
\begin{align}\label{eq:any-basis}
T(\mathbf{B}) \geq n^{c_{\mathrm{flat}} \cdot n + o(n)}, \qquad c_{\mathrm{flat}} = \frac{2-\sqrt2}{4} = 0.14644\dots.
\end{align}
\end{theorem}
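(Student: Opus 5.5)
The plan is to reduce the tree-size lower bound to a statement about determinants of the prefix sublattices of $\mathbf{B}$, and then rule out small-determinant chains with a first-moment (Markov) argument based on Kim's mean-value formula.

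\textbf{Step 1: from tree sizes to prefix determinants.} Let $\mathcal{L}_j$ be the primitive sublattice spanned by the first $j$ vectors of $\mathbf{B}$, and $\pi_k$ the projection orthogonal to $\mathcal{L}_{n-k}$. The nodes at depth $k$ of the tree are the points of $\pi_k(\mathcal{L})$ in the ball of radius $R=\lambda_1(\mathcal{L})$. The rank-$k$ lattice $\pi_k(\mathcal{L})$ has determinant $\det\mathcal{L}/\det\mathcal{L}_{n-k}=1/\det\mathcal{L}_{n-k}$.

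By Blichfeldt's principle, a rank-$k$ lattice $\Lambda$ satisfies $\#(\Lambda\cap B(R))\ge \operatorname{vol}B_k(R/2)/\det\Lambda$. Indeed, $B(R/2)$ contains more than $N$ points that are pairwise congruent modulo $\Lambda$ whenever its volume exceeds $N\det\Lambda$, and their differences with a fixed one of them give distinct points of $\Lambda\cap B(R)$.

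With $R=\Theta(\sqrt n)$ we have $\operatorname{vol}B_k(R/2)=(n/k)^{k/2}e^{O(n)}\ge e^{O(n)}$ for every $k$. Hence
\begin{align}
T(\mathbf{B})\ \ge\ e^{O(n)}\cdot\max_{1\le j\le n-1}\det\mathcal{L}_j .
\end{align}
It therefore suffices to show the following: with probability $1-o(1)$, no flag of primitive sublattices of $\mathcal{L}$ has all $\det\mathcal{L}_j\le n^{(c_{\mathrm{flat}}-\varepsilon)n}$. Letting $\varepsilon\to0$ slowly then gives the $o(n)$ in the exponent.

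\textbf{Step 2: discretize and take a first moment.} Fix $\varepsilon>0$ and a large integer $r$. Choose ranks $j_i=\lfloor\alpha_i n\rfloor$ for a fixed grid $0<\alpha_1<\dots<\alpha_r<1$. Any bad flag contains a nested chain $\mathcal{M}_1\subset\dots\subset\mathcal{M}_r$ of primitive sublattices with $\operatorname{rank}\mathcal{M}_i=j_i$ and $\det\mathcal{M}_i\le X:=n^{(c_{\mathrm{flat}}-\varepsilon)n}$.

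Kim's mean-value formula expresses $\E[\#\text{such chains}]$ as an explicit integral. This is a product of the volume of the admissible determinant region and normalizing constants built from ratios of $\zeta$-values and volumes of $\mathrm{SL}_{j}(\R)/\mathrm{SL}_j(\Z)$. I will evaluate its logarithm to leading order $n^2\log n$. The expected shape, which generalizes the rank-one-step count $\asymp X^n n^{-j(n-j)/2}$ from Rankin/Thunder, is
\begin{align}
\log_n \E \;=\; \sum_{i}\Bigl(t_i\,(j_{i+1}-j_{i-1})\,n \;-\; \tfrac12 (j_{i}-j_{i-1})(n-j_{i})\Bigr)+o(n^2),
\end{align}
with $j_0=0$, $j_{r+1}=n$, and $\det\mathcal{M}_i=n^{t_i n}$. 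The precise form of the rank terms is what I must extract carefully from Kim's constants.

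By Markov's inequality, it suffices to show that for $t_i\le c_{\mathrm{flat}}-\varepsilon$, and for a suitable choice of grid depending on $\varepsilon$, this exponent is at most $-\delta n^2$. A union bound over the $O(1)$ grid choices then finishes the argument.

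\textbf{Step 3: the variational problem (main obstacle).} Passing to the continuum $r\to\infty$ turns the exponent into a functional of a profile $t(\alpha)$ on $[0,1]$. The claim becomes: if $\sup_\alpha t(\alpha)<\frac{2-\sqrt2}{4}$, then some finite subchain already has a negative exponent.

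The delicate point is that one may not simply use all ranks at once. The density terms of intermediate sublattices could make the full-chain expectation positive even when a well-chosen subchain is impossible. So I will optimize over the choice of subchain, for example two or three ranks placed symmetrically around where the profile constraint binds. I expect the optimal witness to be a short chain whose ranks solve a quadratic, which is where $\sqrt2$ should enter.

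I would first try the two-rank chain $j_1=\alpha n$, $j_2=(1-\alpha)n$ with common bound $t$. I would solve $\log_n\E=0$ for $t(\alpha)$ and maximize over $\alpha$, checking whether this yields $\frac{2-\sqrt2}{4}$. Failing that, I would solve the LP/variational problem numerically to identify the extremal chain, then verify it analytically.

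Controlling the lower-order terms of Kim's constants uniformly, so that they are truly $n^{o(n^2)}$, is routine but must be checked.
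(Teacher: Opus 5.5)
Your Steps~1 and~2 are sound. Step~1 is the Blichfeldt-type estimate of Proposition~\ref{prop:tree-prefix}, and your guessed exponent agrees with Kim's chain formula. You are also right that the full chain under a constant ceiling only gives exponent $2c-\frac14$, hence only $c\geq1/8$. The gap is Step~3, which you identify as the main obstacle but leave unresolved, and the witness you expect there does not exist. For ranks $\lfloor t_\nu n\rfloor$ with $0<t_1<\dots<t_s<1$ and common ceiling $n^{cn}$, your formula reads
\begin{align*}
\frac{\ln\E[C]}{n^2\ln n}=c\,(1+t_s-t_1)-\frac14\Bigl(1-t_1^2-(1-t_s)^2-\sum_{\nu=1}^{s-1}(t_{\nu+1}-t_\nu)^2\Bigr)+o(1).
\end{align*}
The best two-rank chain is symmetric, $t_1=1-t_2=(1+2c)/3$. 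Its value is $(8c-4c^2-1)/6$, which is negative only for $c<1-\sqrt3/2\approx0.134$. More generally, at $c=c_{\mathrm{flat}}$ the main term equals $\frac14\bigl((t_1-2c)^2+(1-t_s-2c)^2+\sum_\nu(t_{\nu+1}-t_\nu)^2\bigr)$. These three terms cannot vanish together, since $1-4c_{\mathrm{flat}}=\sqrt2-1>0$. For $s\geq2$ the last sum is at least $(t_s-t_1)^2/(s-1)$. So for every fixed $s$ the minimum is strictly positive, and the threshold you get lies strictly below $c_{\mathrm{flat}}$. No chain with a bounded number of ranks proves the theorem, and $\sqrt2$ does not come from a quadratic in a few ranks.

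The missing idea is that the witness must be a long chain. Put $t_1=2c$ and $t_s=1-2c$, with $s$ equally spaced ranks in between. The exponent is then $2c-2c^2-\frac14+(1-4c)^2/(4(s-1))$. The main part $2c-2c^2-\frac14$ is negative exactly when $c<c_{\mathrm{flat}}$; the $\sqrt2$ comes from optimizing the two outer gaps to $2c$. For $c=c_{\mathrm{flat}}-\varepsilon$ you take $s$ of order $1/\varepsilon$, and Markov's inequality finishes the argument.

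With this repair your route is a valid alternative to the paper's. The paper also keeps only the prefixes between ranks $2cn$ and $(1-2c)n$. Instead of discarding the outer ranks, it re-reduces the outer blocks by reversed dual-HKZ and HKZ and bounds them deterministically via Lemma~\ref{lem:hkz-bounds}. It then applies the complete-chain area rule (Proposition~\ref{prop:area-rule}) to the ceiling $g_c$, whose area is $c-c^2$. Both give the same threshold because the HKZ flank ceilings contribute zero net first-moment exponent. Your version avoids the HKZ prefix bound and the change of basis, while the paper's gives the area interpretation. The sum-of-squares identity you need is exactly the one in Lemma~\ref{lem:heuristic-moments}.
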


If the heuristic assumption holds, then this lower bound is asymptotically tight. Without this assumption, our upper and lower bounds leave a gap of less than $0.001$ on the optimal $c$. 

\paragraph{Collision search beyond HKZ.}

In recent work, Feng--Gao--Liu~\cite{fgl2026} showed how to replace exhaustive enumeration over the entire enumeration tree by a polynomial-space collision search, to find a pair of vectors whose difference is a shortest vector. The relevant geometric factor in their analysis is $\sqrt{\mathcal P(\mathbf B)}$, where $\mathcal P(\mathbf B) = \prod_i \max\{1, \|\mathbf b_i^*\|^{-1}\}$ for determinant-one lattices. Using quasi-HKZ bases, they then achieve an SVP cost of:
\begin{align}
\sqrt{\mathcal P(\mathbf B)} \leq n^{\tau_{\mathrm{HKZ}} \cdot n + o(n)}, \qquad \tau_{\mathrm{HKZ}} = \frac{1}{4e} = 0.09196\dots.
\end{align}
We show that our lower bound carries over to their framework, with an additional factor $1/2$. This bounds the geometric factor in the FGL analysis; a lower bound on the algorithm's actual running time would require a separate argument. 

\begin{theorem}[Lower bound on the Feng--Gao--Liu cost factor]
\label{thm:fgl-lower-note}
With probability $1 - o(1)$ over $\mathcal L \sim \mu_n$, simultaneously for every basis $\mathbf B$ of $\mathcal L$,
\begin{equation}
\sqrt{\mathcal P(\mathbf B)} \geq n^{\tau_{\mathrm{low}} \cdot n + o(n)}, \qquad \tau_{\mathrm{low}} = \frac{2 - \sqrt2}{8} = 0.07322\dots. \label{eq:fgl-lower-note}
\end{equation}
\end{theorem}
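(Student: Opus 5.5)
The plan is to reduce Theorem~\ref{thm:fgl-lower-note} to the same prefix-determinant constraint that underlies Theorem~\ref{thm:any-basis}. Write $r_i = \norm{\mathbf b_i^*}$ with $\prod_i r_i = 1$, and let $x_i = \log_n r_i$. Then
\begin{align}
\log_n \mathcal P(\mathbf B) = \sum_{i=1}^n \max\{0, -x_i\}.
\end{align}
For the unpruned tree at radius $R \approx \sqrt{n/(2\pi e)}$, the Gaussian-heuristic level counts give, up to $e^{O(n)}$ factors, $\log_n T(\mathbf B) = \max_k \bigl(\sum_{i>n-k} (\tfrac12 - x_i)\bigr)_+ + o(n)$. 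For large $n$, the entries with $x_i < 1/2$ contribute positively. So the two quantities are profile functionals of the same type, and I will run the lower-bound argument of Theorem~\ref{thm:any-basis} with the threshold $1/2$ replaced by $0$. Heuristically, shifting $x \mapsto x - 1/2$ and rescaling accounts for the factor $1/2$ in $\tau_{\mathrm{low}} = c_{\mathrm{flat}}/2$, once the extra square root is included.

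The key probabilistic input is the one already used for Theorem~\ref{thm:any-basis}. Kim's mean-value formula for nested chains of primitive sublattices, together with Markov's inequality and a union bound over a fixed finite grid of rank fractions, gives the following. With probability $1 - o(1)$, every primitive sublattice chain in $\mathcal L$ satisfies lower bounds on its determinants. In profile language, for every basis and every $k = \alpha n$, the prefix sums satisfy
\begin{align}
\sum_{i \le k} x_i \ge n\, g(\alpha) - o(n),
\end{align}
where $g$ is the limiting Rankin-type profile extracted in that proof (roughly $g(\alpha) = \tfrac12\alpha(1-\alpha)$ to first order). The joint chain version additionally rules out profiles that are simultaneously low at several cut points. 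Since this event does not depend on the basis, the resulting bound holds simultaneously for all bases, exactly as in Theorem~\ref{thm:any-basis}.

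The remaining step is a deterministic variational problem: minimize $\int_0^1 \max\{0, -y(t)\}\,dt$ over limiting profiles $y$ (after the $\log_n$ and $n$-rescaling) subject to the chain constraints and $\int y = 0$. I expect the extremal profile to be the same flat-then-linear profile that realized $c_{\mathrm{flat}}$. Because the constraints are linear in prefix sums and the objective is convex and piecewise linear, the optimization should be solvable via the same Lagrangian and convexity argument. The main computation is to verify that the optimum equals $(2-\sqrt2)/4$ for $\log_n \mathcal P$. Halving for the square root then gives $\tau_{\mathrm{low}}$.

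The main obstacle is this variational step. The $\mathcal P$ objective counts all negative $x_i$ at once, whereas $T$ takes a maximum over suffixes. So I will check that the profile which optimizes the $T$ bound also optimizes here, rather than some profile trading a deeper dip for a shorter one. My first attempt is to show that $\sum_{i>n-k} \max\{0,-x_i\} \ge \sum_{i>n-k}(-x_i)$ and apply the suffix constraint at the optimal $k$. I will then check the rescaling identity directly against the extremal profile to confirm the constant. Discretization errors are handled by the $o(n)$ terms.
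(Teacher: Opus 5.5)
Your final paragraph already contains the entire deterministic part of the paper's proof. Since $\sum_i x_i=0$, for every $k$ you get $\log_n\mathcal P(\mathbf B)\ge\sum_{i>n-k}(-x_i)=\sum_{i\le n-k}x_i$, that is, $\ln\mathcal P(\mathbf B)\ge\max_kP_k(\mathbf B)$. The paper phrases this as $\frac14\operatorname{TV}(p_n)\ge\max_kP_k/(2n\ln n)$.

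The gap is in the probabilistic input you combine it with. The constraint you actually state is $\sum_{i\le k}x_i\ge n\,\alpha(1-\alpha)/2-o(n)$ at each fixed rank, applied ``at the optimal $k$''. This only yields $\max_kP_k\ge(1/8-o(1))n\ln n$, and hence only $\tau\ge 1/16$. The constant $(2-\sqrt2)/4$ is not a constraint at any single rank. It is the uniform statement \eqref{eq:universal-prefix-lower}: with probability $1-o(1)$, $\inf_{\mathbf B}\max_kP_k(\mathbf B)\ge((2-\sqrt2)/4-o(1))\,n\ln n$, where the maximizing rank depends on the basis. The paper obtains this from the argument for Theorem~\ref{thm:any-basis}. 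A basis with $\max_kP_k\le cn\ln n$ is modified by reducing its outer blocks with reversed dual-HKZ and HKZ. This keeps the middle prefixes and forces the whole profile below the single fixed ceiling $g_c+\varepsilon$. That ceiling has area $c-c^2<1/8$, so the area rule excludes it. Equivalently, you can combine Theorem~\ref{thm:any-basis} with the uniform comparison of Proposition~\ref{prop:tree-prefix}. Citing this bound and halving for the square root finishes the proof at once.

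The variational program is therefore unnecessary, and as you set it up it would not give the right constant. The flank curvature behind $c-c^2$ is available only after the change of basis; it is not a constraint on the original profile. The constraints you list (pointwise Rankin bounds plus an area condition on the profile itself) are satisfied by a near-rectangular profile of height just above $1/8$. Minimizing over original profiles would therefore return $1/8$.

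Separately, your tree-size heuristic is wrong. At radius $R\asymp\sqrt n$ the factor $V_kR^k=(2\pi eR^2/k)^{k/2}e^{o(k)}$ is $e^{O(n)}$. So the level counts are $e^{P_k+O(n)}$, with no shift by $\tfrac12$ per coordinate (Proposition~\ref{prop:tree-prefix}). The factor $\tfrac12$ in $\tau_{\mathrm{low}}=c_{\mathrm{flat}}/2$ comes solely from the square root.
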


To find the optimal basis minimizing the dominant cost, it is important to note that its asymptotic scale is controlled by the total variation of the normalized prefix profile, rather than its maximum height. Thus minimizing the largest enumeration level in the tree and minimizing the costs of this collision-search approach are different optimization problems. With the same binary split procedure, we do immediately obtain an improved cost estimate.

\begin{theorem}[FGL costs of a two-block split]
\label{thm:fgl-two-way}
With probability $1 - o(1)$, $\mathcal{L} \sim \mu_n$ contains a basis $\mathbf{B}_2$ with FGL costs:
\begin{align}
\sqrt{\mathcal P(\mathbf B_2)} \leq n^{\tau_2 \cdot n + o(n)}, \qquad \tau_2 = \frac{1}{4 \sqrt{e}} - \frac{1}{16} = 0.08913\dots. \label{eq:fgl-two-way}
\end{align}
\end{theorem}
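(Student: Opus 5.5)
We take the basis $\mathbf B_2$ constructed in Theorem~\ref{thm:two-way} and reuse its limiting Gram--Schmidt profile. No new existence argument is needed: with probability $1-o(1)$ the primitive sublattice $\mathcal M$ of rank $m=\lfloor (n-1)/2\rfloor$ with $\det\mathcal M\le e^{O(n)}n^{n/8}$ exists. We also take as given the profile of $\mathbf B_2$ established there. That profile consists of a reversed dual-HKZ basis of $\mathcal M$, whose Gram--Schmidt norms are increasing, followed by lifts of an HKZ basis of the quotient $\mathcal L/\mathcal M$, of determinant $n^{-n/8}e^{O(n)}$, whose norms are decreasing. It is known uniformly up to $n^{o(1)}$ factors per coordinate, hence up to $n^{o(n)}$ in products. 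So the task reduces to evaluating $\log\mathcal P(\mathbf B_2)=\sum_i \max\{0,-\log\|\mathbf b_i^*\|\}$ on this profile.

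\textbf{Step 1: rewrite $\log\mathcal P$ via the prefix profile.} Set $S(k)=\sum_{i\le k}\log\|\mathbf b_i^*\|$, so that $S(0)=S(n)=0$ since $\det\mathcal L=1$. Then $\log\mathcal P(\mathbf B)$ is the total decrease of $S$, which equals half of its total variation. This is the observation noted before the statement.

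\textbf{Step 2: locate the descents of $S$.} On the first block the norms increase, so $S$ first decreases to a minimum at the index $k_1$ where $\|\mathbf b_i^*\|$ crosses $1$, and then increases. At $k=m$ we have $S(m)=\log\det\mathcal M\approx\tfrac18 n\ln n$. On the second block the norms decrease, so $S$ increases up to the crossing index $k_2$ and then decreases back to $S(n)=0$. Monotonicity inside each block is exactly what quasi-HKZ and its dual provide, up to $n^{o(1)}$ fluctuations, and these cost only $n^{o(n)}$ in total. Hence
\[
\log\mathcal P(\mathbf B_2)=-S(k_1)+S(k_2)+o(n\log n).
\]

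\textbf{Step 3: evaluate the two extremes from the explicit block profiles.} Write $\log\|\mathbf b_i^*\|$ as $(\ln n)\,f(i/n)+o(\ln n)$ with the limiting HKZ-type profile $f$ from Theorem~\ref{thm:two-way}, rescaled to each block's rank and determinant. Then $-S(k_1)$ and $S(k_2)$ become integrals of $f^-$ and $f^+$ over the respective blocks. These are the same integrals that produced $c_2$, where the tree size was governed by the deepest suffix sum of $\tfrac12\log n-\log\|\mathbf b_i^*\|$.

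I expect the computation to show $\tfrac12(-S(k_1)+S(k_2))=(c_2-\tfrac1{16})\,n\ln n$. The shift by $-\tfrac1{16}$ should be half of the $\tfrac18$ determinant offset $\log\det\mathcal M$ that separates the two blocks, and the $\tfrac12$ comes from the square root. I would carry this out by reusing the symmetry of the HKZ profile: dual-HKZ reversed is the mirror image of HKZ. Both descents are then the same $\tfrac1{2\sqrt e}$-type integral, corrected by the determinant shifts $\pm\tfrac18$.

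\textbf{Main obstacle.} The hard step is Step 3: identifying exactly where the profile crosses $\|\mathbf b_i^*\|=1$ in each block, and checking that the resulting integrals simplify to $\tfrac1{4\sqrt e}-\tfrac1{16}$ rather than to something that only upper-bounds it. I would first compute the HKZ integral for a general block of rank $d$ and determinant $n^{\alpha n}$ in closed form, and then plug in $d=n/2$ with $\alpha=\pm\tfrac18$.

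The error control is routine. In the tail of an HKZ basis, the last $o(n)$ coordinates have norms only within $e^{O(1)}$ or $n^{o(1)}$ of the prediction. Their total contribution is $o(n\log n)$, and so is absorbed into $n^{o(n)}$.
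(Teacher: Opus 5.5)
\textbf{There is a genuine gap: your Steps 2--3 rely on inputs that are not available, and the Step~2 picture of the profile is wrong.}

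\textbf{The shape of the profile.} Reversing and dualizing sends $y_i\mapsto -y_{m+1-i}$, which preserves decreasing order. So the reversed dual-HKZ block has \emph{decreasing} lengths, just like the HKZ block, and the split is an upward jump from $n^{-1/4}$ to $n^{1/4}$. Hence $S$ rises to $\frac{1}{4\sqrt e}n\ln n$, falls to $\frac18 n\ln n$ at $k=m$, rises again to $\frac{1}{4\sqrt e}n\ln n$, and falls to $0$. There is no negative minimum in the first block: on the good event, $P_k\geq-O(n)$ for every basis. Your formula $-S(k_1)+S(k_2)$ therefore omits the descent of $(\frac{1}{4\sqrt e}-\frac18)n\ln n$ inside the first block. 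The correct total decrease is $(\frac{1}{2\sqrt e}-\frac18)n\ln n$.

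\textbf{The missing control on oscillations.} The theorem is an \emph{upper} bound on $\sum_i[-y_i]_+$, which is a total variation. Total variation is only lower semicontinuous under the uniform prefix convergence $p_n\to p_2$ proved for $\mathbf B_2$, so that convergence yields only the reverse inequality. The upper bound requires excluding oscillating Gram--Schmidt lengths, and neither of your substitutes is available. The definition of HKZ only forces $a_{i+1}\geq\frac{\sqrt3}{2}a_i$ and allows upward jumps of any size, so nothing gives monotonicity ``up to $n^{o(1)}$''. Proposition~\ref{prop:profile} gives no per-coordinate estimates at all. It gives only averaged absolute error on $x_i\leq1-\varepsilon$, and nothing coordinatewise on the last $\varepsilon$-fraction of each HKZ block, where the limiting profile has its logarithmic singularity. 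That tail is a fixed fraction of the coordinates, not $o(n)$ of them, so your final paragraph does not dispose of it.

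\textbf{The missing ingredient.} You need a deterministic bound on negative parts over \emph{arbitrary} index sets; this makes monotonicity and coordinatewise asymptotics unnecessary. The paper uses the HKZ product bound \cite[Lemma~1]{fgl2026}. For a normalized HKZ block of rank $d=n/2+O(1)$ with $\rho_1\asymp\sqrt d$ (one of the properties imposed when applying Lemma~\ref{lem:marked-dense}), it gives $-\sum_{i\in I}\ln\rho_i\leq\frac k2\ln(d/k)\ln n+O(n)$ for all $I$ with $|I|=k$.
\begin{itemize}
\item \emph{Quotient block.} $\sum_i[\frac14\ln n-\ln\rho_i]_+$ equals the maximum over $I$ of $\sum_{i\in I}(\frac14\ln n-\ln\rho_i)$. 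It is therefore at most $\ln n\max_k\{\frac k4+\frac k2\ln(d/k)\}+O(n)=(\frac{1}{2\sqrt e}+o(1))d\ln n$. This bounds the quotient block's contribution, because its lengths are scaled by $n^{-1/4}e^{O(1)}$.
\item \emph{Reversed dual block.} Here $[-y_i]_+=[\ln\rho^*_j-\frac14\ln n]_++O(1)$ with $j=m+1-i$. Since $\sum_j\ln\rho^*_j=0$, this sum equals the same quantity as above minus $\frac d4\ln n$.
\end{itemize}
Adding the two contributions and dividing by $2n\ln n$ yields $\tau_2$, with no information about where individual lengths cross $1$.
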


Note that this does not immediately give a faster polynomial-space SVP algorithm than Feng--Gao--Liu; that would require an efficient algorithm for finding the half-rank dense sublattice that is needed to get the better basis profile.

For the FGL cost function, we observed numerically that repeated splits beyond one single symmetric split do not decrease the constant $\tau$. Extending our search to splits satisfying Assumption~\ref{heur:assumption}, we did find any better profiles than the two-block profile from Theorem~\ref{thm:fgl-two-way}. We conjecture that the two-block split may actually be optimal for minimizing the dominant collision search costs, and leave a proof (or a counterexample with a better basis profile) for future work.

\paragraph{Faster polynomial-space CVPP.}

Since finding dense sublattices appears to be a very hard problem, it is unclear whether the above results can be used for faster enumeration algorithms for SVP. Dadush--Micciancio~\cite[Theorem~3.1]{dadushmicciancio} gave an exact algorithm for the densest rank-$k$ sublattice problem, which for $k = \Theta(n)$ runs in time $n^{O(n^2)}$ and space $2^{O(n)}$. One would first have to reduce the costs below $n^{n/(4e) + o(n)}$ time and $n^{O(1)}$ space to make this approach suitable for solving SVP faster than Feng--Gao--Liu. 

As a positive result, we can show that with the above results we obtain a faster polynomial-space algorithm for the closest vector problem \textit{with preprocessing} (CVPP): if we do not count the preprocessing costs, then we can use (an extension of) the algorithm of Dadush--Micciancio to find all the dense sublattices we need. For sufficiently large batches of closest vector problem instances on the same lattice, this gives a lower amortized time complexity per CVP target.

\begin{theorem}[Polynomial-space CVPP by collision search]
\label{thm:cvpp-collision}
With probability $1 - o(1)$ over $\mathcal L \sim \mu_n$, there is a basis $\mathbf B_2$ of $\mathcal{L}$ such that, for every target $\boldsymbol t \in \mathbb R^n$, a randomized algorithm finds an exact closest lattice vector with probability at least $2/3$, using polynomial space and query time:
\begin{equation}
\mathrm Q \leq n^{\tau_2 n + o(n)} \approx n^{0.08913n + o(n)}. \label{eq:cvpp-query-collision}
\end{equation}
\end{theorem}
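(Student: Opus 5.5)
The plan is to combine three ingredients: the basis $\mathbf B_2$ from Theorem~\ref{thm:fgl-two-way}, the Feng--Gao--Liu collision-search framework, and a preprocessing step that computes $\mathbf B_2$.

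\textbf{Preprocessing.} With probability $1-o(1)$, Theorem~\ref{thm:two-way} provides a primitive sublattice $\mathcal M$ of rank $m=\lfloor (n-1)/2\rfloor$ with $\det\mathcal M\le e^{O(n)}n^{n/8}$. Since preprocessing is free, I would find such an $\mathcal M$ exactly with the densest-sublattice algorithm of Dadush--Micciancio, or just by exhaustive search over a bounded region. I would then compute a reversed dual-HKZ basis of $\mathcal M$ and an HKZ basis of $\pi_{\mathcal M^\perp}(\mathcal L)$, and lift the latter back to $\mathcal L$. By Theorem~\ref{thm:fgl-two-way}, the resulting basis satisfies $\sqrt{\mathcal P(\mathbf B_2)}\le n^{\tau_2 n+o(n)}$. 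The preprocessing may also store polynomially many auxiliary quantities, namely the Gram--Schmidt data of $\mathbf B_2$.

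\textbf{Query.} The query uses the FGL collision search, transported from SVP to CVP. Given $\boldsymbol t$, first apply Babai's nearest-plane algorithm with $\mathbf B_2$ to reduce $\boldsymbol t$ into the fundamental parallelepiped of $\mathbf B_2^*$. Then run the FGL randomized walk on the shifted enumeration tree at radius $r=\mathrm{dist}(\boldsymbol t,\mathcal L)$. Since $r$ is unknown, guess it over a polynomial grid, or use the fact that for random lattices the covering radius is $(1+o(1))\sqrt{n/(2\pi e)}\cdot O(1)$, which only affects the $o(n)$ exponent term. The collision version looks for two lattice points $\boldsymbol v,\boldsymbol w$ whose difference (or $\boldsymbol t$-shifted difference) yields the closest vector. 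Its analysis bounds the expected number of steps by $\sqrt{\mathcal P(\mathbf B)}\cdot\mathrm{poly}(n)\cdot(\text{radius factor})$.

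\textbf{Main obstacle.} I expect the hardest step to be checking that the FGL analysis really holds for arbitrary targets and not only for SVP. The concern is that the per-level counts $\prod_i\max\{1,r/\|\mathbf b_i^*\|\}$ for a shifted coset $\boldsymbol t+\mathcal L$ are controlled by the same product $\mathcal P(\mathbf B)$ up to $2^{O(n)}$ factors at radius $r=O(\sqrt n)$. My approach would be to bound the number of points of any coset in each projected ball by $\prod\max\{1,O(r)/\|\mathbf b_i^*\|\}$ using a standard packing/volume argument on the lifted cylinders. The factor $r^{\#\{i:\|\mathbf b_i^*\|<r\}}$ can be absorbed: for a determinant-one normalized profile, the loss relative to $\mathcal P$ is at most $n^{o(n)}$ once the GS norms are rescaled to match the Gaussian-heuristic radius, exactly as in the SVP case.

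\textbf{Success probability.} Repeat the randomized procedure a constant number of times to boost the success probability to $2/3$. The space stays polynomial because only $\mathbf B_2$, its Gram--Schmidt data, and $O(1)$ walk states are stored.
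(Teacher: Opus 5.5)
\textbf{There is a genuine gap.} Your outer structure is the right one: the two-block basis from preprocessing, an FGL-type search at radius $\Theta(\sqrt n)$, and constant-factor amplification. But the query step, which is the actual content of this theorem, is never supplied.

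FGL is not a walk on the enumeration tree. It draws lattice points from a product distribution over integer coordinates, hashes them into a random grid of scale $O(R\sqrt{\ln n})$, and finds a colliding pair with a polynomial-space collision finder. Its cost $\sqrt{\mathcal P}$ comes from two bounds: a maximum point mass $2^{O(n)}/\mathcal P$, and an overlap $\sum_{\boldsymbol x}\nu(\boldsymbol x)\nu(\boldsymbol x+\boldsymbol s)\geq 2^{-O(n)}/\mathcal P$ with the translate by the sought difference $\boldsymbol s$.

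For CVP the sought difference $\boldsymbol v^*=\boldsymbol t-\boldsymbol z^*$ is not in $\mathcal L$, so no two samples of a distribution on $\mathcal L$ can differ by it. The missing idea has the following parts.
\begin{enumerate}
\item Sample one array in $\mathcal L$ and a second in the coset $\boldsymbol t+\mathcal L$. Define both samplers through the rounded Gram--Schmidt coordinates $\kappa_i(\boldsymbol x)=\lfloor\langle\boldsymbol x,\boldsymbol b_i^*\rangle/a_i^2\rceil$, which give a bijection of each coset with $\mathbb Z^n$.
\item Check that translation by $\boldsymbol v^*$ changes the $\alpha$-weighted coordinates by $O(n)$ in squared norm. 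This uses $\norm{\boldsymbol v^*}\leq R$ and rounding error at most one per coordinate. Only then does FGL's central-box argument give the cross-overlap $\sum_{\boldsymbol x\in\mathcal L}\nu_0(\boldsymbol x)\nu_{\boldsymbol t}(\boldsymbol x+\boldsymbol v^*)\geq 2^{-O(n)}/\mathcal P_R(\mathbf B)$.
\item Restrict the collision lemma to cross-array pairs.
\item Bound the number of points of the union of the two cosets in a grid cell uniformly in $\boldsymbol t$. This matters because the two cosets can be arbitrarily close to each other.
\item Output $\boldsymbol t-(Y-X)$ for the best reported pair $X\in\mathcal L$, $Y\in\boldsymbol t+\mathcal L$.
\end{enumerate}
Your phrase about a ``$\boldsymbol t$-shifted difference'' points in this direction, but none of these steps is carried out.

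The step you single out as the main obstacle targets the wrong quantity, and the estimate you claim for it is false. Counts of coset points in projected balls are enumeration widths, not the FGL cost. The packing bound $\prod_i\max\{1,O(r)/a_i\}$ is not within $n^{o(n)}$ of $\mathcal P(\mathbf B)$. Since $\prod_i a_i=1$, it is at least $r^n=n^{n/2+o(n)}$ for $r=\Theta(\sqrt n)$, whereas $\mathcal P(\mathbf B_2)=n^{2\tau_2 n+o(n)}\approx n^{0.178n}$. Rescaling does not help, because $r/a_i$ is scale-invariant. In FGL the radius enters only through the weights $\alpha_i\leq\min\{1,na_i^2/R^2\}$. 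That is why replacing $\sqrt n$ by $R=10\sqrt n$ costs only the factor $\mathcal P_R(\mathbf B)\leq 10^n\mathcal P(\mathbf B)$.

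There is also a smaller gap in preprocessing. Theorem~\ref{thm:fgl-two-way} uses the shortest-vector bounds $\lambda_1\asymp\sqrt d$ imposed on the normalized blocks. Neither an arbitrary sublattice of determinant $e^{O(n)}n^{n/8}$ nor the densest one returned by Dadush--Micciancio need have these properties. Since preprocessing is unrestricted, you should instead search over candidate splits and test the product and coordinate bounds directly, with $e^{O(n)}$ slack. You should also verify that the retained basis has polynomial bit length, which is what makes the advice and the query space polynomial.
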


The main results regarding the tree sizes and algorithm complexities are summarized and compared against related work in Tables~\ref{tab:enumeration-upper} and \ref{tab:cvpp-complexities}. 

Figure~\ref{fig:tree-profiles} further summarizes what enumeration trees look like for different bases, by plotting the logarithmic enumeration tree width against the prefix dimension (the reverse of the depth in the enumeration tree). The maximum of the curve determines the tree size exponent $c$, while the area under the curve is dictated by the area rule in Section~\ref{sec:profile-area}, which essentially says the volume under the curve is at least $1/8$. 

The lower bound has a geometric interpretation in terms of the normalized prefix determinants, which describe the logarithmic widths of the enumeration tree. Kim’s formula for nested sublattices~\cite{kimflags} implies an area rule: a fixed continuous upper bound on these profiles cannot have area less than $1/8$ on a typical Haar-random lattice. Suppose that a basis has leading tree-size coefficient at most $c < 1/4$. Retain its chain of prefixes between dimensions $2cn$ and $(1 - 2c) n$, and reduce the two outer blocks $[0, 2cn]$ and $[(1 - 2c)n, n]$ by reversed dual-HKZ and HKZ, respectively. This preserves the middle prefix determinants and makes sure the leading tree-size coefficient remains at most $c$. The deterministic HKZ bounds place the modified profile below a curve with a plateau of height $c$ and two curved flanks, taking (reversed) HKZ tail shapes. Relative to the rectangle of height $c$, the flanks remove a total area of $c^2$, leaving area $c-c^2$. The area rule therefore requires
\begin{align}
c - c^2 \geq \frac{1}{8}.    
\end{align}
Solving for $c$ yields the lower bound $c \geq (2 - \sqrt2)/4$. The flank bounds apply after this change of basis; they are not pointwise constraints on the original profile. At equality, the resulting curve is the plateau candidate shown in Figure 1.

Note that in Figure~\ref{fig:tree-profiles}, recursive binary splits typically fix the "dip" at $(1/2, 1/8)$, and consecutive splits do not alter this dip. Even with a large recursion depth, the gap between the binary splits and the plateau at $x = 1/2$ remains significant, explaining the gap between the lower bound and the best upper bound with recursive splits. The heuristic splits however avoid this dip halfway, and make sure we approximate the plateau well at every value $x$.

An alternative visual representation of the different profiles is given in Figure~\ref{fig:gs-profiles}, showing the Gram--Schmidt profiles at various (reversed) depths in the tree. This figure essentially shows the derivatives of the curves in Figure~\ref{fig:tree-profiles}. The two-block profile jumps from $-1/4$ to $1/4$ at the split, corresponding to a factor $n^{1/2+o(1)}$ between adjacent Gram--Schmidt lengths. Note that good enumeration bases do not necessarily begin with a short vector, and better bases may even have superpolynomial first-vector lengths. The endpoint prediction and a rigorous first-vector lower bound are discussed in Section~\ref{sec:split-gs}.

\begin{table}[!t]
\centering
\small
\setlength{\tabcolsep}{4pt}
\renewcommand{\arraystretch}{1.16}
\begin{tabularx}{\textwidth}{@{}>{\raggedright\arraybackslash}X
  >{\raggedright\arraybackslash}p{0.37\textwidth}r@{}}
\toprule
\textbf{Enumeration tree size bounds} & \textbf{Lattice model} & $\boldsymbol{c}$ \\
\midrule
HKZ bases~\cite{hps,haque}
  & Heuristic & $0.184$ \\
Plateau profile (\textbf{Theorem~\ref{thm:heuristic}})
  & Heuristic & $\mathbf{0.146}$ \\
Rankin-based bound~\cite{nguyenhermite}
  & Heuristic (all bases) & $\geq 0.125$ \\
\midrule
Quasi-HKZ bases (\textbf{Theorem~\ref{thm:enumeration}})
  & Haar-random lattices & $0.184$ \\
Two-block split bases (\textbf{Theorem~\ref{thm:two-way}})
  & Haar-random lattices & $0.152$ \\
Recursive binary split bases (\textbf{Theorem~\ref{thm:eight-blocks}})
  & Haar-random lattices & $\mathbf{0.147}$ \\
Arbitrary bases (\textbf{Theorem~\ref{thm:any-basis}})
  & Haar-random lattices (all bases) & $\geq \mathbf{0.146}$ \\
\midrule
Quasi-HKZ bases~\cite{helfrich}
  & Worst-case lattices & $0.500$ \\
Quasi-HKZ bases~\cite[Thm.~2.5]{hscomplete}
  & Worst-case lattices & $\mathbf{0.184}$ \\
HKZ worst-case examples~\cite[Thm.~3]{hsworst}
  & Worst-case lattices (HKZ bases) & $\geq 0.184$ \\
\bottomrule
\end{tabularx}
\vspace{0.05cm}
\caption{Bounds on the constant $c$, rounded to three decimals, in the enumeration tree size $T(\mathbf{B}) = n^{cn + o(n)}$. Lower bounds are highlighted as $\geq c$, and explicitly state for which bases this lower bound holds. These bounds concern the sizes of full, unpruned enumeration trees, at a radius proportional to the shortest vector length.}
\label{tab:enumeration-upper}
\end{table}

\begin{table}[!t]
\centering
\small
\setlength{\tabcolsep}{4pt}
\renewcommand{\arraystretch}{1.16}
\begin{tabularx}{\textwidth}{@{}>{\raggedright\arraybackslash}X
  >{\raggedright\arraybackslash}p{0.37\textwidth}r@{}}
\toprule
\textbf{Query times for poly-space CVPP} & \textbf{Lattice model} & $\boldsymbol{c}$ \\
\midrule
Approximate Voronoi cells~\cite{DLW19,cvpp-laarhoven21}
  & Heuristic & $0.500$ \\
Large-tuple sieving~\cite{cvpp-bls16}
  & Heuristic & $0.500$ \\
Enumeration: Dual-HKZ bases~\cite{hscomplete}
  & Heuristic & $0.184$ \\
Enumeration: Plateau profile (\textbf{Theorem~\ref{thm:heuristic}})
  & Heuristic & $0.146$ \\
Collision search: Dual-HKZ bases ~\cite{fgl2026}
  & Heuristic & $\mathbf{0.092}$ \\
\midrule
Dual BDD with perturbations~\cite{cvpp-drs14}
  & Haar-random lattices & $0.500$ \\
Enumeration: Recursive splits (\textbf{Theorem~\ref{thm:eight-blocks}})
  & Haar-random lattices & $0.147$ \\
Collision search: Two blocks (\textbf{Theorem~\ref{thm:cvpp-collision}})
  & Haar-random lattices & $\mathbf{0.089}$ \\
Collision search: Any basis (\textbf{Theorem~\ref{thm:fgl-lower-note}})
  & Haar-random lattices (FGL) & $\geq \mathbf{0.073}$ \\
\midrule
Enumeration: Dual-HKZ bases~\cite{cvpp-bloemer00,cvpp-wagnermaze14}
  & Worst-case lattices & $1.000$ \\
Compact representations~\cite{cvpp-hrs20}
  & Worst-case lattices & $1.000$ \\
Kannan enumeration~\cite{kannan87,hscomplete,cvpp-fu16}
  & Worst-case lattices & $\mathbf{0.500}$ \\
\bottomrule
\end{tabularx}
\vspace{0.05cm}
\caption{Bounds on the constant $c$, rounded to three decimals, in the query time $\mathrm{Q}(\mathbf{B}, \mathbf{t}) = n^{cn + o(n)}$ for CVPP algorithms using polynomial-size advice and query space. The lower bound is highlighted as $\geq c$, and concerns only the dominant FGL cost factor. The stated bounds include CVPP-adaptations of SVP and BDD algorithms, as well as heuristic extrapolations. Although our upper bounds hold for arbitrary targets, in all cases we assume the target $\mathbf{t}$ is uniform from $\R^n / \mathcal{L}$.}
\label{tab:cvpp-complexities}
\end{table}

\begin{figure}[!tbp]
\centering
\input{figure-treewidth.tex}
\caption{Normalized tree width profiles. The solid red curve is the plateau candidate from \eqref{eq:plateau-profile}, whose attainability is only heuristically proved. All non-gray profiles have area $1/8$; the gray GSA estimate has area $1/12$ and is excluded by the area rule in Section~\ref{sec:profile-area}. The red curve is the derivative of the limiting prefix scaling; we do not prove that the actual Gram--Schmidt profiles converge to this curve.}
\label{fig:tree-profiles}
\end{figure}

\begin{figure}[!tbp]
\centering
\begin{tikzpicture}
\definecolor{gshkz}{HTML}{238B45}
\definecolor{gstwo}{HTML}{08519C}
\definecolor{gsplateau}{HTML}{D62728}
\definecolor{gsgsa}{HTML}{808080}

\begin{axis}[
  width=15.5cm, height=9.0cm,
  axis lines=box, tick align=inside, tick pos=both,
  grid=major, major grid style={black!14,line width=0.3pt},
  axis line style={black!75,line width=0.5pt},
  xmin=0,xmax=1,ymin=-1.5,ymax=1.5,
  xtick={0,0.2,0.4,0.6,0.8,1},
  xticklabels={{\shortstack{$0$\\\scriptsize(leaves)}},$0.2$,$0.4$,$0.6$,$0.8$,{\shortstack{$1$\\\scriptsize(root)}}},
  ytick={-1.5,-1,-0.5,0,0.5,1,1.5},
  tick label style={font=\small},
  label style={font=\small},
  xlabel={Normalized basis position $x=i/n$},
  ylabel={Normalized (log-)Gram--Schmidt length $\ln \|\mathbf{b}_i^*\| / \ln n$},
  legend style={at={(0.98,0.97)},anchor=north east,
    draw=none,fill=none,font=\small,row sep=2pt},
  legend cell align=left,
  clip=true,
]
\addplot[gsgsa,line width=1.1pt,domain=0:1,samples=2,forget plot]
  {0.5-x};
\addplot[gsplateau,line width=1.2pt,domain=0.0003:0.29289321881345248,samples=500,forget plot]
  {0.5*ln(0.29289321881345248/x)};
\addplot[gsplateau,line width=1.2pt,forget plot]
  coordinates {(0.29289321881345248,0) (0.70710678118654752,0)};
\addplot[gsplateau,line width=1.2pt,domain=0.70710678118654752:0.9997,samples=500,forget plot]
  {-0.5*ln(0.29289321881345248/(1-x))};

\addplot[gshkz,line width=1.1pt,domain=0:0.9997,samples=500,forget plot]
  {0.5*(1+ln(1-x))};
\addplot[gstwo,line width=1.1pt,domain=0.0003:0.5,samples=500,forget plot]
  {-0.25-0.5*ln(2*x)};
\addplot[gstwo,line width=1.1pt,domain=0.5:0.9997,samples=500,forget plot]
  {0.25+0.5*ln(2*(1-x))};

\addplot[gstwo,densely dashed,line width=0.8pt,forget plot]
  coordinates {(0.5,-0.25) (0.5,0.25)};
\addplot[gstwo,only marks,mark=*,mark size=2pt,forget plot]
  coordinates {(0.5,-0.25) (0.5,0.25)};

\addlegendimage{gshkz,line width=1.1pt}
\addlegendentry{HKZ}
\addlegendimage{gstwo,line width=1.1pt}
\addlegendentry{Two-block split}
\addlegendimage{gsplateau,line width=1.2pt}
\addlegendentry{Plateau candidate (heuristic)}
\addlegendimage{gsgsa,line width=1.1pt}
\addlegendentry{GSA estimate (excluded)}
\end{axis}
\end{tikzpicture}
\caption{Normalized logarithmic Gram--Schmidt profiles. Logarithmic singularities are clipped by the vertical plotting window. The GSA estimate is impossible for random lattices.}
\label{fig:gs-profiles}
\end{figure}




\subsection{Techniques}

Our proofs combine uniform lattice-point estimates with averaging formulas for primitive vectors and sublattices.

\textit{From prefix determinants to enumeration widths.} Let $P_k$ denote the logarithmic determinant of the sublattice generated by the first $k$ basis vectors. We prove that, with high probability over the random lattice $\mathcal{L}$, and simultaneously for all bases $\mathbf{B}$ of $\mathcal{L}$, $\log T(\mathbf{B}) = \max_k P_k + O(n)$. The lower estimate follows by averaging lattice-point counts over translated balls; the upper estimate uses Gaussian mass and Poisson summation. This uniform comparison allows us to analyse bases chosen after observing the lattice, and converts the enumeration problem into an optimization of prefix determinants.

\textit{Controlling adaptive HKZ projections.} Successive projected lattices in HKZ reduction depend on previously selected shortest vectors. To control these choices, we average over all admissible primitive vectors and establish an exponential-moment bound for the cumulative deficits of projected shortest-vector lengths from geometric upper bounds. This yields the predicted bulk Gram–Schmidt geometry and a uniform prefix profile, simultaneously for all quasi-HKZ bases.

\textit{Dense sublattices with suitable blocks.} First moments locate the relevant determinant scale, but overlapping candidates can make the untruncated second moment diverge. We exclude candidates containing a proper sublattice below an explicit determinant cutoff. This controls contributions from intersecting pairs while retaining asymptotically the entire first moment. A marked averaging formula then allows us to require prescribed high-probability properties in both the normalized sublattice and its orthogonal quotient. Requiring these blocks to admit further splits makes any fixed finite binary construction possible.

\textit{Joint constraints across ranks.} The prefixes of a basis form one nested chain. Applying Kim’s chain formula to simultaneous determinant bounds gives an obstruction stronger than bounds at individual ranks. It yields both the area rule and the universal lower coefficient $(2-\sqrt2)/4$, whose geometric interpretation follows by reducing the outer blocks.

\textit{The conditional plateau construction.} For the heuristic upper bound, we prescribe a coarse chain with $2$ outer blocks and $K - 2$ middle blocks. We verify that every nonempty subchain has a strictly positive first-moment exponent. Under the stated existence assumption, deterministic block reductions fill the gaps between prescribed prefixes, introducing peaks of height $O(1/K)$ above the critical plateau. Taking sufficiently many blocks therefore approaches the universal lower bound.

\paragraph{Organization.}
Section~\ref{sec:preliminaries} collects existing geometric and averaging results. Section~\ref{sec:tree-size} relates actual enumeration widths and tree sizes to prefix determinants. Section~\ref{sec:profiles} proves the matching enumeration bounds for quasi-HKZ bases. Section~\ref{sec:splitting} gives the two-block construction and a general framework for recursive splits. Section~\ref{sec:heuristic-upper} describes the construction attaining the lower bound under a suitable heuristic assumption. Section~\ref{sec:bounds} proves the lower bound and explains its interpretation in terms of profile area. Section~\ref{sec:fgl-plateau-note} finally describes the extension to the collision search method of Feng--Gao--Liu.


\subsection*{AI usage}
OpenAI's ChatGPT was used in preparing this note, including literature searches, the development and checking of proof arguments, and drafting and editing the text and \LaTeX{} source. The author is responsible for the mathematical claims, references, and final presentation.


\section{Preliminaries}
\label{sec:preliminaries}

We first introduce the lattice notation and averaging formulas used in the analysis. All logarithms are natural. In the algorithmic bounds, we suppress polynomial factors in the bit length of the encoded bases and targets. The Haar model describes the geometric distribution of the lattices.


\subsection{Lattices, projections and reduction}

A full-rank lattice is $\mathcal{L} = \mathbf{B}\Z^n \subset \R^n$, with covolume $\det \mathcal{L} = |\det \mathbf{B}|$ and minimum $\lambda_1(\mathcal{L}) = \min_{\boldsymbol{v} \in \mathcal{L} \setminus \{\boldsymbol{0}\}} \norm{\boldsymbol{v}}$. A lattice is \emph{unimodular} if $\det \mathcal{L} = 1$. For a sublattice $\mathcal{M}$, $\det \mathcal{M}$ is its covolume in its span. It is \emph{primitive} if a basis of $\mathcal{M}$ can be extended to a basis of $\mathcal{L}$. Equivalently, it is primitive if $\mathcal{M} = \mathcal{L} \cap \operatorname{span}_{\R} \mathcal{M}$. A vector $\boldsymbol v$ is primitive if $\Z\boldsymbol v$ is primitive. The quotient $\mathcal X_n = \mathrm{SL}_n(\R)/\mathrm{SL}_n(\Z)$ is the space of unimodular lattices, and $\mu_n$ is its Haar probability measure. When referring to a Haar-random lattice, we mean $\mathcal{L} \sim \mu_n$. The quantities studied in this paper are invariant under orthogonal transformations, and all choices of bases and shortest vectors are understood to be measurable.

For an ordered column basis $\mathbf{B} = (\boldsymbol{b}_1, \ldots, \boldsymbol{b}_n)$, let $\boldsymbol{b}_i^*$ denote its Gram--Schmidt vectors, and write
\begin{align}
\boldsymbol{b}_j = \boldsymbol{b}_j^* + \sum_{i<j} \mu_{j,i} \boldsymbol{b}_i^*, \qquad a_i = \norm{\boldsymbol{b}_i^*}, \qquad y_i = \ln a_i.
\end{align}
For unimodular lattices $\mathcal{L}$, $\sum_i y_i = 0$. For a subspace $W$, write $\pi_W$ for orthogonal projection onto $W$. For a fixed basis, we also write $\pi_i$ for projection onto $\operatorname{span}(\boldsymbol{b}_1, \ldots, \boldsymbol{b}_{i-1})^\perp$. The remaining dimension is then $d = n - i + 1$, and the projected lattice satisfies:
\begin{align}\label{eq:projected}
\mathcal{L}_{\mathbf{B},d} = \pi_i(\mathcal{L}), \qquad \det\left(\mathcal{L}_{\mathbf{B},d}\right) = \prod_{j=i}^n a_j.
\end{align}
Projection removes the components along the first $i-1$ basis vectors: the images of $\boldsymbol b_i,\ldots,\boldsymbol b_n$ generate the remaining rank-$d$ lattice. We identify this orthogonal complement with $\R^d$. The normalized basis depth is $x_i = (i - 1) / n = 1 - d / n$. The figures place the $i$th Gram--Schmidt coordinate at $i/n$; this shift of $1/n$ has no effect on the limiting curves. Throughout the paper we use the following notation for the log-determinant of the sublattice generated by $\boldsymbol{b}_1,\ldots,\boldsymbol{b}_k$:
\begin{align}
P_k(\mathbf{B}) = \sum_{i=1}^k \ln a_i \qquad (0 \leq k \leq n).
\end{align}
For lattices with $\det \mathcal{L} = 1$, this gives $P_0(\mathbf{B}) = P_n(\mathbf{B}) = 0$ and $\ln \det \mathcal{L}_{\mathbf{B},d} = -P_{n-d}(\mathbf{B})$. For unimodular lattices, $P_k(\mathbf{B})$ coincides with the Rankin profile studied by Li--Nguyen~\cite{LiNguyen2025}. This also relates to Neumaier's bit profile~\cite{Neumaier2017}, defined using leading Gram determinants, which has coordinates $2 P_k(\mathbf{B})/\ln 2$.

A basis is called \emph{size-reduced} if $|\mu_{j,i}| \leq 1/2$ for all $i < j$. It is \emph{HKZ-reduced} if it is size-reduced and $a_i = \lambda_1(\pi_i(\mathcal{L}))$ for every $i$. For $n \geq 2$, a basis is \emph{quasi-HKZ-reduced}~\cite[Definition~2.4]{hscomplete} if it is size-reduced, the tail $(\pi_2(\boldsymbol{b}_2), \ldots, \pi_2(\boldsymbol{b}_n))$ is HKZ-reduced, and
\begin{align}\label{eq:lll-pair}
\frac{3}{4} a_1^2 \leq a_2^2 + \mu_{2,1}^2 a_1^2.
\end{align}
This is the usual LLL condition~\cite[(1.4)--(1.5)]{lll} on the first pair $(\boldsymbol{b}_1, \boldsymbol{b}_2)$. In particular, $a_1 \leq \sqrt2 a_2$. Every HKZ basis is quasi-HKZ: comparing $\boldsymbol{b}_1$ with $\boldsymbol{b}_2$ gives the stronger inequality $a_1^2 \leq a_2^2 + \mu_{2,1}^2 a_1^2$.


\subsection{Geometric bounds and averaging}

Let $\mathbb{B}_d(R)$ denote the Euclidean ball of radius $R$, and let $\Gamma$ and $\zeta$ denote the gamma and Riemann zeta functions. We will use the following constants:
\begin{align}
V_d = \frac{\pi^{d/2}}{\Gamma(1 + d/2)} \, , \qquad r_d = V_d^{-1/d} \quad (d \geq 1), \qquad
\kappa_d = \frac{V_d}{2\zeta(d)} \quad (d \geq 2).
\end{align}
Here $V_d$ is the volume of the unit ball in $d$ dimensions, and $r_d$ is the radius of a ball of volume one.

\begin{lemma}[Geometric bounds]
\label{lem:geometry}
For every rank-$d$ lattice $\mathcal{L}$, the following inequalities hold, with the asymptotic estimate taken as $d\to\infty$:
\begin{align}\label{eq:minkowski-stirling}
\frac{\lambda_1(\mathcal L)}{(\det\mathcal L)^{1/d}}\leq2r_d\leq\sqrt d,\qquad \ln r_d=\frac{1}{2}\ln\frac{d}{2\pi e}+o(1).
\end{align}
\end{lemma}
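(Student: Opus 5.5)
The lemma has three parts: Minkowski's bound $\lambda_1(\mathcal L)\le 2r_d(\det\mathcal L)^{1/d}$, the elementary estimate $2r_d\le\sqrt d$, and the Stirling asymptotic for $\ln r_d$. I will treat them in that order.

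\emph{Step 1: Minkowski's bound.} Consider the open ball $\mathbb B_d(\lambda_1/2)$ with $\lambda_1=\lambda_1(\mathcal L)$. Its translates by distinct lattice points are pairwise disjoint, since two points $\boldsymbol x\in\boldsymbol v+\mathbb B_d(\lambda_1/2)$ and $\boldsymbol x\in\boldsymbol w+\mathbb B_d(\lambda_1/2)$ would give $\norm{\boldsymbol v-\boldsymbol w}<\lambda_1$. Hence the ball injects into the torus $\operatorname{span}\mathcal L/\mathcal L$, and so
\begin{align}
V_d(\lambda_1/2)^d\le\det\mathcal L .
\end{align}
Equivalently, one can apply Minkowski's convex body theorem to the ball of radius slightly larger than $2r_d(\det\mathcal L)^{1/d}$, whose volume exceeds $2^d\det\mathcal L$. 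Rearranging gives $\lambda_1\le 2V_d^{-1/d}(\det\mathcal L)^{1/d}=2r_d(\det\mathcal L)^{1/d}$.

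\emph{Step 2: the bound $2r_d\le\sqrt d$.} This is the only step with a genuine (if mild) obstacle, since a crude bound might fail for small $d$. I would avoid $\Gamma$-function estimates and use an inscribed cube. The cube $[-s,s]^d$ with $s=1/\sqrt d$ has diagonal half-length $s\sqrt d=1$, so it lies in the unit ball. Therefore $V_d\ge(2/\sqrt d)^d$, which gives $r_d=V_d^{-1/d}\le\sqrt d/2$, that is, $2r_d\le\sqrt d$. This holds for every $d\ge1$ with no case analysis.

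\emph{Step 3: the asymptotic for $\ln r_d$.} Write
\begin{align}
\ln r_d=\frac1d\ln\Gamma(1+d/2)-\frac12\ln\pi .
\end{align}
Stirling's formula gives $\ln\Gamma(1+x)=x\ln x-x+O(\ln x)$. With $x=d/2$ this yields
\begin{align}
\frac1d\ln\Gamma(1+d/2)=\frac12\ln\frac d2-\frac12+O\!\left(\frac{\ln d}{d}\right).
\end{align}
Therefore $\ln r_d=\frac12\ln\frac{d}{2\pi e}+o(1)$, since the error term $O(\ln d/d)$ tends to zero.
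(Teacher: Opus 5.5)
Your proof is correct and follows the paper's route: Minkowski's bound for the first inequality (your disjoint-balls packing argument is the standard proof of the convex body theorem specialized to balls), the inscribed cube of side $2/\sqrt d$ for $2r_d\le\sqrt d$, and Stirling's formula for the asymptotic of $\ln r_d$. The computations check out, including the $O(\ln d/d)$ error term.
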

\begin{proof}
Minkowski's convex body theorem~\cite[Theorem~4]{nguyenhermite} gives the first inequality; the cube of side $2/\sqrt d$ inside the unit ball gives the second. Stirling's formula gives the asymptotic and $\ln V_d=-d\ln d/2+O(d)$.
\qed\end{proof}

For Haar-random unimodular lattices, the sharper estimate $\lambda_1(\mathcal L)=(1+o(1))r_d$ holds with probability $1-o(1)$~\cite[Theorem~1]{sodergren}. We will also need the deterministic bound: the projections arising later in the analysis depend on the chosen basis, and need not be Haar-random. A standard consequence of Siegel's primitive-vector mean~\cite{siegel} is the explicit lower-tail estimate
\begin{align}\label{eq:short-tail}
 \PP_{\mu_d}\{\lambda_1(\mathcal L)<r_d/2\}\leq\kappa_d(r_d/2)^d=\frac{2^{-d}}{2\zeta(d)}.
\end{align}
Indeed, the expected number of primitive pairs $\{\pm\boldsymbol v\}$ in that ball is its volume divided by $2\zeta(d)$, and every shortest vector is primitive.

For a primitive pair $[\boldsymbol v]=\{\pm\boldsymbol v\}$ in a unimodular rank-$d$ lattice, we obtain a smaller unimodular lattice by projecting away the line spanned by $\boldsymbol v$ and rescaling the result to determinant one:
\begin{align}\label{eq:vector-quotient}
 \mathcal Q_{\boldsymbol v}=\|\boldsymbol v\|^{1/(d-1)}\pi_{\boldsymbol v^\perp}(\mathcal L)\in\mathcal X_{d-1}.
\end{align}
The projected lattice consists of the components of lattice vectors perpendicular to $\boldsymbol v$. Its determinant is $1/\|\boldsymbol v\|$, which explains the scaling factor. This is exactly the smaller lattice encountered after one HKZ step, normalized to unit determinant.

\begin{proposition}[Averaging over primitive vectors]\label{prop:unfold}
Let $d\geq2$ and let $f(s,\mathcal Q)$ be nonnegative, measurable and rotation-invariant in $\mathcal Q$. Then
\begin{align}\label{eq:unfold-measure}
 \E_{\mu_d}\!\left[\sum_{[\boldsymbol v]\in\mathcal L_{\rm prim}/\{\pm1\}}f(\|\boldsymbol v\|,\mathcal Q_{\boldsymbol v})\right]
 =d\kappa_d\int_0^\infty s^{d-1}\E_{\mu_{d-1}}[f(s,\mathcal Q)]\dd s.
\end{align}
Here $\mathcal X_1=\{\Z\}$ and $\mu_1$ is its point mass.
\end{proposition}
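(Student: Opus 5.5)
We will derive the identity from Siegel's mean-value theorem on $\mathcal X_d$, applied to the primitive vectors after parametrizing each one together with its normalized quotient. Set $F(\mathcal L)=\sum_{[\boldsymbol v]}f(\|\boldsymbol v\|,\mathcal Q_{\boldsymbol v})$. By monotone convergence it is enough to treat $f$ bounded and supported on $s$ in a compact set, where both sides are finite. The standard unfolding step will serve as the main tool. Write $\mathcal X_d=G/\Gamma$ with $G=\mathrm{SL}_d(\R)$ and $\Gamma=\mathrm{SL}_d(\Z)$. The pairs $[\boldsymbol v]$ of primitive vectors of $g\Z^d$ correspond bijectively to cosets $\gamma P$ in $\Gamma/\Gamma_{\boldsymbol e_1}^{\pm}$, where $\Gamma_{\boldsymbol e_1}^{\pm}$ is the stabilizer of $\pm\boldsymbol e_1$, via $\boldsymbol v=g\gamma\boldsymbol e_1$. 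Unfolding then gives
\[
\E_{\mu_d}[F]=\frac{1}{\operatorname{vol}(G/\Gamma)}\int_{G/\Gamma_{\boldsymbol e_1}^{\pm}} f\bigl(\|g\boldsymbol e_1\|,\mathcal Q_{g\boldsymbol e_1}(g\Z^d)\bigr)\dd g .
\]

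To evaluate the right-hand side, we decompose $G$ according to the parabolic subgroup $P$ that stabilizes the line $\R\boldsymbol e_1$. Write $g=k\,a_s\,n_u\,h$. Here $k\in\mathrm{SO}(d)$ modulo $\mathrm{SO}(d-1)$ fixes the direction of $\boldsymbol v$. The diagonal element $a_s=\mathrm{diag}(s,s^{-1/(d-1)}I_{d-1})$ carries $\|\boldsymbol v\|=s$. The unipotent $n_u$ with $u\in\R^{d-1}$ records the component of the complement basis along $\boldsymbol v$. Finally $h\in\mathrm{SL}_{d-1}(\R)$ acts on the quotient. The group $\Gamma_{\boldsymbol e_1}^{\pm}$ is, up to the sign, the semidirect product $\mathrm{SL}_{d-1}(\Z)\ltimes\Z^{d-1}$. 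Consequently $G/\Gamma_{\boldsymbol e_1}^{\pm}$ fibers over $\R^d\setminus\{0\}$ modulo $\pm$, with fiber $(\R^{d-1}/\Z^{d-1})\times\mathcal X_{d-1}$.

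By construction $\mathcal Q_{\boldsymbol v}$ is the rotation of $h\Z^{d-1}$ determined by $k$, and it does not depend on $u$. The rotation-invariance of $f$ removes the dependence on $k$. The integral over $u$ runs over a compact torus and contributes only a constant factor. In these coordinates Haar measure on $G$ disintegrates as Lebesgue measure $\dd\boldsymbol v$ on $\R^d$ times Haar measure on the fiber. The Jacobian factors arising from the modular function of $P$ cancel once $\boldsymbol v$ is taken as the coordinate, which is the reason $\dd\boldsymbol v$ comes out Lebesgue. Passing to polar coordinates, $\dd\boldsymbol v=d V_d\,s^{d-1}\dd s\,\dd\sigma$, and the factor $1/2$ accounts for the $\pm$ identification. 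This gives
\[
\E_{\mu_d}[F]=C_d\cdot\frac{d V_d}{2}\int_0^\infty s^{d-1}\E_{\mu_{d-1}}[f(s,\mathcal Q)]\dd s,
\]
where $C_d$ collects the covolume ratios $\operatorname{vol}(\mathcal X_{d-1})\cdot\operatorname{vol}(\R^{d-1}/\Z^{d-1})/\operatorname{vol}(\mathcal X_d)$ under compatible normalizations.

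The main obstacle is to identify the constant $C_d$ without computing the Haar volumes explicitly. We will calibrate it by specializing to $f(s,\mathcal Q)=\ind\{s\leq R\}$. With this choice the left-hand side is the expected number of primitive pairs in $\mathbb B_d(R)$. By Siegel's primitive-vector mean, already invoked for \eqref{eq:short-tail}, this count equals $V_dR^d/(2\zeta(d))=\kappa_dR^d$. The right-hand side becomes $C_d V_d R^d/2$. Equating the two forces $C_d=1/\zeta(d)$, and the general formula then reads $d\kappa_d\int s^{d-1}\E[f]\dd s$, which is \eqref{eq:unfold-measure}.

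Two further points need checking. First, the calibration is legitimate because the constant $C_d$ in the unfolded expression does not depend on $f$: it comes from the fixed measure decomposition alone. Second, the case $d=2$, where $\mathcal X_1$ is a point, is consistent with the same computation. In that case the identity reduces exactly to the Siegel count weighted by $f(s)$.
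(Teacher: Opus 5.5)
Your argument is correct. It proves from scratch what the paper imports. The paper's proof is a citation: the $k=1$ case of Str\"ombergsson's averaging formula already provides the disintegration (normalized quotient Haar-distributed on $\mathcal X_{d-1}$, lifts of total mass one, primitive-vector intensity $\zeta(d)^{-1}\dd\boldsymbol v$), and only polar coordinates and the factor $1/2$ remain.

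You instead unfold to $G/\Gamma^{\pm}_{\boldsymbol e_1}$ and fibre this over $(\R^d\setminus\{0\})/\{\pm1\}$, with a torus bundle over $\mathcal X_{d-1}$ as fibre. You then fix the single unknown constant by testing against $\ind\{s\leq R\}$ and Siegel's primitive count. The calibration is the real gain: any slip in the factor $2$ or in the volume ratio $C_d$ is absorbed automatically. So all you need is the functional form $s^{d-1}\dd s\otimes\mu_{d-1}$, whereas the paper's one-line proof depends on matching Str\"ombergsson's normalizations correctly.

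The price is that this functional form must be justified. The cleanest route is Weil's quotient formula for the stabilizer $H\cong\R^{d-1}\rtimes\mathrm{SL}_{d-1}(\R)$ of the vector $\boldsymbol e_1$. This group is unimodular, so $G/H\cong\R^d\setminus\{0\}$ carries the $G$-invariant Lebesgue measure. Moreover, the projection $H/(H\cap\Gamma)\to\mathcal X_{d-1}$ is equivariant with torus fibres of volume one, so Haar measure pushes forward to a multiple of $\mu_{d-1}$.

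Your remark that the modular factors of $P$ ``cancel'' is loose. In your $P$-coordinates, conjugating $n_u$ by $a_s$ contributes $s^{d}$, which together with $\dd s/s$ gives exactly the polar Jacobian $s^{d-1}$.

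Two small slips. The cosets should be $\gamma\Gamma^{\pm}_{\boldsymbol e_1}$ rather than $\gamma P$. The monotone-convergence reduction is unnecessary, since Tonelli applies directly to nonnegative $f$.
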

\begin{proof}
We apply the primitive component ($k=1$) of Str\"ombergsson's averaging formula~\cite[Proposition~5.1 and (5.1)--(5.4)]{strom}. For a specified primitive vector of length $s$, the normalized quotient carries Haar measure $\mu_{d-1}$, while the choices of lifts have total measure one. Counting $\boldsymbol v$ and $-\boldsymbol v$ separately gives primitive-vector intensity $\zeta(d)^{-1}\dd\boldsymbol v$. Passing to polar coordinates and dividing by two to count primitive pairs gives the factor $d\kappa_d$.
\qed\end{proof}

In Section~\ref{sec:hkz-moment}, we use this with $f(s,\mathcal Q)=\ind_{s\leq2r_d}(2r_d/s)^{d/2}\exp(Z_{d-1}(\mathcal Q)/2)$, where $Z_{d-1}$ measures cumulative HKZ length deficits. We also use the same quotient weight with the simpler cutoff $\ind_{s\leq4r_d}$ for quasi-HKZ bases. Summing over all eligible first vectors will allow us to obtain a bound that holds for every possible first HKZ step.


\subsection{Duality, Gaussian mass, and sublattice averages}
\label{sec:mean-inputs}

The dual of a lattice $\mathcal L$ in its span is $\mathcal L^*=\{\boldsymbol y:\langle\boldsymbol y,\mathcal L\rangle\subset\Z\}$. Haar measure is invariant under duality. The Gaussian mass and its basic identities are
\begin{align}\label{eq:theta-input}
 \rho(\mathcal L)=\sum_{\boldsymbol v\in\mathcal L}e^{-\pi\|\boldsymbol v\|^2},\qquad
 \rho(\mathcal L)=\frac{\rho(\mathcal L^*)}{\det\mathcal L},\qquad
 \E_{\mu_n}[\rho(\mathcal L)]=2\quad(n\geq2).
\end{align}
The middle identity is Poisson summation; the last is Siegel's mean-value formula~\cite{siegel}, including the zero vector.

Following Kim, write $\Gr(\mathcal L,r)$ for the primitive rank-$r$ sublattices, counted without orientations or choices of bases. The notation comes from the Grassmannian: each such sublattice is determined by its span through $\mathcal M=\mathcal L\cap\operatorname{span}_{\R}\mathcal M$. The coefficient $a(n,r)$ is the expected number of such sublattices with determinant at most one:
\begin{align}\label{eq:sublattice-constant}
 a(n,r)=\frac{1}{n}\binom{n}{r}\prod_{j=1}^r\frac{V_{n-j+1}\zeta(j)}{V_j\zeta(n-j+1)},\qquad \zeta(1):=1.
\end{align}
Kim's mean-value formula~\cite[Theorem~3, $k=1$]{kimflags} states that
\begin{align}\label{eq:sublattice-mean}
 N_{r,H}(\mathcal L)=\#\{\mathcal M\in\Gr(\mathcal L,r):\det\mathcal M\leq H\},\qquad
 \E_{\mu_n}[N_{r,H}(\mathcal L)]=a(n,r)H^n.
\end{align}
The mean-value formula for a fixed rank also follows from the classical averaging framework of Thunder~\cite{Thunder1998}. The $k=2$ case of the same theorem gives the mean number of ordered pairs with ranks $r,t$, determinant bounds $H,J$, and zero intersection: it equals $a(n,r)a(n,t)H^nJ^n$ when $r+t\leq n-1$. This identity concerns the average number of such pairs. In particular, it does not imply that sublattices chosen by an algorithm are independent.

The next lemma also keeps track of the two smaller lattices obtained by splitting along $\mathcal M$: the sublattice itself and its orthogonal quotient. After normalizing their determinants, their averages separate into two Haar integrals.

\begin{lemma}[Averaging the two blocks of a split]\label{lem:block-average}
For $\mathcal M\in\Gr(\mathcal L,r)$, let $D=\det\mathcal M$ and $\mathcal Q=\pi_{\mathcal M^\perp}(\mathcal L)$, so that $\det\mathcal Q=D^{-1}$. For every nonnegative measurable function $F$ that is rotation-invariant in both lattice arguments:
\begin{align}\label{eq:marked-disintegration}
 \E_{\mu_n}\!\left[\sum_{\mathcal M\in\Gr(\mathcal L,r)}F(D,D^{-1/r}\mathcal M,D^{1/(n-r)}\mathcal Q)\right] &=na(n,r)\int_0^\infty D^{n-1}\E_{\substack{\mathcal L_1\sim\mu_r\\\mathcal L_2\sim\mu_{n-r}}}[F(D,\mathcal L_1,\mathcal L_2)]\dd D,
\end{align}
where $\mathcal L_1$ and $\mathcal L_2$ on the right are independent.
\end{lemma}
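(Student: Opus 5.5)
Our plan is to derive Lemma~\ref{lem:block-average} from the general disintegration underlying Kim's and Thunder's mean-value formulas, in the same way that Proposition~\ref{prop:unfold} is obtained from the rank-one case of Str\"ombergsson's formula~\cite{strom}.

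\textbf{Step 1: the unfolding identity.} Summing a function over $\Gr(\mathcal L,r)$ and integrating against $\mu_n$ gives an integral over the homogeneous space of pairs $(\mathcal L,\mathcal M)$. This space fibres over the Grassmannian of $r$-planes $W$. Over each $W$, the data $(\mathcal L,\mathcal M)$ is equivalent to the tuple
\[
\bigl(D,\ D^{-1/r}\mathcal M,\ D^{1/(n-r)}\mathcal Q,\ \text{lift data}\bigr),
\]
where the lift data is a map $\mathcal Q\to W/\mathcal M$, i.e.\ a point of the torus $\mathrm{Hom}(\mathcal Q,W)/\mathrm{Hom}(\mathcal Q,\mathcal M)$.

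Under this parametrization, the Haar measure, via the Iwasawa/Langlands decomposition of the maximal parabolic of block type $(r,n-r)$, decomposes as a product:
\begin{itemize}
\item Haar measure $\mu_r$ on the normalized sublattice $D^{-1/r}\mathcal M$;
\item Haar measure $\mu_{n-r}$ on the normalized quotient $D^{1/(n-r)}\mathcal Q$;
\item a total-mass-one measure on the lift torus;
\item the rotation-invariant measure on the Grassmannian;
\item a density in $D$ of the form $C_{n,r}D^{n-1}\dd D$.
\end{itemize}
The exponent $n-1$ comes from the modular character of the parabolic: the unipotent radical contributes $D^{n}$, and the Jacobian of $\dd D/D$ contributes the remaining factor. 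This is the content of Thunder's and Kim's computations, which we would cite rather than redo. It is also the statement behind Str\"ombergsson's Proposition~5.1 for general $k$.

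\textbf{Step 2: integrating out the inessential variables.} Since $F$ is rotation-invariant in both lattice arguments and does not depend on the lift, we can integrate out the Grassmannian and the lift torus. Their total masses are absorbed into the constant. This yields
\[
C_{n,r}\int_0^\infty D^{n-1}\,\E\bigl[F(D,\mathcal L_1,\mathcal L_2)\bigr]\dd D,
\]
with $\mathcal L_1\sim\mu_r$ and $\mathcal L_2\sim\mu_{n-r}$ independent. Independence holds because the two Haar factors appear as a product measure.

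\textbf{Step 3: identifying the constant.} We would not compute $C_{n,r}$ from the volumes of the spaces involved. Instead, we specialize to $F=\ind_{D\leq H}$. The right-hand side then equals $C_{n,r}H^n/n$. By Kim's formula~\eqref{eq:sublattice-mean}, the left-hand side equals $a(n,r)H^n$. Hence $C_{n,r}=na(n,r)$, which is exactly the constant in~\eqref{eq:marked-disintegration}.

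\textbf{Step 4: extending to all $F$.} For general nonnegative measurable $F$, we pass from indicators of product sets to arbitrary $F$ by monotone class arguments. Finiteness is not an issue, since both sides may be $+\infty$.

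\textbf{Main obstacle.} The hard part is justifying the product structure in Step~1. Specifically, we must show that the normalized sublattice and the normalized quotient are jointly Haar and independent of $D$, and that the fibre over a fixed $W$ is parametrized bijectively, with $\mathcal M=\mathcal L\cap W$ primitive. For this we would rely on the standard statement of the Siegel--Rogers--Str\"ombergsson unfolding for primitive rank-$r$ sublattices, and check only that our normalizations of $D$ match those in the paper.
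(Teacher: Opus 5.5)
Your proposal is correct and follows essentially the same route as the paper: both cite Kim's (equivalently Thunder's or Str\"ombergsson's) decomposition of Haar measure on pairs $(\mathcal L,\mathcal M)$ into the determinant with density proportional to $D^{n-1}\dd D$, two independent Haar-distributed normalized blocks, rotations and lifts, and then integrate out the rotations and lifts. Both then identify the constant as $na(n,r)$ by taking $F=\ind_{D\leq H}$ and comparing with Kim's mean-value formula \eqref{eq:sublattice-mean}; your monotone-class step is harmless but not needed once the identity holds as an equality of measures.
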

\begin{proof}
Kim's decomposition of Haar measure~\cite[Lemma~10 and Sections~4--5]{kimflags} separates the determinant, the two normalized lattices, rotations, and choices of lifts. The determinant density is proportional to $D^{n-1}\dd D$, and the two lattice factors carry Haar probability measure. Integrating out the rotations and lifts leaves a constant; taking $F=\ind_{D\leq H}$ and using \eqref{eq:sublattice-mean} identifies it as $na(n,r)$.
\qed\end{proof}

For example, if the two blocks must belong to prescribed sets of Haar probabilities $p_1,p_2$, their expected count up to determinant $H$ is $a(n,r)H^np_1p_2$. This is an average over all candidate splits. A sublattice selected by an algorithm need not itself have a Haar distribution. Section~\ref{sec:marked-dense} uses the identity to exclude candidates with unsuitable blocks before choosing a split.


\section{Enumeration tree sizes}\label{sec:tree-size}

To compare different bases, we first relate their enumeration tree sizes to their prefix determinants. We show that, for random lattices, the largest prefix determinant determines the tree size up to a factor $e^{O(n)}$. This estimate holds simultaneously for all bases, and will therefore be used for both the upper and lower bounds. The connection between prefix determinants and enumeration costs also appears in the statistical study of Yu--Ducas~\cite[Section~5]{yuducas2018}, who analyzed the half-volume of BKZ-reduced bases as an indicator of enumeration costs. Here we use the prefix-determinant profile to control the full enumeration tree size.


\subsection{Levels, projections, and prefixes}\label{sec:enum-inputs}

At depth $d$, full enumeration has chosen the coefficients of the last $d$ basis vectors $\boldsymbol b_n,\boldsymbol b_{n-1},\ldots,\boldsymbol b_{n-d+1}$. The root corresponds to $d=0$, the leaves correspond to $d=n$, and the complementary prefix has rank $k=n-d$. The number of nodes at depth $d$ and the total tree size are given by:
\begin{align}\label{eq:tree}
 N_d(\mathbf{B},R)=\#\bigl(\mathcal{L}_{\mathbf{B},d}\cap\mathbb{B}_d(R)\bigr),\qquad
 T(\mathbf{B},R)=1+\sum_{d=1}^nN_d(\mathbf{B},R).
\end{align}
Each projected vector corresponds to one integer coefficient tuple. The root contributes one; sign symmetry changes the count by at most a constant factor. The fixed-radius full tree is independent of traversal order, with polynomial arithmetic overhead per node~\cite{finckepohst,schnorreuchner}.

For $\mathcal{M}_k=\Z\boldsymbol{b}_1+\cdots+\Z\boldsymbol{b}_k$ and $\mathcal{Q}_k=\pi_{\mathcal{M}_k^\perp}(\mathcal{L})$, primitivity of basis prefixes gives
\begin{align}\label{eq:prefix-projection}
 \det\mathcal{M}_k=e^{P_k(\mathbf{B})},\qquad
 \det\mathcal{Q}_k=e^{-P_k(\mathbf{B})},\qquad
 \mathcal{Q}_k^*=\mathcal{L}^*\cap\mathcal{M}_k^\perp.
\end{align}
A large prefix determinant therefore makes the corresponding projected lattice dense, which suggests that the enumeration tree will be wide at this depth. We make this relation precise below. In Section~\ref{sec:bounds}, we will also use that the prefixes form a nested chain of primitive sublattices.


\subsection{Uniform lower and upper estimates}\label{sec:prefix-nodes}

\begin{proposition}[Tree size from the prefix profile]\label{prop:tree-prefix}
With probability $1-e^{-\Omega(n)}$ over $\mathcal L\sim\mu_n$, simultaneously for all bases at radius $R=\lambda_1(\mathcal L)$,
\begin{align}\label{eq:tree-prefix}
 \ln T(\mathbf B)=\max_{0\leq k\leq n}P_k(\mathbf B)+O(n),\qquad
 \ln N_{n-k}(\mathbf B,R)=P_k(\mathbf B)+O(n)\quad(k<n).
\end{align}
\end{proposition}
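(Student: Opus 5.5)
The plan is to compare each level count with the covolume of the projected lattice, using only quantities of $\mathcal L$ and $\mathcal L^*$ that do not depend on the basis. By \eqref{eq:tree} and \eqref{eq:prefix-projection}, the level at depth $d=n-k$ is $N_{n-k}(\mathbf B,R)=\#(\mathcal Q_k\cap\mathbb B_d(R))$. Here $\mathcal Q_k$ has rank $d$ and $\det\mathcal Q_k=e^{-P_k(\mathbf B)}$. The "volume heuristic" prediction is $V_dR^d e^{P_k}$. For $R\asymp\sqrt n$ and $1\le d\le n$, we have $V_dR^d=e^{O(n)}(n/d)^{d/2}$. Since $(n/d)^{d/2}\le e^{n/(2e)}$, this is $e^{O(n)}$ in both directions. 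So it suffices to prove $N_{n-k}=e^{P_k\pm O(n)}$ uniformly, with $R=\lambda_1(\mathcal L)\in[r_n/2,\sqrt n]$. The upper end of this interval holds deterministically by Lemma~\ref{lem:geometry}. The lower end fails with probability at most $2^{-n}$ by \eqref{eq:short-tail}.

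\emph{Upper bound via Gaussian mass.} Every $\boldsymbol v\in\mathcal Q_k\cap\mathbb B_d(R)$ has $e^{-\pi\|\boldsymbol v\|^2}\ge e^{-\pi R^2}\ge e^{-\pi n}$. Hence $N_{n-k}\le e^{\pi n}\rho(\mathcal Q_k)$. Poisson summation \eqref{eq:theta-input} gives $\rho(\mathcal Q_k)=e^{P_k}\rho(\mathcal Q_k^*)$. Since $\mathcal Q_k^*=\mathcal L^*\cap\mathcal M_k^\perp\subseteq\mathcal L^*$, we get $\rho(\mathcal Q_k^*)\le\rho(\mathcal L^*)$. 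This final quantity depends on neither $\mathbf B$ nor $k$. Because Haar measure is invariant under duality, $\E[\rho(\mathcal L^*)]=2$. Markov's inequality then gives $\rho(\mathcal L^*)\le 2e^{n}$ outside a set of probability $e^{-n}$. On this event, $N_{n-k}\le e^{P_k+O(n)}$ for all bases and all $k$ simultaneously. This is the reason to work with the dual: the projections depend on the basis, but their duals are sections of one fixed Haar-random lattice.

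\emph{Lower bound via translate averaging.} Averaging over $\boldsymbol t\in\operatorname{span}\mathcal Q_k/\mathcal Q_k$ gives $\E_{\boldsymbol t}\#((\mathcal Q_k+\boldsymbol t)\cap\mathbb B_d(R/2))=V_d(R/2)^de^{P_k}$. So some translate contains at least that many points $\boldsymbol w_0,\boldsymbol w_1,\ldots$. The differences $\boldsymbol w_j-\boldsymbol w_0$ are distinct vectors of $\mathcal Q_k\cap\mathbb B_d(R)$. Therefore $N_{n-k}\ge V_d(R/2)^de^{P_k}$, and this is at least $e^{P_k-O(n)}$ because $R\ge r_n/2=\Omega(\sqrt n)$. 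This step is deterministic given the event on $\lambda_1$. For $k=n$ the claim is trivial, and for $d$ small the $e^{O(n)}$ slack absorbs all constants. It then remains to sum: $T=1+\sum_d N_d$ lies between $\max_k e^{P_k-O(n)}$ (with $P_n=0$ accounting for the root) and $(n+1)\max_k e^{P_k+O(n)}$, which gives the first identity in \eqref{eq:tree-prefix}.

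I expect the main obstacle to be uniformity over adaptively chosen bases, since projected lattices are not Haar-distributed. The duality inclusion $\mathcal Q_k^*\subseteq\mathcal L^*$ resolves this for the upper bound, and the lower bound needs no randomness beyond $\lambda_1$. A secondary point to check is the Stirling bookkeeping showing $V_dR^d=e^{O(n)}$ uniformly in $d$, including $d=o(n)$.
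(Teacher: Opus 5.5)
Your proposal is correct and follows the paper's proof essentially step for step. The lower bound averages over translates of $\mathbb B_d(R/2)$ and takes differences. The upper bound combines Gaussian mass, Poisson summation and the inclusion $\mathcal Q_k^*\subseteq\mathcal L^*$ with a Markov bound on $\rho(\mathcal L^*)$, uses \eqref{eq:short-tail} to get $R=\Omega(\sqrt n)$, and finally sums over the $n+1$ levels. The only cosmetic difference is that you estimate $V_dR^d$ by Stirling, where the paper uses the cube inside the unit ball.
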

\begin{proof}
We first prove the lower bound, with $d=n-k$. Averaging over a fundamental domain of $\mathcal Q_k$ shows that some translate of $\mathbb B_d(R/2)$ contains at least $V_d(R/2)^d/\det\mathcal Q_k$ lattice points. Subtracting one of these points from all of them gives distinct points in $\mathcal Q_k\cap\mathbb B_d(R)$, because distances within a ball of radius $R/2$ are at most $R$.

For the upper bound, every point with $\|\boldsymbol v\|\leq R$ contributes at least one to $e^{\pi R^2}\rho(\mathcal Q_k)$. Equivalently, the Gaussian function $e^{\pi(R^2-\|\boldsymbol v\|^2)}$ bounds the ball's indicator from above. Applying Poisson summation and then $\mathcal Q_k^*\subseteq\mathcal L^*$ gives
\begin{align}\label{eq:width-sandwich}
 e^{P_k}V_d(R/2)^d\leq N_d(\mathbf B,R)
 &\leq e^{\pi R^2}\rho(\mathcal Q_k)
 =e^{\pi R^2+P_k}\rho(\mathcal Q_k^*) \leq e^{\pi R^2+P_k}\rho(\mathcal L^*).
\end{align}
By \eqref{eq:short-tail} and Stirling's formula, with probability $1-e^{-\Omega(n)}$ we have $R\geq a\sqrt n$ for an absolute $a>0$. Markov's inequality and \eqref{eq:theta-input} also give $\rho(\mathcal L^*)\leq e^n$ with probability at least $1-2e^{-n}$. Both bounds therefore hold with probability $1-e^{-\Omega(n)}$. Using $R\leq\sqrt n$ from Lemma~\ref{lem:geometry} and the cube inside the unit ball,
\begin{align}
 V_d(R/2)^d\geq a^d(n/d)^{d/2}\geq e^{-O(n)},\qquad
 e^{\pi R^2}\rho(\mathcal L^*)\leq e^{(\pi+1)n}.
\end{align}
Taking logarithms gives the claimed estimate for each level of the tree. As there are only $n+1$ levels, the logarithm of the total tree size differs from that of the widest level by at most $\ln(n+1)$. This proves the proposition.
\qed\end{proof}

\paragraph{Shifted enumeration and CVPP.}
For a target $\boldsymbol t$, write $\boldsymbol t_k=\pi_{\mathcal M_k^\perp}(\boldsymbol t)$. In Poisson summation a shift only introduces complex factors of absolute value one, so the same upper bound holds uniformly over all targets:
\begin{align}\label{eq:shifted-width}
 \#\bigl((\mathcal Q_k-\boldsymbol t_k)\cap\mathbb B_{n-k}(R)\bigr)
 \leq e^{\pi R^2+P_k(\mathbf B)}\rho(\mathcal L^*).
\end{align}
Banaszczyk's covering-radius bound~\cite{banaszczyk}, together with \eqref{eq:short-tail} for $\mathcal L^*$, shows that radius $R=10\sqrt n$ contains a closest vector to every target with probability $1-e^{-\Omega(n)}$ for all sufficiently large $n$. With the bases constructed in Sections~\ref{sec:splitting} and~\ref{sec:heuristic-upper}, we can therefore solve exact CVPP in query time $\exp(\max_kP_k+O(n))$, giving the enumeration query bounds summarized in Table~\ref{tab:cvpp-complexities}. Depth-first traversal stores only the basis, target and one path, using polynomial space. Preprocessing is independent of the target and has unrestricted running time. It can search for a basis satisfying the required prefix bound; the heuristic bound remains conditional on Assumption~\ref{heur:assumption}. Section~\ref{sec:fgl-plateau-note} proves the smaller collision-search query bound of Theorem~\ref{thm:cvpp-collision}.

To justify the space bound, it remains to check that the basis retained after preprocessing has polynomial encoding length. On the event above, \eqref{eq:width-sandwich} and $N_{n-k}\geq1$ give $P_k\geq-O(n)$ for every prefix. Together with $\max_kP_k=O(n\ln n)$, this implies $|y_i|=O(n\ln n)$ and $\|\boldsymbol b_i\|\leq\exp(O(n\ln n))$ after size reduction. Relative to an encoded input basis, the integer change-of-basis matrix therefore has polynomial bit length. The resulting advice and query space are therefore polynomial, with the encoding factors understood as in Section~\ref{sec:preliminaries}.

In the rest of the paper, we will express prefix determinants through limiting profiles. If $P_{\lfloor xn\rfloor}/(n\ln n)=p(x)+o(1)$ uniformly in $x$, the enumeration tree coefficient is $\max_xp(x)$. The figures use prefix dimension $x=k/n$, so enumeration runs from the root at $x=1$ to the leaves at $x=0$.


\section{HKZ analysis}\label{sec:profiles}

We next analyze the enumeration trees of quasi-HKZ bases. We first derive a deterministic bound on HKZ prefix determinants, which will also be useful for the constructions in Section~\ref{sec:splitting}. To show that this bound is tight on random lattices, we then bound how far the successive HKZ lengths can fall below the Gaussian heuristic. This gives the limiting prefix profile and the enumeration exponent $1/(2e)$.


\subsection{HKZ prefix determinants}

Consider an HKZ basis $\mathbf B$ of a unimodular rank-$m$ lattice. At remaining dimension $j$, HKZ chooses a shortest vector in $\mathcal L_{\mathbf B,j}$; its length is the Gram--Schmidt length $a_{m-j+1}$. We divide by $(\det\mathcal L_{\mathbf B,j})^{1/j}$ to compare different scales, and let $h_j$ be the logarithm of this normalized length:
\begin{align}\label{eq:h-x}
 h_j=\ln\frac{\lambda_1(\mathcal{L}_{\mathbf{B},j})}{(\det\mathcal{L}_{\mathbf{B},j})^{1/j}}
 =y_{m-j+1}+\frac{P_{m-j}(\mathbf{B})}{j}.
\end{align}
The recursive Gaussian heuristic (GH) predicts $h_j\approx\ln r_j$. Substituting this prediction at successive ranks yields the heuristic HKZ Gram--Schmidt profile. The sandpile model of Hanrot--Pujol--Stehl\'{e}~\cite[Section~4.1]{hps} already yields the same leading profile; Haque--Pieprzyk~\cite[Sections~3.2 and~4]{haque} analyze a Gaussian-heuristic variant and derive more precise asymptotic estimates. Our aim is to justify the leading profile for random lattices despite the dependence between successive HKZ choices. Starting from $P_0=0$, the relation $P_{m-j+1}/(j-1)=P_{m-j}/j+h_j/(j-1)$ for $j\geq2$ gives the exact identity
\begin{align}\label{eq:prefix}
 P_{m-d}(\mathbf{B})=d\sum_{j=d+1}^m\frac{h_j}{j-1},\qquad1\leq d\leq m.
\end{align}

\begin{lemma}[Deterministic HKZ prefix bound]
\label{lem:hkz-bounds}
For every HKZ basis $\mathbf B_{\mathrm H}$ of a unimodular rank-$d$ lattice, uniformly for $0\leq t\leq1$,
\begin{align}\label{eq:hkz-bounds}
 \frac{P_{\lfloor td\rfloor}(\mathbf B_{\mathrm H})}{d\ln d}\leq p_{\mathrm H}(t)+o(1),\qquad\qquad
 p_{\mathrm H}(t)=-\frac{1-t}{2}\ln(1-t).
\end{align}
\end{lemma}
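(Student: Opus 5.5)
We will plug the Minkowski bound of Lemma~\ref{lem:geometry} into the exact identity \eqref{eq:prefix}, then evaluate the resulting harmonic-type sum. Here the lattice has rank $d$, so \eqref{eq:prefix} (with $m$ replaced by $d$ and the index renamed) reads
\begin{align}
 P_{d-s}(\mathbf B_{\mathrm H})=s\sum_{j=s+1}^{d}\frac{h_j}{j-1},\qquad 1\leq s\leq d,
\end{align}
where $h_j$ is the log of the normalized shortest-vector length of the projected rank-$j$ lattice $\mathcal L_{\mathbf B,j}$. Since HKZ chooses $a_{d-j+1}=\lambda_1(\mathcal L_{\mathbf B,j})$ and Lemma~\ref{lem:geometry} holds for every lattice, we get the deterministic bound $h_j\leq\ln(2r_j)\leq\frac12\ln j$ for every $j$, irrespective of the adaptive choices made by HKZ. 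All weights $s/(j-1)$ are positive, so we may substitute termwise:
\begin{align}
 P_{d-s}\leq \frac{s}{2}\sum_{j=s+1}^{d}\frac{\ln j}{j-1}.
\end{align}
The case $s=0$, i.e.\ $P_d=0$, is trivial, and $s=d$ gives $P_0=0$.

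Next we evaluate the sum. Write $\ln j/(j-1)=\ln j/j+O(\ln j/j^2)$. The error part contributes $s\cdot O(\ln s/s)=O(\ln d)$ when $s\geq 2$ (and $O(1)$ overall sums), which is $o(d\ln d)$. The main part is $\sum_{j=s+1}^d \ln j/j=\frac12(\ln^2 d-\ln^2 s)+O(1)$ by comparison with $\int\ln x/x\dd x$, the function being monotone decreasing for $j\geq3$. Hence
\begin{align}
 P_{d-s}\leq \frac{s}{4}\bigl(\ln d-\ln s\bigr)\bigl(\ln d+\ln s\bigr)+O(s).
\end{align}
Put $s=(1-t)d$ up to rounding. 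Then $\ln d-\ln s=-\ln(1-t)$ and $\ln d+\ln s\leq 2\ln d$, so $P_{\lfloor td\rfloor}\leq -\frac{(1-t)d}{2}\ln(1-t)\ln d+O(d)$. Dividing by $d\ln d$ gives $p_{\mathrm H}(t)+O(1/\ln d)$.

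The main thing to check is uniformity in $t$, especially near $t=1$ where $s$ is small and $\ln s$ is not comparable to $\ln d$. The bound $\ln d+\ln s\leq 2\ln d$ loses only in the safe direction, so the inequality still holds there, and $s\ln(d/s)\leq d/e$ keeps the error terms $O(d)$ uniformly. Rounding between $\lfloor td\rfloor$ and $d-s$ changes the prefix index by at most one. Such a shift alters $P$ by a single $y_i$, and we would need $|y_i|=o(d\ln d)$. To avoid this, we argue directly at integer $s=d-\lfloor td\rfloor$ and use continuity of $p_{\mathrm H}$, with $|p_{\mathrm H}'|$ bounded by $O(\ln d)$ on the grid. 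This gives an $o(1)$ error uniform in $t$, which I expect to be the only delicate bookkeeping step.
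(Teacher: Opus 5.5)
Your proof is correct and takes the same approach as the paper: substitute the Minkowski bound $h_j\leq\frac12\ln j$ termwise into the exact identity \eqref{eq:prefix}, then estimate the resulting sum uniformly at the integer $s=d-\lfloor td\rfloor$. The paper bounds $\ln j\leq\ln d$ from the start, which gives $P_{d-s}\leq\frac{s}{2}\ln d\,(\ln(d/s)+1/s)$ in one line; your evaluation of $\sum\ln j/j$ as $\frac12(\ln^2 d-\ln^2 s)$ is therefore a detour that you undo with $\ln s\leq\ln d$, and the final rounding step needs only uniform continuity of $p_{\mathrm H}$ on $[0,1]$ rather than the derivative bound.
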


\begin{proof}
Minkowski's bound gives $h_j\leq\frac{1}{2}\ln j$. For $r=d-k\geq1$, \eqref{eq:prefix} implies
\begin{align}
 P_{d-r}\leq\frac{r}{2}\ln d\sum_{j=r+1}^d\frac{1}{j-1}
 \leq\frac{r}{2}\ln d\left(\ln(d/r)+\frac{1}{r}\right).
\end{align}
After dividing by $d\ln d$, the rounding error is uniform in $k$. At the remaining endpoint $k=d$, the claim follows from $P_d=0$.
\qed\end{proof}

In the split constructions, we will also use dual HKZ bases. Reversing and dualizing an HKZ basis of the dual lattice changes the prefix determinant $P_k$ to $P_{d-k}$. The corresponding upper bound is therefore the reflected profile:
\begin{align}\label{eq:dual-hkz-bound}
 p_{\mathrm D}(t)=p_{\mathrm H}(1-t)=-\frac{t}{2}\ln t.
\end{align}


\subsection{A one-sided moment bound}\label{sec:hkz-moment}

To compare the actual HKZ lengths with the recursive GH prediction, we measure their shortfall from $\ln(2r_j)$. We use this slightly larger reference value so that the deficits $e_j=\ln(2r_j)-h_j$ are nonnegative for every lattice. The additional $\ln2$ does not change the leading profile. For a unimodular rank-$m$ lattice $\mathcal L$, let $Z_m(\mathcal L)$ be the maximum of $\sum_{j=2}^m j e_j$ over its HKZ bases, and put $Z_1=0$. Thus $Z_m$ measures the total shortfall of the normalized logarithmic GS lengths from the recursive GH reference, weighted by the remaining dimension. The maximum is finite because each HKZ step has finitely many shortest-vector choices.

\begin{lemma}[A moment bound for the cumulative deficit]\label{lem:error}
For $\mathcal{L}\sim\mu_n$,
\begin{align}\label{eq:exp-moment}
 \E_{\mu_n}[e^{Z_n(\mathcal L)/2}]=e^{O(n^2)}.
\end{align}
\end{lemma}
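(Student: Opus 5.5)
We prove \eqref{eq:exp-moment} by induction on the dimension, using Proposition~\ref{prop:unfold} to average over all possible first HKZ steps. Set $M_m=\E_{\mu_m}[e^{Z_m/2}]$, with $M_1=1$. The aim is a recursion of the form $M_m\leq e^{O(m)}M_{m-1}$; iterating it gives $M_n\leq e^{O(n^2)}$.

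\emph{Decomposing $Z_m$ along the first step.} Fix an HKZ basis of a unimodular rank-$m$ lattice $\mathcal L$, and let $\boldsymbol v=\boldsymbol b_1$ be its first vector, so $s=\|\boldsymbol v\|=\lambda_1(\mathcal L)$. The tail $(\pi_2(\boldsymbol b_2),\ldots,\pi_2(\boldsymbol b_m))$ is an HKZ basis of $\pi_{\boldsymbol v^\perp}(\mathcal L)$. Rescaling this projection to $\mathcal Q_{\boldsymbol v}$ leaves every normalized quantity $h_j$ with $j\leq m-1$ unchanged, because $h_j$ is scale invariant. Hence the deficits $e_2,\ldots,e_{m-1}$ of the basis coincide with those of an HKZ basis of $\mathcal Q_{\boldsymbol v}$, and the only new term is
\begin{align}
 m e_m=m\ln(2r_m)-m\ln s=\ln\bigl((2r_m/s)^m\bigr).
\end{align}
Maximizing over HKZ bases therefore gives
\begin{align}
 e^{Z_m(\mathcal L)/2}\leq\max_{[\boldsymbol v]}(2r_m/s)^{m/2}e^{Z_{m-1}(\mathcal Q_{\boldsymbol v})/2},
\end{align}
where the maximum runs over the primitive pairs realizing $\lambda_1$. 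By Lemma~\ref{lem:geometry}, each such pair satisfies $s\leq 2r_m$. We bound the maximum by the sum over all primitive pairs with $s\leq2r_m$, which is exactly the weight $f$ announced after Proposition~\ref{prop:unfold}.

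\emph{Applying the averaging formula.} Proposition~\ref{prop:unfold} now yields
\begin{align}
 M_m\leq m\kappa_m M_{m-1}\int_0^{2r_m}s^{m-1}(2r_m/s)^{m/2}\dd s
 =m\kappa_m M_{m-1}(2r_m)^{m/2}\frac{(2r_m)^{m/2}}{m/2}.
\end{align}
Since $\kappa_m r_m^m=1/(2\zeta(m))$, this simplifies to $M_m\leq 2^{m}\zeta(m)^{-1}M_{m-1}\leq 2^mM_{m-1}$. Iterating gives $M_n\leq 2^{O(n^2)}=e^{O(n^2)}$. The matching lower bound is trivial, since $Z_n\geq0$.

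\emph{The main obstacle.} The delicate point is the exponent. The weight $(2r_m/s)^{m/2}$ must be paired with the density $s^{m-1}$ so that the integral near $s=0$ stays finite: with exponent $m/2<m$ it does. This is exactly why the moment uses $e^{Z/2}$ rather than $e^{Z}$. With $e^{Z}$ the weight would become $(2r_m/s)^m$ and the integral $\int_0^{2r_m} s^{-1}\dd s$ would diverge.

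A second point to check is measurability and rotation invariance of $Z_{m-1}$, which the averaging formula requires. This holds because $Z_{m-1}$ is defined as a maximum over all HKZ bases, so it is a function of the lattice alone. That same choice is what lets the bound cover every adaptive choice of shortest vector: the sum over all admissible $[\boldsymbol v]$ dominates whichever vector HKZ actually picks.
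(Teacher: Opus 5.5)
Your proof is correct and follows essentially the same route as the paper. You decompose $Z_m$ along the first HKZ step, bound the maximum over shortest-vector pairs by the sum over all primitive pairs with $s\leq 2r_m$, and apply Proposition~\ref{prop:unfold} to get $M_m\leq 2^m\zeta(m)^{-1}M_{m-1}$, with the exponent $1/2$ ensuring convergence at $s=0$.
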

\begin{proof}
We prove the claim by induction on the dimension, writing $M_d=\E_{\mu_d}[e^{Z_d/2}]$ and starting from $M_1=1$. For a fixed unimodular rank-$d$ lattice, let $\mathcal S(\mathcal L)$ be its finite nonempty set of shortest-vector pairs. Once a first vector is chosen, all later HKZ choices are choices in its projected quotient. Multiplying every lattice vector by the same factor changes its length and $(\det\mathcal L)^{1/j}$ by that factor, so $h_j$ and $e_j$ stay unchanged. We may therefore rescale the quotient to determinant one without changing the remaining deficits:
\begin{align}
 Z_d(\mathcal L)=\max_{[\boldsymbol v]\in\mathcal S(\mathcal L)}
 \left(d\ln\frac{2r_d}{\|\boldsymbol v\|}+Z_{d-1}(\mathcal Q_{\boldsymbol v})\right).
\end{align}
Every shortest vector is primitive and has length at most $2r_d$. After exponentiating, we can bound the maximum by the sum over all shortest-vector pairs, and then enlarge this sum to all primitive pairs in the ball of radius $2r_d$. All summands are nonnegative, so sums and integrals may be interchanged without first proving integrability (Tonelli's theorem~\cite[Theorem~2.37]{folland}). Proposition~\ref{prop:unfold} then gives:
\begin{align}\label{eq:moment-recursion}
 M_d\leq d\kappa_d M_{d-1}\int_0^{2r_d}s^{d-1}\left(\frac{2r_d}{s}\right)^{d/2}\dd s
 =\frac{2^d}{\zeta(d)}M_{d-1}\leq2^dM_{d-1}.
\end{align}
Here $V_dr_d^d=1$. The factor $1/2$ in the moment ensures convergence at $s=0$; removing it would give the divergent integral $\int_0^{2r_d}s^{-1}\dd s$. Iteration gives $\ln M_n\leq(\ln2)\sum_{d=2}^n d=O(n^2)$. Because we average over all primitive vectors, the argument does not require the quotients selected by HKZ to be Haar-random or independent.
\qed\end{proof}


\subsection{Quasi-HKZ profiles}\label{sec:quasi-profiles}

Every quasi-HKZ basis satisfies $\lambda_1(\mathcal{L})\leq a_1\leq\sqrt2\lambda_1(\mathcal{L})<4r_n$. Indeed, if the first vector is not shortest, a shortest vector has nonzero projection onto the HKZ tail, so $a_2\leq\lambda_1(\mathcal{L})$ and \eqref{eq:lll-pair} gives the claim. The last inequality follows from Minkowski's bound.

We now use the moment bound to justify the leading HKZ profile predicted by the heuristic models of Hanrot--Pujol--Stehl\'{e} and Haque--Pieprzyk~\cite{hps,haque}. The next proposition shows that the actual logarithmic Gram--Schmidt lengths $y_i$ differ from $\frac{1}{2}\ln n\,(1+\ln(1-x_i))$ by $o(\ln n)$ on average, away from the final ranks. Summing these coordinates gives the uniform prefix profile $p_{\mathrm H}$, whose maximum determines the tree exponent.

\begin{proposition}[Quasi-HKZ profiles on Haar-random lattices]\label{prop:profile}
With probability $1-e^{-\Omega(n)}$ over $\mathcal L\sim\mu_n$, simultaneously for all quasi-HKZ bases,
\begin{align}\label{eq:haar-prefix-profile}
 \max_{0\leq k\leq n}\left|\frac{P_k(\mathbf B)}{n\ln n}-p_{\mathrm H}(k/n)\right|=o(1).
\end{align}
With the same probability, for every fixed $0<\varepsilon<1$,
\begin{align}\label{eq:bulk-error}
 \frac{1}{n}\sum_{i:\,x_i\leq1-\varepsilon}
 \left|\frac{y_i}{\ln n}-\frac{1}{2}\bigl(1+\ln(1-x_i)\bigr)\right|=o(1).
\end{align}
\end{proposition}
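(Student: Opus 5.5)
The upper half of \eqref{eq:haar-prefix-profile} is essentially deterministic, so the real work is a matching lower bound driven by the moment estimate of Lemma~\ref{lem:error}. I would proceed in four steps.

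\textbf{Step 1: reduce to HKZ data.} For a quasi-HKZ basis, the tail $(\pi_2(\boldsymbol b_2),\ldots,\pi_2(\boldsymbol b_n))$ is HKZ-reduced in $\pi_2(\mathcal L)$, and we know $\lambda_1(\mathcal L)\leq a_1<4r_n$. Conversely $a_1\geq\lambda_1(\mathcal L)\geq r_n/2$ with probability $1-e^{-\Omega(n)}$ by \eqref{eq:short-tail}. Hence $y_1=\frac12\ln n+O(1)$, and every prefix satisfies $P_k=y_1+P'_{k-1}$, where $P'$ is the prefix function of an HKZ basis of the rescaled quotient $\mathcal Q_{\boldsymbol b_1}$. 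The upper bound $P_k\leq n\ln n\,(p_{\mathrm H}(k/n)+o(1))$ then follows from Lemma~\ref{lem:hkz-bounds} applied to that quotient, since the rescaling and the shift by one index only cost $O(\ln n)$ per step, i.e.\ $o(n\ln n)$ in total. This direction is deterministic once $a_1$ is controlled.

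\textbf{Step 2: a moment bound that allows the quasi-HKZ first vector.} For the lower bound I would redo the first step of the induction in Lemma~\ref{lem:error}. Here the first vector ranges over primitive pairs of length at most $4r_n$ rather than only shortest vectors, which is the variant with cutoff $\ind_{s\leq4r_n}$ announced after Proposition~\ref{prop:unfold}. Define $Z^{\rm q}_n(\mathcal L)$ as the maximum over all quasi-HKZ bases of $\sum_j je_j$, with $e_1$ replaced by $\ln(4r_n)-h_n\geq0$. Proposition~\ref{prop:unfold} together with $M_{n-1}=e^{O(n^2)}$ gives $\E[e^{Z^{\rm q}_n/2}]\leq 4^nM_{n-1}=e^{O(n^2)}$. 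Markov's inequality then yields $Z^{\rm q}_n\leq n^2\omega$ with probability $1-e^{-\Omega(n)}$, for any fixed $\omega$ taken large enough. The same event controls all quasi-HKZ bases simultaneously, because $Z^{\rm q}$ is a maximum over them.

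\textbf{Step 3: convert the deficit bound into profile bounds.} On this event, $\sum_j je_j\leq Cn^2$ with all $e_j\geq0$, so deficits are rare. For $j\geq\varepsilon n$ we get $\sum_{j\geq\varepsilon n}e_j\leq Cn/\varepsilon$. This means $h_j=\ln(2r_j)-e_j$ with an average $e_j$ of $O(1)$, which is $o(\ln n)$. Substituting into the identity \eqref{eq:prefix}, the lower bound follows:
\[
P_{n-d}\geq d\sum_{j>d}\frac{\frac12\ln j-O(1)-e_j}{j-1}.
\]
The deficit contribution is at most $d\sum_{j>d}e_j/(j-1)\leq (d/d^2)\sum_j je_j\cdot O(1)$, which is $O(n^2/d)$ and hence $o(n\ln n)$ whenever $d\geq \delta n$. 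For $d<\delta n$, both $P_{n-d}$ and $p_{\mathrm H}$ are $O(\delta\ln(1/\delta)\,n\ln n)$ in absolute value. The upper bound comes from Lemma~\ref{lem:hkz-bounds} applied to the projected tail, and the lower bound comes from Proposition~\ref{prop:tree-prefix}, which gives $P_k\geq -O(n)$. Letting $\delta\to0$ slowly gives uniformity. The main term $d\sum_{j>d}\frac{\ln j}{2(j-1)}-\frac{d}{2}\ln n\,(\ldots)$ evaluates to $p_{\mathrm H}$ after an integral comparison; this is routine.

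\textbf{Step 4: the bulk statement \eqref{eq:bulk-error}.} Write $y_i=h_j-P_{n-j}/j$ with $j=n-i+1\geq\varepsilon n$. From Step 3 we have $P_{n-j}/j=-\frac12\ln n\,\ln(j/n)+o(\ln n)$, uniformly in this range. Also $h_j=\frac12\ln n-e_j+O(1)$ with $\frac1n\sum_{j\geq\varepsilon n}e_j=O(1/\varepsilon)$, which is $o(\ln n)$. Averaging gives \eqref{eq:bulk-error}.

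The obstacle I expect is the bookkeeping in Step 3 near the tail, where $d$ is small and the $h_j$ have no control in small dimensions. The plan there is to rely purely on the deterministic upper bound and the $-O(n)$ lower bound from Proposition~\ref{prop:tree-prefix}, both of which are already uniform over all bases.
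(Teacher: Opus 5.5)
Your proposal is correct and follows essentially the paper's proof. It uses the same exponential moment over primitive first vectors of length at most $4r_n$, combining Proposition~\ref{prop:unfold} with Lemma~\ref{lem:error}, and the same handling of the final ranks via Lemma~\ref{lem:hkz-bounds} and $P_k\geq-O(n)$. The only difference is ordering: you insert the one-sided deficits directly into \eqref{eq:prefix} to get the prefix profile and then deduce the coordinates, whereas the paper first bounds the weighted coordinate errors against $R_{m,d}$ and then sums them. One small slip in Step~1: the exact relation is $P_k=\frac{n-k}{n-1}y_1+P'_{k-1}$. The rescaling therefore costs $O(\ln n/n)$ per coordinate and $O(\ln n)$ in total, not $O(\ln n)$ per step, though your conclusion is unaffected.
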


\begin{proof}
Let $m=n-1$, and let $W_n$ be the maximum of $Z_m(\mathcal Q_{\boldsymbol v})$ over primitive pairs with $\|\boldsymbol v\|\leq4r_n$. Proposition~\ref{prop:unfold} and Lemma~\ref{lem:error} give
\begin{align}\label{eq:tail-moment}
 \E_{\mu_n}[e^{W_n/2}]\leq\kappa_n(4r_n)^n\E_{\mu_m}[e^{Z_m/2}]=e^{O(n^2)}.
\end{align}
Hence $W_n=O(n^2)$ with probability $1-e^{-\Omega(n^2)}$. Together with \eqref{eq:short-tail} and Proposition~\ref{prop:tree-prefix}, this gives an event of probability $1-e^{-\Omega(n)}$ on which all three bounds hold. In particular, every quasi-HKZ basis then has $|\ln a_1|=O(\ln n)$.

For its normalized HKZ tail $\mathbf B'$, put $g_j=\ln(2r_j)$ for $j\geq2$, with $g_1=e_1=0$. The prefix identity gives
\begin{align}\label{eq:reconstruct}
 y'_{m-d+1}=h_d-\sum_{j=d+1}^m\frac{h_j}{j-1},\qquad
 R_{m,d}=g_d-\sum_{j=d+1}^m\frac{g_j}{j-1}.
\end{align}
Using $h_j=g_j-e_j$ and the triangle inequality,
\begin{align}\label{eq:weighted-error}
 \sum_{d=1}^m d|y'_{m-d+1}-R_{m,d}|
 \leq\sum_{j=2}^m\left(j+\sum_{d<j}\frac{d}{j-1}\right)e_j
 \leq\frac{3}{2}W_n=O(n^2).
\end{align}
For fixed $\varepsilon>0$ and $d\geq\varepsilon m$, Stirling's formula and harmonic sums give $R_{m,d}=\frac{1}{2}\ln m\,(1+\ln(d/m))+O(1)$, where the constant may depend on $\varepsilon$. The weighted bound above therefore gives an average coordinate error $O(1)$ in this range. Rescaling by $y_i=y'_{i-1}-\ln a_1/m$ for $i\geq2$ and dividing by $\ln n$ proves \eqref{eq:bulk-error}.

Summing the coordinate errors and using $\frac{1}{2}\int_0^x(1+\ln(1-t))\dd t=p_{\mathrm H}(x)$ proves \eqref{eq:haar-prefix-profile} for $k/n\leq1-\varepsilon$.

For the remaining prefixes, Lemma~\ref{lem:hkz-bounds} applied to the tail gives $P_k/(n\ln n)\leq p_{\mathrm H}(k/n)+o(1)$, while Proposition~\ref{prop:tree-prefix} and $N_{n-k}\geq1$ give $P_k\geq-O(n)$. Since $P_n=0$ and $p_{\mathrm H}(x)\to0$ as $x\to1$, letting $\varepsilon\downarrow0$ proves the uniform claim.
\qed\end{proof}


\subsection{Matching enumeration bounds}\label{sec:hkz-enumeration}

\begin{proof}[Theorem~\ref{thm:enumeration}]
With probability $1-e^{-\Omega(n)}$, Propositions~\ref{prop:tree-prefix} and~\ref{prop:profile} both hold, giving uniformly over all quasi-HKZ bases:
\begin{align}\label{eq:quasi-critical}
 \frac{\ln T(\mathbf B)}{n\ln n}
 =\max_{0\leq x\leq1}p_{\mathrm H}(x)+o(1)
 =\frac{1}{2e}+o(1).
\end{align}
The maximum occurs at $x=1-1/e$.
\qed\end{proof}


\section{Splitting with dense sublattices}\label{sec:splitting}

To improve on HKZ, we construct a basis by first finding a dense primitive sublattice and then reducing this sublattice and its quotient separately. We show that, with high probability, a suitable split exists and that the same approach can be applied recursively. For the two-block construction, Proposition~\ref{prop:profile} gives the exact leading tree exponent. For the recursive construction, the deterministic upper bound from Lemma~\ref{lem:hkz-bounds} suffices.


\subsection{Dense sublattices with suitable blocks}\label{sec:marked-dense}

A primitive rank-$m$ sublattice $\mathcal M$ splits $\mathcal L$ into $\mathcal M$ and the rank-$(n-m)$ quotient $\mathcal Q=\pi_{\mathcal M^\perp}(\mathcal L)$. Their determinants are reciprocal. We need a split with small $\det\mathcal M$ whose two normalized blocks also have the usual high-probability properties of random lattices, such as the HKZ profile or the existence of further splits.

A lower bound at the relevant determinant scale follows from Shapira--Weiss~\cite[Proposition~3]{shapiraweiss2014}: for some
absolute constant $C>0$, with probability $1-o(1)$, every rank-$k$ sublattice $\mathcal M\subseteq\mathcal L$ satisfies
\begin{align}
\det\mathcal M \geq (n/C)^{k(n-k)/(2n)} \qquad (1\leq k<n).
\end{align}
In particular, at half rank this gives the lower scale $e^{-O(n)}n^{n/8}$. The following lemma provides a matching upper scale, up to $e^{O(n)}$ factors, while also ensuring suitable geometry of both normalized blocks.

\begin{lemma}[Dense sublattices with suitable blocks]\label{lem:marked-dense}
Fix $0<\alpha<1$. Take $m=\lfloor\alpha n\rfloor$ for $\alpha\ne1/2$, and $m=\lfloor(n-1)/2\rfloor$ for $\alpha=1/2$. There is a threshold
\begin{align}\label{eq:window}
 H=e^{O(n)}n^{m(n-m)/(2n)}
\end{align}
with the following property. Prescribe any rotation-invariant measurable properties that hold with Haar probability $1-o(1)$ in dimensions $m$ and $n-m$. With probability $1-o(1)$, $\mathcal L\sim\mu_n$ contains a primitive rank-$m$ sublattice $\mathcal M$ with $H/2\leq D=\det\mathcal M\leq H$ such that $D^{-1/m}\mathcal M$ and $D^{1/(n-m)}\mathcal Q$ satisfy the respective properties. Either property may instead concern the dual of that normalized block.
\end{lemma}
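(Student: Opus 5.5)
Fix the threshold by the first moment and then show concentration with a truncated second moment, as announced in the Techniques paragraph. Let $X$ count primitive rank-$m$ sublattices $\mathcal M$ with $H/2\leq D\leq H$ whose normalized blocks $D^{-1/m}\mathcal M$ and $D^{1/(n-m)}\mathcal Q$ (or their duals, using duality-invariance of Haar measure) satisfy the prescribed properties, and which are in addition \emph{truncated}: they contain no primitive proper sublattice of rank $j<m$ with determinant below an explicit cutoff $c_j(D)$. By Lemma~\ref{lem:block-average}, the untruncated marked count has mean $a(n,m)(1-2^{-n})H^n p_1p_2$ with $p_1p_2=1-o(1)$. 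Estimating $a(n,m)$ via Stirling from \eqref{eq:sublattice-constant} gives $\ln a(n,m)=-\frac{m(n-m)}{2}\ln n+O(n^2)$. I therefore choose $H=e^{Cn}n^{m(n-m)/(2n)}$ with $C$ large, so that the first moment is $e^{\Omega(n^2)}$, or at least tends to infinity. The case $m=\lfloor(n-1)/2\rfloor$ at $\alpha=1/2$ ensures $2m\leq n-1$, which is exactly what is needed to apply Kim's $k=2$ pair formula to disjoint pairs.

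The truncation must cost only a $o(1)$ fraction of the first moment. I would apply Lemma~\ref{lem:block-average} again: the truncation condition is a property of the normalized block $D^{-1/m}\mathcal M\sim\mu_m$, namely that it has no primitive rank-$j$ sublattice of determinant below $c_j(D)D^{-j/m}$. By Kim's mean-value formula in dimension $m$ and a union bound over $j$, this fails with probability at most $\sum_j a(m,j)(c_jD^{-j/m})^m$. I would pick the cutoffs so that this sum is $o(1)$; they sit slightly below the typical densest-sublattice scale of a random rank-$m$ lattice, Shapira--Weiss type, times $D^{j/m}$. The truncation can then simply be absorbed as one more high-probability block property.

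For the second moment, $\E[X^2]=\E[X]+\sum_{\mathcal M\neq\mathcal M'}$, and I split the pairs by $j=\rank(\mathcal M\cap\mathcal M')$. Pairs with $j=0$ are handled by Kim's $k=2$ formula, which gives at most $a(n,m)^2H^{2n}=(1+o(1))\E[X]^2$ once the block probabilities are accounted for; since the marks only shrink the count, it suffices to bound by the unmarked pair count and use $p_i\to1$. For pairs with $j\geq1$, the intersection $\mathcal I$ is primitive of rank $j$. Truncation forces $\det\mathcal I\geq c_j$, and $\det(\mathcal M+\mathcal M')\leq \det\mathcal M\det\mathcal M'/\det\mathcal I\leq H^2/c_j$, where $\mathcal M+\mathcal M'$ has rank $2m-j$. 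I would count such configurations through the chain $\mathcal I\subset\mathcal M,\mathcal M'\subset\mathcal M+\mathcal M'$. Kim's chain and flag formulas bound the expected number of pairs by a sum over ranks of products of $a(\cdot,\cdot)$ constants times powers of $H$ and $c_j$, and for each $j$ this should be $e^{-\Omega(n)}\E[X]^2$. Chebyshev (Paley--Zygmund) then gives $X>0$ with probability $1-o(1)$.

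The main obstacle is this intersecting-pair estimate. Without truncation the terms with small-determinant $\mathcal I$ diverge, and the cutoffs must be tuned carefully: low enough that truncation loses only $o(1)$ of the mean, yet high enough that each $j$-term is exponentially smaller than $\E[X]^2$ uniformly in $1\leq j\leq m-1$. I would first verify the exponent bookkeeping at the $n^{\Theta(n^2)}$ scale, using $\ln a(n,r)\approx-\frac{r(n-r)}{2}\ln n$, where it reduces to a concavity inequality in $(j,m)$. I would then absorb the $e^{O(n^2)}$ slack by choosing $C$ and the cutoff constants large.
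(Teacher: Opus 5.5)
Your architecture is the paper's: discard candidates that contain an unusually dense proper primitive sublattice, use Lemma~\ref{lem:block-average} to impose the determinant window and the block properties at cost $o(\mu)$, bound disjoint pairs by Kim's $k=2$ formula, and finish with Paley--Zygmund. Your $D$-dependent cutoff $c_j(D)=D^{j/m}\gamma_j$ is a legitimate variant. Here $\gamma_j$ sits just below the typical densest rank-$j$ scale of the normalized block, and this turns the truncation into one more Haar block property. The paper instead uses the fixed, lower cutoff $U_r=n^{r(n-r)/(2n)}$ and integrates the violation counts over $D$.

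\textbf{The intersecting-pair estimate is missing.} This is the heart of the lemma, and you only assert it. The mechanism you propose does not work as stated. The pair $(\mathcal M,\mathcal M')$ is not a flag, so Kim's chain formula does not count it. Routing through $\mathcal M+\mathcal M'$ leaves you, modulo $\mathcal I$, with two zero-intersection rank-$(m-j)$ sublattices that span a rank-$2(m-j)$ lattice. That is exactly the borderline case $r+t=N$ excluded from Kim's pair formula, and $\mathcal M+\mathcal M'$ need not even be primitive. The upper bound on $\det(\mathcal M+\mathcal M')$ is not what controls the count.

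\textbf{The missing step is to condition on $\mathcal I$ alone and project it away.} Then $\pi_{\mathcal I^\perp}\mathcal M$ and $\pi_{\mathcal I^\perp}\mathcal M'$ are primitive of rank $m-j$, with zero intersection, in the rank-$(n-j)$ quotient of determinant $(\det\mathcal I)^{-1}$. Each candidate is the full preimage of its projection, so there is no extra multiplicity. Kim's pair formula applies because $2(m-j)\leq n-j-1$. Normalizing the quotient turns both ceilings into $H(\det\mathcal I)^{-(n-m)/(n-j)}$. Integrating against the intersection density $na(n,j)D^{n-1}\dd D$ over $D=\det\mathcal I\geq c_j$ gives $\frac{n}{s}a(n,j)a(n-j,m-j)^2H^{2(n-j)}c_j^{-s}$, with $s=n-2m$. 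The cutoff removes the divergence as $D\to0$. The condition $s\geq1$ is needed a second time here, for convergence as $D\to\infty$. Relative to $\mu^2$, with the paper's cutoff this term has logarithm $-\frac{j(m-j)(n-m)}{n}\ln n-2j\ln\bigl(H/n^{m(n-m)/(2n)}\bigr)+O(jn)$. Any cutoff at least $U_j$, as yours is once $C$ is large, only lowers it. So no delicate tuning is needed, but this computation has to be carried out.

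\textbf{The case $\alpha>1/2$ is not treated.} There $2m>n$ for large $n$, so distinct candidates always meet in rank at least $2m-n\geq1$. The diagonal contributes only $\E[X]=o(\E[X]^2)$, while $\E[X^2]\geq\E[X]^2$. So the intersecting ranks together contribute at least $(1-o(1))\E[X]^2$. This is incompatible with your claimed bound of $e^{-\Omega(n)}\E[X]^2$ for each of the at most $m$ ranks $j$.

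The paper reduces this case to the other one by duality:
\begin{itemize}
\item apply the $2m\leq n-1$ case at rank $n-m$ in $\mathcal L^*$ to obtain $\mathcal N$, and put $\mathcal M=\mathcal L\cap\mathcal N^\perp$;
\item use $\det\mathcal M=\det\mathcal N$, $\pi_{\mathcal M^\perp}(\mathcal L)=\mathcal N^*$ and $\pi_{\mathcal N^\perp}(\mathcal L^*)=\mathcal M^*$, so the two normalized blocks are swapped and dualized;
\item transfer the prescribed properties by Haar invariance under duality, which also yields the dual variants in the statement.
\end{itemize}
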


To prove the lemma, we count candidate splits and remove those containing an unusually dense proper sublattice. This removes only a small fraction of the candidates in expectation, while allowing us to control the correlations between intersecting candidates. The next two lemmas establish the required first- and second-moment bounds. We then use Lemma~\ref{lem:block-average} to remove candidates with unsuitable blocks, and show that at least one suitable split remains with high probability.

We first consider the case $2m\leq n-1$ and $m/n\to\alpha>0$. Let $N_H=N_{m,H}$ be the number of candidates and let $\mu=\E[N_H]=a(n,m)H^n$. We retain a candidate $\mathcal M$ only if every proper primitive sublattice $\mathcal C\subset\mathcal M$ satisfies:
\begin{align}\label{eq:internal-cutoff}
 \det\mathcal C\geq U_r:=n^{r(n-r)/(2n)},\qquad r=\rank\mathcal C\in\{1,\ldots,m-1\}.
\end{align}
Let $Z$ be the number of retained candidates. In the next two lemmas, we choose the positive constant implicit in the factor $e^{O(n)}$ in $H$ sufficiently large.

\begin{lemma}[Few candidates are removed]\label{lem:split-retention}
The reduced count $Z$ satisfies $\E[Z]=(1-e^{-\Omega(n)})\mu$.
\end{lemma}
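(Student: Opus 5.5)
We bound the expected number of removed candidates, $\mu-\E[Z]$, by a union bound over the pairs that cause a removal. A candidate $\mathcal M$ with $\det\mathcal M\leq H$ is removed exactly when it contains some proper primitive sublattice $\mathcal C$ of rank $r\in\{1,\ldots,m-1\}$ with $\det\mathcal C<U_r$. Hence
\begin{align}
 \mu-\E[Z]\leq\sum_{r=1}^{m-1}\E\,\#\{(\mathcal C,\mathcal M):\mathcal C\subset\mathcal M,\ \rank\mathcal C=r,\ \det\mathcal C<U_r,\ \det\mathcal M\leq H\}.
\end{align}
It then suffices to show that each summand is at most $e^{-\Omega(n)}\mu$ uniformly in $r$. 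Summing over the $m-1$ ranks only costs a polynomial factor.

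To count the nested pairs, we use Kim's flag formula~\cite[Theorem~3]{kimflags} for chains $\mathcal C\subset\mathcal M$. Alternatively, we apply Lemma~\ref{lem:block-average} to $\mathcal M$ and then count rank-$r$ primitive sublattices inside the normalized block $D^{-1/m}\mathcal M\sim\mu_m$ using \eqref{eq:sublattice-mean} in dimension $m$. A primitive $\mathcal C\subset\mathcal M$ with $\det\mathcal C<U_r$ corresponds to a sublattice of the normalized block with determinant below $U_rD^{-r/m}$. Its expected number is therefore $a(m,r)U_r^mD^{-rm/m}=a(m,r)U_r^mD^{-r}$. Integrating against $na(n,m)D^{n-1}\dd D$ over $D\leq H$ gives
\begin{align}
 a(n,m)\,a(m,r)\,U_r^m\,\frac{n}{n-r}H^{n-r}=\mu\cdot\frac{n}{n-r}\,a(m,r)\,U_r^mH^{-r}.
\end{align}
The integral converges because $r<n$.

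The remaining step, which is the main obstacle, is to show that
\begin{align}
 a(m,r)\,U_r^m\,H^{-r}\leq e^{-\Omega(n)}\qquad\text{uniformly in }1\leq r\leq m-1.
\end{align}
To do this, we compute $\ln a(m,r)$ by Stirling's formula. Since $\ln V_j=-\frac{j}{2}\ln j+O(j)$, the product in \eqref{eq:sublattice-constant} gives
\begin{align}
 \ln a(m,r)=\tfrac12\sum_{j=1}^{r}\bigl(j\ln j-(m-j+1)\ln(m-j+1)\bigr)+O(mr).
\end{align}
This is approximately $-\frac{r(m-r)}{2}\ln m+O(mr)$; to leading order it is the same as $-\frac{r(m-r)}{2}\ln n$, with errors absorbed into $O(nr)$. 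On the other hand,
\begin{align}
 \ln U_r^m=\frac{mr(n-r)}{2n}\ln n,\qquad \ln H^{-r}=-\frac{rm(n-m)}{2n}\ln n-rC_0n,
\end{align}
where $C_0$ is the constant in the $e^{O(n)}$ factor of $H$. Collecting the coefficients of $\frac{r}{2}\ln n$ gives
\begin{align}
 -(m-r)+\frac{m(n-r)}{n}-\frac{m(n-m)}{n}=-(m-r)+\frac{m(m-r)}{n}=-(m-r)\Bigl(1-\frac mn\Bigr)\leq0.
\end{align}
So the $\ln n$ terms are nonpositive, and the residual is $O(nr)-C_0nr$. Choosing $C_0$ large, as the lemma's setup allows, makes this at most $-nr\leq-n$. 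We must check carefully that the $O(nr)$ Stirling errors are genuinely linear in $r$ with an absolute constant, including the $\zeta$ factors and the binomial coefficient $\binom mr\leq 2^m$. The binomial factor is only $e^{O(m)}$, so it is dominated by $e^{-C_0nr}$. The case $r$ close to $m$ is not delicate, since the $\ln n$ coefficient vanishes there and the $C_0$ term still gives $e^{-\Omega(n)}$. This yields $\E[Z]\geq(1-e^{-\Omega(n)})\mu$. The reverse inequality $\E[Z]\leq\mu$ is trivial.
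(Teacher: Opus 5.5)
Your proposal is correct and follows the paper's proof essentially step for step. You use the same union bound over violating pairs, the same application of Lemma~\ref{lem:block-average} with Kim's mean in dimension $m$ (giving $\frac{n}{n-r}a(m,r)U_r^mH^{-r}$), and the same Stirling bookkeeping leading to the exponent $-\frac{r(m-r)(n-m)}{2n}\ln n-rC_0n+O(rn)$; the only cosmetic difference is that you expand around $\ln m$ and then convert to $\ln n$, whereas the paper writes $\ln(V_j/\zeta(j))=-\frac{j}{2}\ln n+O(n)$ directly.
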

\begin{proof}
For a candidate of determinant $D$, rescaling to determinant one changes the cutoff to $U_rD^{-r/m}$. By Lemma~\ref{lem:block-average}, the mean number of violations at rank $r$ in this normalized lattice is $a(m,r)U_r^mD^{-r}$. Summing over the ranks and integrating over $D$, we obtain:
\begin{align}\label{eq:retention-bound}
 \frac{\E[N_H-Z]}{\mu}
 &\leq\frac{na(n,m)}{\mu}\sum_{r=1}^{m-1}a(m,r)U_r^m\int_0^H D^{n-r-1}\dd D =\sum_{r=1}^{m-1}\frac{n}{n-r}a(m,r)U_r^mH^{-r}.
\end{align}
To estimate these terms uniformly in $r$, we write $b_j=\ln(V_j/\zeta(j))$, with $\zeta(1)=1$. Stirling's formula gives $b_j=-j\ln n/2+O(n)$ uniformly for $1\leq j\leq n$, since $j\ln(n/j)=O(n)$. The product for $a(m,r)$ has $r$ numerator and $r$ denominator terms, so
\begin{align}\label{eq:retention-coefficient}
 \ln a(m,r)=-\frac{r(m-r)}{2}\ln n+O(rn).
\end{align}
Consequently, the logarithm of the $r$th summand in \eqref{eq:retention-bound} is
\begin{align}\label{eq:retention-decay}
 -\frac{r(m-r)(n-m)}{2n}\ln n-r\ln\frac{H}{n^{m(n-m)/(2n)}}+O(rn).
\end{align}
Choosing the constant in $H$ sufficiently large makes this at most $-\kappa nr$ for some $\kappa>0$. Summing over $r$ proves the claim.
\qed\end{proof}

\begin{lemma}[The reduced count has small variance]\label{lem:split-pairs}
For the same choice of $H$, $\mu\geq e^{\Omega(n^2)}$ and $\E[Z^2]\leq(1+e^{-\Omega(n)})\mu^2$.
\end{lemma}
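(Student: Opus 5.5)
The plan is a second-moment computation: decompose $\E[Z^2]$ according to how two retained candidates intersect. Use Kim's formulas to show that disjoint pairs contribute at most $\mu^2$, and that the internal cutoff \eqref{eq:internal-cutoff} makes all intersecting pairs negligible.

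First, the bound $\mu\geq e^{\Omega(n^2)}$. By \eqref{eq:retention-coefficient}, applied with $(n,m)$ in place of $(m,r)$, we have $\ln a(n,m)=-\frac{m(n-m)}{2}\ln n+O(n^2)$. Write $H=e^{Kn}n^{m(n-m)/(2n)}$. Then
\begin{align}
 \ln\mu=\ln a(n,m)+n\ln H=Kn^2-O(n^2),
\end{align}
and the implied $O(n^2)$ term does not depend on $K$. Taking $K$ large, as already allowed in Lemma~\ref{lem:split-retention}, gives $\mu\geq e^{\Omega(n^2)}$.

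Next, write $Z^2=\sum_{(\mathcal M,\mathcal M')}1$ over ordered pairs of retained candidates, and classify each pair by $r=\rank(\mathcal M\cap\mathcal M')\in\{0,\ldots,m\}$.
\begin{itemize}
\item[$(r=m)$] The pairs with $\mathcal M=\mathcal M'$ contribute $\E[Z]\leq\mu=e^{-\Omega(n^2)}\mu^2$.
\item[$(r=0)$] Since $2m\leq n-1$, the $k=2$ case of Kim's formula quoted after \eqref{eq:sublattice-mean} shows that ordered pairs with zero intersection have mean exactly $a(n,m)^2H^{2n}=\mu^2$. Dropping the retention condition only increases the count, so this class contributes at most $\mu^2$.
\item[$(1\leq r\leq m-1)$] These are the essential terms, treated below.
\end{itemize}
Together, the first two classes give the leading $\mu^2$, so it remains to show the third contributes $e^{-\Omega(n)}\mu^2$.

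For $1\leq r\leq m-1$, the intersection $\mathcal C=\mathcal M\cap\mathcal M'$ is a proper primitive rank-$r$ sublattice of the retained candidate $\mathcal M$, so $\det\mathcal C\geq U_r$. Let $\mathcal S$ be the primitive closure of $\mathcal M+\mathcal M'$, of rank $s=2m-r$. Submodularity of covolumes gives
\begin{align}
 \det\mathcal S\,\det\mathcal C\leq\det(\mathcal M+\mathcal M')\det\mathcal C\leq\det\mathcal M\det\mathcal M'\leq H^2,
\end{align}
hence $\det\mathcal S\leq H^2/U_r$. The pair is then determined by a chain $\mathcal C\subset\mathcal S$ in $\mathcal L$, together with two rank-$(m-r)$ sublattices of the normalized quotient of $\mathcal S$ by $\mathcal C$ with zero intersection. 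I would bound the expected number of such configurations as follows.
\begin{enumerate}
\item Apply Lemma~\ref{lem:block-average} to $\mathcal S$ in $\mathcal L$, and again to $\mathcal C$ in $\mathcal S$, to get the density of the chain. This is equivalently Kim's flag formula, and it yields the coefficient $a(n,s)\,a(s,r)$.
\item Inside the rank-$(2m-2r)$ quotient, bound the number of pairs of complementary rank-$(m-r)$ sublattices using the $k=2$ formula (or Minkowski-type counts) in dimension $2m-2r$. The boundary case $2(m-r)=s-r$ needs the pair count for complementary ranks, which is finite up to a determinant bound.
\item Integrate over $\det\mathcal C\geq U_r$ and over $\det\mathcal S$ subject to the constraints $\det\mathcal M,\det\mathcal M'\leq H$.
\end{enumerate}
After taking logarithms with $b_j=-\frac{j}{2}\ln n+O(n)$, exactly as in \eqref{eq:retention-coefficient}, the expected contribution of rank $r$ relative to $\mu^2$ should have exponent of the shape
\begin{align}
 -\,c(r)\ln n-r\ln\frac{H}{n^{m(n-m)/(2n)}}+O(rn),
\end{align}
with $c(r)\geq0$ a quadratic in $r$ and $m$. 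This mirrors \eqref{eq:retention-decay}, and here the lower cutoff $U_r$ on $\det\mathcal C$ plays the role that $U_r$ played there. Taking $K$ large makes each term at most $-\kappa rn$, and summing over $r$ gives the $e^{-\Omega(n)}\mu^2$ bound.

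The main obstacle is the third step. I must check that the $\ln n$-coefficients cancel to a nonnegative $c(r)$ uniformly in $1\leq r\leq m-1$, including $r$ close to $m$ and the case $2m-r$ near $n$. This should be exactly where the cutoff $U_r=n^{r(n-r)/(2n)}$ is sharp: without it, the integral over small $\det\mathcal C$ diverges, which is the divergence of the untruncated second moment noted in the techniques section. If the direct chain count is too lossy, I would instead parametrize pairs by $(\mathcal C,\mathcal M,\mathcal M')$ through two chains $\mathcal C\subset\mathcal M$ and $\mathcal C\subset\mathcal M'$, bounding overcounting by the intersection constraint and using the $k=2$ formula for the two quotients in $\mathcal L/\mathcal C$.
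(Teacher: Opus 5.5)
\textbf{Gap: the intersecting pairs $1\leq r\leq m-1$ are not bounded.} Your bound $\mu\geq e^{\Omega(n^2)}$ and your handling of the diagonal and zero-intersection pairs are correct, and they match the paper. The gap is in the class $1\leq r\leq m-1$, which is the whole content of the lemma. You state an exponent of the form $-c(r)\ln n-r\ln(H/n^{m(n-m)/(2n)})+O(rn)$ with $c(r)\geq0$, but you never derive it. Your main route also cannot produce it with Kim's pair formula.

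The quotient of the primitive closure of $\mathcal M+\mathcal M'$ by $\mathcal C$ has rank $2m-2r$. The two projected rank-$(m-r)$ sublattices therefore \emph{always} have complementary ranks there, so this is not a boundary case. Kim's pair formula needs the rank sum to be at most the ambient rank minus one, so it never applies in that quotient. Moreover, $\mathcal M+\mathcal M'$ can have finite index greater than one in its primitive closure. For example, in $\Z^3$ take $\Z e_1+\Z e_2$ and $\Z e_1+\Z(e_2+2e_3)$: their sum has index $2$. You would then need a separate complementary-pair count, summed over indices, on top of a two-step flag density. The closure layer and the bound $\det\mathcal S\leq H^2/U_r$ are not needed at all.

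\textbf{The fix is your fallback, which is the paper's argument.} It involves no overcounting. Project away $\mathcal C$ in all of $\mathcal L$. The images $\pi_{\mathcal C^\perp}\mathcal M$ and $\pi_{\mathcal C^\perp}\mathcal M'$ are primitive rank-$(m-r)$ sublattices of the rank-$(n-r)$ quotient, which has determinant $D^{-1}$ with $D=\det\mathcal C\geq U_r$. They have zero intersection, because $\mathcal C$ is primitive. Each image determines its candidate as its full preimage. After normalizing the quotient, their determinant ceilings become $HD^{-(n-m)/(n-r)}$. Kim's pair formula now applies, since $2(m-r)\leq n-r-1$. Integrating against $na(n,r)D^{n-1}\dd D$ from Lemma~\ref{lem:block-average} over $D\geq U_r$ gives
\begin{align}
 \frac{n}{s}\,a(n,r)\,a(n-r,m-r)^2H^{2(n-r)}U_r^{-s},\qquad s=n-2m\geq1.
\end{align}
The cutoff removes the divergence at $D\to0$. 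Convergence as $D\to\infty$ needs $s\geq1$, which is why $m=\lfloor(n-1)/2\rfloor$ at half rank.

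Next, cancel the products defining $a(\cdot,\cdot)$ before applying Stirling; otherwise the error is $O(mn)$ rather than $O(rn)$, which swamps the $H$-term for small $r$. This gives $\ln\frac{a(n,r)a(n-r,m-r)^2}{a(n,m)^2}=\frac{r(s+r)}{2}\ln n+O(rn)$. Hence the $r$th term relative to $\mu^2$ has exponent $-\frac{r(m-r)(n-m)}{n}\ln n-2r\ln\frac{H}{n^{m(n-m)/(2n)}}+O(rn)$. Note the factor $2r$ rather than $r$. Only after this computation does a large constant in $H$ give the $e^{-\Omega(n)}$ bound.
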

\begin{proof}
We distinguish pairs according to the rank of their intersection, and write $s=n-2m\geq1$. Pairs with zero intersection have mean $\mu^2$ before removing candidates, by Kim's pair formula in Section~\ref{sec:mean-inputs}. The diagonal contributes at most $\mu$. It remains to count distinct candidates intersecting in a rank-$r$ lattice $\mathcal C$, where $1\leq r<m$.

The intersection is primitive and has $D:=\det\mathcal C\geq U_r$. Projecting away $\mathcal C$ gives two primitive rank-$(m-r)$ sublattices with zero intersection in a rank-$(n-r)$ quotient of determinant $D^{-1}$. Each projected sublattice determines its original candidate as its full inverse image, so there is no additional multiplicity. Normalizing the quotient changes the two determinant ceilings to $HD^{-(n-m)/(n-r)}$.

Kim's pair formula applies because $2(m-r)\leq n-r-1$. Integrating its value against the intersection density $na(n,r)D^{n-1}\dd D$ bounds the number of these pairs by
\begin{align}\label{eq:overlap-integral}
 &na(n,r)a(n-r,m-r)^2H^{2(n-r)}\int_{U_r}^{\infty}D^{2m-n-1}\dd D =\frac{n}{s}a(n,r)a(n-r,m-r)^2H^{2(n-r)}U_r^{-s}.
\end{align}
The cutoff removes the divergence at zero, while $s\geq1$ ensures convergence at infinity. This is why, for $\alpha=1/2$, we chose $m=\lfloor(n-1)/2\rfloor$ rather than $\lfloor n/2\rfloor$.

Combining the diagonal, the pairs with zero intersection, and the remaining intersecting pairs gives:
\begin{align}\label{eq:variance}
 \frac{\E[Z^2]}{\mu^2}\leq1+\frac{1}{\mu}
 +\frac{n}{s}\sum_{r=1}^{m-1}\frac{a(n,r)a(n-r,m-r)^2}{a(n,m)^2}H^{-2r}U_r^{-s}.
\end{align}
Using the same $b_j$ as above, cancellation in the products for $a(n,r)$ gives
\begin{align}\label{eq:overlap-coefficient}
 \ln\frac{a(n,r)a(n-r,m-r)^2}{a(n,m)^2}
 &=2\sum_{j=m-r+1}^m b_j-\sum_{j=n-r+1}^n b_j-\sum_{j=1}^r b_j+O(r\ln n)\\
 &=\frac{r(s+r)}{2}\ln n+O(rn).
\end{align}
The logarithm of the $r$th contribution to \eqref{eq:variance} is therefore
\begin{align}\label{eq:overlapdecay}
 -\frac{r(m-r)(n-m)}{n}\ln n-2r\ln\frac{H}{n^{m(n-m)/(2n)}}+O(rn).
\end{align}
The error term also absorbs the factor $n/s\leq n$. As in Lemma~\ref{lem:split-retention}, increasing the constant in $H$ makes the sum $e^{-\Omega(n)}$. Finally, $\ln a(n,m)=-m(n-m)\ln n/2+O(n^2)$, so the same choice makes $\ln\mu=\Omega(n^2)$.
\qed\end{proof}

\begin{proof}[Lemma~\ref{lem:marked-dense}]
We first assume $2m\leq n-1$. Restricting to candidates with determinant at least $H/2$ removes at most $2^{-n}\mu$ candidates in expectation. If the two prescribed properties have failure probabilities $\varepsilon_m$ and $\varepsilon_{n-m}$, Lemma~\ref{lem:block-average} bounds the expected number of candidates with an unsuitable block by $(\varepsilon_m+\varepsilon_{n-m})\mu=o(\mu)$. These estimates also apply after imposing the internal cutoff, since doing so can only decrease the number of unsuitable candidates. Let $Y$ count the candidates satisfying the cutoff, the determinant window and both block properties. The two preceding lemmas give
\begin{align}\label{eq:split-existence}
 \E[Y]=(1-o(1))\mu,\qquad
 \PP\{Y>0\}\geq\frac{\E[Y]^2}{\E[Y^2]}\geq\frac{\E[Y]^2}{\E[Z^2]}=1-o(1).
\end{align}
Here the probability estimate follows by applying Cauchy--Schwarz to $Y\ind_{Y>0}$.

To cover the case $\alpha>1/2$, we apply the result at rank $n-m$ in the dual lattice $\mathcal L^*$. If $\mathcal N\subset\mathcal L^*$ is the resulting primitive sublattice, take $\mathcal M=\mathcal L\cap\mathcal N^\perp$. Then $\det\mathcal M=\det\mathcal N$, while $\pi_{\mathcal M^\perp}(\mathcal L)=\mathcal N^*$ and $\pi_{\mathcal N^\perp}(\mathcal L^*)=\mathcal M^*$. The two normalized blocks are therefore swapped and dualized. Haar invariance under duality transfers the prescribed properties, proving the remaining rank fractions and the dual variants.
\qed\end{proof}

The resulting determinant exponent is optimal to leading order. Indeed, for every fixed $\varepsilon>0$, the first-moment bound gives:
\begin{align}
 \PP\left\{\exists\mathcal M\in\Gr(\mathcal L,m):\det\mathcal M\leq n^{m(n-m)/(2n)-\varepsilon n}\right\}
 \leq n^{-\varepsilon n^2+O(n^2/\ln n)}=o(1).
\end{align}
The exponent $m(n-m)/(2n)$ is maximized at $m=n/2$, where it is $n/8$. Thus the half-rank construction has the scale $n^{n/8+o(n)}$ used in the introduction.


\subsection{The explicit two-block split}\label{sec:two-way}

We now apply the lemma to a sublattice of approximately half the ambient rank, fixing the middle prefix determinant at $n^{n/8+o(n)}$. We complete the left block with a reversed dual-HKZ basis and the right block with an HKZ basis. Requiring the normalized blocks to have the usual Haar profiles allows us to determine the entire limiting prefix profile. Its two maxima occur inside the blocks, on either side of the split.

\begin{proof}[Theorem~\ref{thm:two-way}]
Set $m=\lfloor(n-1)/2\rfloor$. Lemma~\ref{lem:marked-dense} supplies a primitive rank-$m$ sublattice $\mathcal{M}$ with $H/2\leq D=\det\mathcal{M}\leq H$, where $H=e^{O(n)}n^{m(n-m)/(2n)}$. In particular, $\ln D/(n\ln n)=1/8+o(1)$. Let $\mathcal{Q}=\pi_{\mathcal{M}^\perp}(\mathcal{L})$, so that $\det\mathcal{Q}=D^{-1}$.

When applying Lemma~\ref{lem:marked-dense}, we require the profile conclusions of Proposition~\ref{prop:profile} for the normalized lattices $D^{1/m}\mathcal M^*$ and $D^{1/(n-m)}\mathcal Q$. For the coordinate and first-vector claims below, we also require $r_d/2\leq\lambda_1\leq2r_d$ in these two lattices and in $D^{-1/m}\mathcal M$. These properties hold with probability $1-o(1)$ by \eqref{eq:short-tail} and Haar invariance under duality.

Let $\mathbf{C}$ and $\mathbf{E}$ be HKZ bases of these two normalized lattices, respectively. We undo both scalings, dualize and reverse the first basis, and append arbitrary lifts of the second basis. Since $\mathcal M$ is primitive, the resulting concatenation is a basis $\mathbf B$ of $\mathcal L$. We may size-reduce this basis without changing its Gram--Schmidt lengths. The prefix determinants obey the exact identities
\begin{align}\label{eq:split-prefix-identities}
  P_k(\mathbf{B})&=\frac{k}{m}\ln D+P_{m-k}(\mathbf{C}),&&0\leq k\leq m,\\
 P_{m+j}(\mathbf{B})&=\left(1-\frac{j}{n-m}\right)\ln D+P_j(\mathbf{E}),&&0\leq j\leq n-m.
\end{align}
We now apply the prescribed profile estimates to $\mathbf C$ and $\mathbf E$. As $m/n\to1/2$, the logarithms of both block dimensions are $\ln n+O(1)$. Together with $\ln D/(n\ln n)=1/8+o(1)$, this gives, uniformly for $0\leq x\leq1$:
\begin{align}\label{eq:two-way-profile}
 \frac{P_{\lfloor xn\rfloor}(\mathbf{B})}{n\ln n}=p_2(x)+o(1),\qquad
 p_2(x)=\frac{v}{8}-\frac{v}{4}\ln v,\quad v=2\min\{x,1-x\}.
\end{align}
The two maxima occur at $x=1/(2\sqrt e)$ and $x=1-1/(2\sqrt e)$, where $p_2(x)=1/(4\sqrt e)$. With probability $1-o(1)$, this basis exists and Proposition~\ref{prop:tree-prefix} applies to every basis. Its normalized tree exponent is therefore $\max_xp_2(x)+o(1)$, proving \eqref{eq:two-way-main}.
\qed\end{proof}

\paragraph{The Gram--Schmidt profile and the first vector.}\label{sec:split-gs}
Proposition~\ref{prop:profile}, with reflection and sign reversal in the first block, gives the curve in Figure~\ref{fig:gs-profiles}:
\begin{align}\label{eq:split-gs-profile}
 g_2(x)=p_2'(x)=
 \begin{cases}
 -\frac{1}{4}-\frac{1}{2}\ln(2x),&0<x<\frac{1}{2},\\
 \frac{1}{4}+\frac{1}{2}\ln(2(1-x)),&\frac{1}{2}<x<1.
 \end{cases}
\end{align}
The normalized GS lengths converge to this curve in averaged absolute error on every closed subinterval contained in either $(0,1/2)$ or $(1/2,1)$. At the split, the first lengths $c_1,e_1=n^{1/2+o(1)}$ of $\mathbf C,\mathbf E$ give
\begin{align}\label{eq:split-gs-jump}
 a_m=\frac{D^{1/m}}{c_1}=n^{-1/4+o(1)},\qquad
 a_{m+1}=D^{-1/(n-m)}e_1=n^{1/4+o(1)}.
\end{align}
The ratio of the two Gram--Schmidt lengths at the split is therefore $n^{1/2+o(1)}$. For the first basis vector, $\lambda_1(\mathcal L)\leq\sqrt n$ and the bound on $\lambda_1(D^{-1/m}\mathcal M)$ give
\begin{align}\label{eq:split-first-vector}
 \|\boldsymbol b_1\|\geq\lambda_1(\mathcal M)
 \geq D^{1/m}\frac{r_m}{2}=n^{3/4+o(1)},\qquad
 \frac{\|\boldsymbol b_1\|}{\lambda_1(\mathcal L)}\geq n^{1/4-o(1)}.
\end{align}
A stronger prediction for the first vector follows if we assume recursive GH, in the form $h_j=\frac{1}{2}\ln j+O(1)$ at every rank of $\mathbf C$. Under this additional assumption, \eqref{eq:reconstruct} gives:
\begin{align}\label{eq:split-first-vector-gh}
 \ln\|\boldsymbol b_1\|
 =\frac{\ln D}{m}+\sum_{j=2}^m\frac{h_j}{j-1}
 =\frac{1}{4}(\ln n)^2+O(\ln n).
\end{align}
The lower bound in \eqref{eq:split-first-vector} does not require this heuristic assumption.


\subsection{Recursive splitting}
\label{sec:recursive-splits}

The approach of finding a dense sublattice, splitting into blocks, and reducing each block separately, can be applied recursively. This recursive idea was already described by Hanrot--Stehl\'{e}~\cite{hscomplete}, and recently studied by~\cite{aeps2025}. For any fixed finite binary tree with split fractions in $(0,1)$, Lemma~\ref{lem:marked-dense} gives the required splits with probability $1-o(1)$. To see this, start with HKZ or reversed dual-HKZ bases at the leaves, and inductively require each normalized child lattice to admit the construction specified by its subtree.

For a split at fraction $\alpha$, let $q_\alpha=\alpha(1-\alpha)/2$. The upper profiles of the two child lattices then combine as follows:
\begin{align}\label{eq:block-composition}
 p(\alpha t)&=tq_\alpha+\alpha p_{\mathrm L}(t),\\
 p(\alpha+(1-\alpha)t)&=(1-t)q_\alpha+(1-\alpha)p_{\mathrm R}(t),\qquad 0\leq t\leq1.
\end{align}
For a fixed tree, rounding and determinant slack contribute $o(n\ln n)$, so Proposition~\ref{prop:tree-prefix} gives $T(\mathbf B)\leq n^{(\max p)n+o(n)}$.

To find trees with a small maximum prefix determinant, the accompanying script \texttt{recursive\_binary\_splits.py} performs the following search:
\begin{enumerate}
\item At depth $0$, we tabulate the maxima of $p_{\mathrm H}(t)+st$ and $p_{\mathrm D}(t)+st$.
\item At each further depth, we balance the two child peaks using \eqref{eq:block-composition}, with inherited tilts $s+(1-\alpha)/2$ and $s-\alpha/2$. We retain an earlier candidate if it gives a smaller maximum.
\item We reconstruct the finite tree and evaluate the maximum in each terminal block separately with high-precision arithmetic.
\end{enumerate}

\begin{proof}[Theorem~\ref{thm:eight-blocks}, with numerical evaluation]
The numerical search first gives a coefficient below $0.1474$ at depth $13$. To leave a larger margin, we use its depth-$14$ candidate with $16\,384$ blocks. Evaluating the maximum in each terminal block separately gives:
\begin{align}\label{eq:recursive-numerical-bound}
 \max_xp(x)=0.147399340499\ldots<0.14739935<0.1474.
\end{align}
The finite-tree construction above then proves the claimed bound as $n\to\infty$. The script and recorded trees accompany the article. The proof only uses this specific candidate, and does not require the numerical search to find an optimal tree. The $128$-block curve in Figure~\ref{fig:tree-profiles} is the depth-$7$ example, and Figure~\ref{fig:stack-provable} shows depths $0$ through $10$.
\qed\end{proof}

\begin{figure}[!t]
\centering
\includegraphics[width=\linewidth]{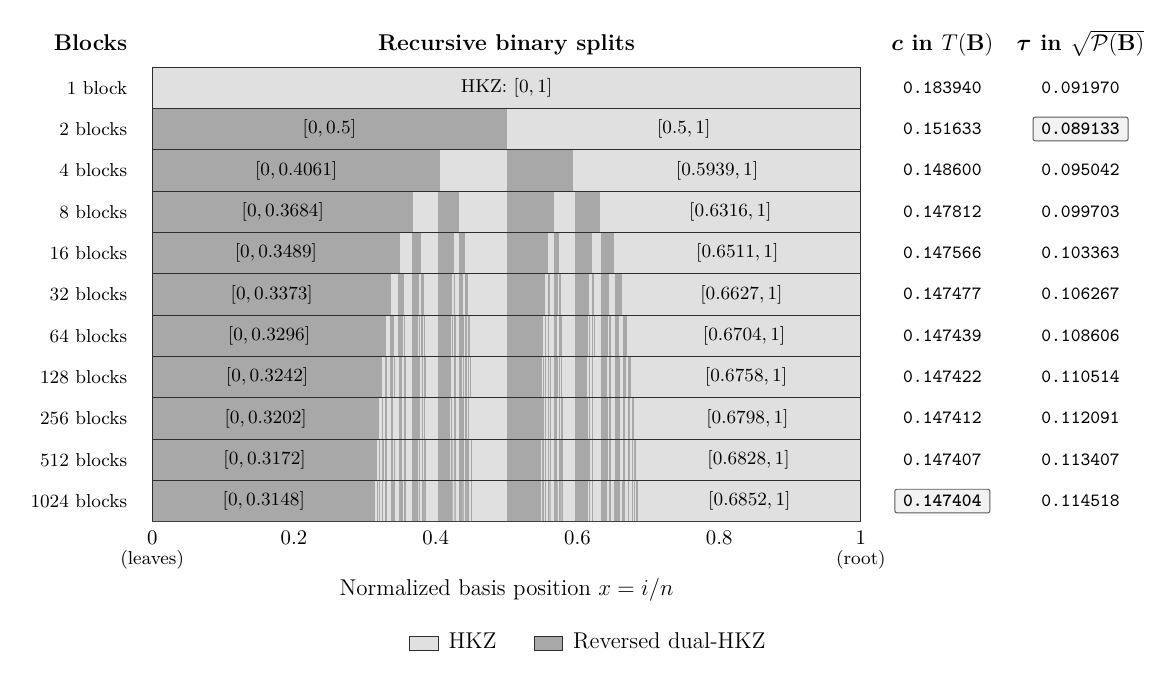}
\caption{Recursive binary splits for various recursion depths. As the depth increases, the computed upper bound on $c$ stabilizes around $0.147397\ldots < 0.1474$. The same construction does not lower the FGL collision search cost constant $\tau$ beyond the binary split.}
\label{fig:stack-provable}
\end{figure}


\section{Heuristic optimal bases}
\label{sec:heuristic-upper}

To approach the lower bound $(2-\sqrt2)/4$, we use a different arrangement of blocks. The two outer blocks have fixed normalized width, while the remaining interval is divided into $K-2$ equally sized middle blocks. As $K$ increases, the middle blocks become smaller. Under Assumption~\ref{heur:assumption}, this construction gives:
\begin{align}\label{eq:heuristic-block-upper}
 c_K\leq\frac{2-\sqrt2}{4}\left(1+\frac{\sqrt2}{e(K-2)}\right),\qquad K\geq3.
\end{align}
The proof of Theorem~\ref{thm:heuristic} has two steps. We first verify that the prescribed chain satisfies the first-moment conditions in the assumption. We then complete this chain to a basis using HKZ and reversed dual-HKZ reduction inside the blocks.

\paragraph{Precise form of Assumption~\ref{heur:assumption}.}
Fix $r\geq2$, rank fractions $0=x_0<\cdots<x_r=1$, and determinant exponents $h_0=h_r=0,h_1,\ldots,h_{r-1}$, independently of $n$. Let $C_n(\mathcal L)$ count chains of primitive sublattices satisfying
\begin{align}\label{eq:heuristic-chain}
 \begin{gathered}
 \{0\}=\mathcal M_0\subset\cdots\subset\mathcal M_r=\mathcal L, \qquad \rank\mathcal M_i=\lfloor x_i n\rfloor,\qquad \det\mathcal M_i\leq n^{h_i n}\quad(1\leq i<r).
 \end{gathered}
\end{align}
For nonempty $J\subseteq\{1,\ldots,r-1\}$, let $C_{J,n}(\mathcal L)$ count chains satisfying just the ranks and bounds indexed by $J$, whether or not they extend to a full chain. The assumption says that, if some fixed $\delta>0$ satisfies
\begin{align}\label{eq:heuristic-moment-condition}
 \min_{\varnothing\ne J\subseteq\{1,\ldots,r-1\}}\E_{\mu_n}[C_{J,n}(\mathcal L)]\geq n^{\delta n^2}
\end{align}
for all sufficiently large $n$, then $\PP\{C_n(\mathcal L)>0\}=1-o(1)$. The condition is imposed on every subchain because a large full-chain average alone may hide a rank at which suitable partial chains are unlikely to exist.

For our construction, let $c=(2-\sqrt2)/4$ and $a=2c=1-1/\sqrt2$. The two outer blocks each have width $a$, independently of $K$. We split the remaining interval of width $1-2a=\sqrt2-1$ into $K-2$ middle blocks, and impose the determinant bound $n^{cn}$ at every internal boundary:
\begin{align}\label{eq:heuristic-boundaries}
 w=\frac{1-2a}{K-2},\qquad x_i=a+(i-1)w\quad(1\leq i\leq K-1),\qquad
 \det\mathcal M_i\leq n^{cn}.
\end{align}

\begin{lemma}[Positive first moments for every subchain]\label{lem:heuristic-moments}
For fixed $K\geq3$, the ranks and bounds in \eqref{eq:heuristic-boundaries} satisfy \eqref{eq:heuristic-moment-condition} with $\delta=(1-2a)^2/(8(K-2))$.
\end{lemma}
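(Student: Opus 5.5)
We compute the expected number of (sub)chains using Kim's chain formula (the $k$-flag version of \cite[Theorem~3]{kimflags}), to leading order in $n^2\ln n$. For a subchain indexed by $J=\{j_1<\cdots<j_s\}$ with ranks $k_1<\cdots<k_s$ (with $k_0=0$, $k_{s+1}=n$) and bounds $n^{cn}$, we use the following heuristic. Iterating the decomposition of Lemma~\ref{lem:block-average} along the chain factors the mean count into a product of coefficients $a(\cdot,\cdot)$ and an integral over the determinants. Using $\ln a(n,r)=-\tfrac{r(n-r)}{2}\ln n+O(n^2)$ (as in \eqref{eq:retention-coefficient}), the leading exponent is
\begin{align}
\frac{\ln\E[C_{J,n}]}{n^2\ln n}=-\frac12\sum_{i=1}^{s}\xi_i(\xi_{i+1}-\xi_i)+\sum_{i=1}^{s}c(\xi_{i+1}-\xi_{i-1})+o(1),
\end{align}
where $\xi_i=k_i/n$. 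This is the chain analogue of $a(n,r)H^n$: each rank-$k_i$ bound $D_i\le n^{cn}$ contributes $D_i^{\,n(\xi_{i+1}-\xi_{i-1})}$, because the density in $\ln D_i$ comes from the two adjacent quotients. I would justify this by writing the mean as $\prod$ of Kim constants times $\int\prod_i D_i^{n(\xi_{i+1}-\xi_{i-1})-1}\dd D_i$ over $D_i\le n^{cn}$. The integral is dominated by the corner $D_i=n^{cn}$, so the leading exponent is obtained by substitution.

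Next, simplify. Writing $w_i=\xi_i-\xi_{i-1}$ for the block widths (summing to $1$), one has $\sum_{i\le s}\xi_i(\xi_{i+1}-\xi_i)=\sum_{i<j}w_iw_j$ restricted appropriately. The cleanest route is to note that the exponent equals $c\,(\xi_1+(1-\xi_s)+\text{telescoped middle terms})\cdot$ minus a quadratic, i.e.
\begin{align}
E_J=2c\sum_{i=1}^{s}(\xi_{i+1}-\xi_{i-1})/2-\tfrac12\sum_{i=1}^s\xi_i(1-\xi_i)+\tfrac12\sum_{i=1}^{s-1}\xi_i(1-\xi_{i+1})\cdots,
\end{align}
and I would instead verify positivity directly. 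Since every internal point lies in $[a,1-a]$, the single-sublattice case $s=1$ gives $E=c-\xi(1-\xi)/2$. This is minimized at $\xi=1/2$ with value $c-1/8$, which is negative since $c<1/8$... so this does not work as stated. I therefore need to recheck the first-moment convention. For a single rank, $\E[N]=a(n,r)H^n$ gives the exponent $ch-\xi(1-\xi)/2$ with $H=n^{hn}$ and multiplier $n$. The mean count thus has exponent $h-\xi(1-\xi)/2$ in units of $n^2\ln n$ when $\ln H=hn\ln n$. With $h=c\approx0.146>1/8\ge\xi(1-\xi)/2$, the margin is $c-1/8+(\xi-1/2)^2/2$. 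This is positive because $c>1/8$ ($c\approx0.1464$). The worry above was therefore a sign slip: $c>1/8$ holds.

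The general case then reduces to a quadratic inequality. The exponent is $E_J=c\cdot\#\text{terms}$ weighted by $(\xi_{i+1}-\xi_{i-1})$ minus $\tfrac12\sum_i(\xi_i-\xi_{i-1})(\xi_{i+1}-\xi_i)$-type interaction terms. Both are explicit in the consecutive gaps $w_0,\dots,w_s$ between chosen points. The main obstacle is showing that $\min_J E_J\ge(1-2a)^2/(8(K-2))$ over all subsets of the grid. My plan is to show that the worst case is either a single point at $1/2$ or consecutive grid points with gap $w$. Any pair of adjacent chosen points at distance $g$ contributes a local surplus of order $g^2/8$ relative to the critical plateau identity $c-c^2=1/8$. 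Points in $[a,1-a]$ then satisfy $E_J\ge\tfrac18\min_i(\text{gap})\cdot(1-2a)\ge\tfrac18 w(1-2a)$, which is the claimed $\delta=(1-2a)^2/(8(K-2))$. Here I would use that the endpoint terms exactly cancel the outer flanks because $a=2c$. Lower-order Stirling errors are $O(n^2)$, which is $o(n^2\ln n)$, and so are absorbed for large $n$.
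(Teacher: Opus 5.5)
Your exponent formula is correct: $-\frac12\sum_i\xi_i(\xi_{i+1}-\xi_i)+c\sum_i(\xi_{i+1}-\xi_{i-1})$ is exactly the paper's $\Gamma_J$ from Kim's chain formula. Your single-sublattice margin $c-\frac18+\frac12(\xi-\frac12)^2>0$ is also right.

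\textbf{The gap is the general case, which is the actual content of the lemma.}
\begin{itemize}
\item You abandon the simplification. You then describe the quadratic part as adjacent-gap products $(\xi_i-\xi_{i-1})(\xi_{i+1}-\xi_i)$, but it involves all pairs of gaps.
\item What follows is only a plan. The claims that the worst case is a single point at $1/2$ or a run of consecutive points, that each gap gives a ``local surplus of order $g^2/8$'', and that $E_J\geq\frac18\min(\text{gap})(1-2a)$ are never derived.
\item The extremal picture is also wrong. A single point at $1/2$ gives $(1-2a)^2/8$, which equals the true minimum only for $K=4$. The minimum is attained, for example, by the full chain.
\item Even a proof of $E_J\geq\frac18 w(1-2a)=\delta$ would not suffice. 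Since $\E[C_{J,n}]=n^{(E_J+o(1))n^2}$ with an error of unknown sign, you need $E_J$ bounded below by something strictly larger than $\delta$.
\end{itemize}

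\textbf{The missing idea is an exact completion of squares.} Put $w_0=\xi_1$, $w_\nu=\xi_{\nu+1}-\xi_\nu$ for $1\le\nu<s$, and $w_s=1-\xi_s$. Then
\[
\sum_i\xi_i(\xi_{i+1}-\xi_i)=\sum_{j<i}w_jw_i=\tfrac12\Bigl(1-\sum_jw_j^2\Bigr),
\]
while the linear term telescopes to $c(1+\xi_s-\xi_1)$. Substituting $c=a/2$ and using $a-a^2/2=1/4$ absorbs the linear terms and the constant into the two outer squares:
\[
E_J=\frac14\Bigl[(\xi_1-a)^2+(1-a-\xi_s)^2+\sum_{\nu=1}^{s-1}(\xi_{\nu+1}-\xi_\nu)^2\Bigr].
\]
All chosen points lie on the grid $a+\Z w$ inside $[a,1-a]$. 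So these distances are nonnegative integer multiples $j_\nu w$ with $\sum j_\nu=K-2$, and $j^2\geq j$ gives
\[
E_J\geq\frac{(K-2)w^2}{4}=\frac{(1-2a)^2}{4(K-2)}=2\delta.
\]
This is the paper's argument. The factor $2$ over $\delta$ is precisely the slack that absorbs the $o(1)$ for the finitely many subchains. In this identity each gap contributes $g^2/4$, not $g^2/8$, which is why your heuristic constant lands on $\delta$ rather than $2\delta$.
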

\begin{proof}
Consider any nonempty set of boundaries $t_1<\cdots<t_s$. Kim's chain formula~\cite[Theorem~5]{kimflags} and Stirling's formula give $\E[C_{J,n}]=n^{(\Gamma_J+o(1))n^2}$. Using $c=a/2$ and $a-a^2/2=1/4$, the exponent becomes a sum of squares:
\begin{align}\label{eq:heuristic-square-slack}
 \Gamma_J&=c(1+t_s-t_1)-\frac{1-t_1^2-(1-t_s)^2-\sum_{\nu=1}^{s-1}(t_{\nu+1}-t_\nu)^2}{4}\\
 &=\frac{(t_1-a)^2+(1-a-t_s)^2+\sum_{\nu=1}^{s-1}(t_{\nu+1}-t_\nu)^2}{4}.
\end{align}
The distances in the last line are nonnegative integer multiples of $w$ whose sum is $(K-2)w$. Their squared sum is at least $(K-2)w^2$, since $j^2\geq j$ for every nonnegative integer $j$. Thus $\Gamma_J\geq(1-2a)^2/(4(K-2))=2\delta$. There are finitely many subchains for fixed $K$, so the same bound applies to all of them for sufficiently large $n$.
\qed\end{proof}

\begin{lemma}[Completing the blocks]\label{lem:heuristic-fill}
Any primitive chain with the ranks and determinant bounds in \eqref{eq:heuristic-boundaries} extends to a basis with
\begin{align}\label{eq:heuristic-prefix-ceiling}
 \max_j\frac{P_j(\mathbf B)}{n\ln n}\leq c+\frac{w}{2e}+o(1).
\end{align}
\end{lemma}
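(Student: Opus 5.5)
The plan is to build the basis block by block along the given chain, exactly as in the two-block construction of Theorem~\ref{thm:two-way}, and then bound every prefix determinant using only the deterministic bounds of Lemma~\ref{lem:hkz-bounds} and its reflected form \eqref{eq:dual-hkz-bound}. No randomness is needed, since the chain is given. Write $k_i=\lfloor x_in\rfloor$ for the ranks, with $k_0=0$ and $k_K=n$, and $\ln D_i=\ln\det\mathcal M_i\leq cn\ln n$ for $1\leq i\leq K-1$, with $D_0=D_K=1$. For each $i$, let $\mathcal G_i=\pi_{\mathcal M_{i-1}^\perp}(\mathcal M_i)$ be the $i$th block. It has rank $b_i=k_i-k_{i-1}$ and determinant $D_i/D_{i-1}$. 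Because each $\mathcal M_{i-1}$ is primitive in $\mathcal M_i$, concatenating lifts of bases of $\mathcal G_1,\ldots,\mathcal G_K$ gives a basis $\mathbf B$ of $\mathcal L$. Size reduction then does not change the Gram--Schmidt lengths.

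The basis choices are as follows. For the first block $\mathcal G_1=\mathcal M_1$, take the reversed dual of an HKZ basis of the normalized dual $D_1^{1/k_1}\mathcal M_1^*$. For all other blocks, including the middle ones and the last block $\mathcal G_K=\pi_{\mathcal M_{K-1}^\perp}(\mathcal L)$, take HKZ bases $\mathbf E_i$ of the normalized blocks $(D_{i-1}/D_i)^{1/b_i}\mathcal G_i$. The same computation that gave \eqref{eq:split-prefix-identities} now yields the exact interpolation identity
\begin{align}
P_{k_{i-1}+j}(\mathbf B)=\Bigl(1-\frac{j}{b_i}\Bigr)\ln D_{i-1}+\frac{j}{b_i}\ln D_i+P_j(\mathbf E_i),\qquad 0\leq j\leq b_i,
\end{align}
for $i\geq2$. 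For the first block the analogous identity has $P_{k_1-j}$ of the HKZ basis of the dual in place of $P_j(\mathbf E_i)$. Only upper bounds on $\ln D_i$ enter, and they enter with nonnegative weights. So we may replace each $\ln D_i$ by $cn\ln n$ for $1\leq i\leq K-1$ and by $0$ at $i=0,K$.

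For a middle block $2\leq i\leq K-1$, the linear part is then at most $cn\ln n$. Lemma~\ref{lem:hkz-bounds} gives $P_j(\mathbf E_i)\leq b_i\ln b_i\,(\max p_{\mathrm H}+o(1))=wn\ln n/(2e)+o(n\ln n)$. Here we use $b_i=wn+O(1)$ and $\ln b_i=\ln n+O(1)$. This gives the stated ceiling $c+w/(2e)$.

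For the first block, write $x=k/n\in[0,a]$ and $t=x/a$. The bound becomes
\begin{align}
tc+a\,p_{\mathrm D}(t)=\frac{x}{2}-\frac{x}{2}\ln\frac{x}{a},
\end{align}
using $c/a=1/2$. Its derivative in $x$ is $-\tfrac12\ln(x/a)\geq0$, so the maximum is at $x=a$ with value $a/2=c$. The last block is symmetric, with $p_{\mathrm H}$ in place of $p_{\mathrm D}$, and also stays below $c$. The outer blocks therefore contribute no excess at all, because the width $a=2c$ was chosen so that the HKZ flank meets the plateau tangentially.

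The main point requiring care is uniformity of the $o(1)$ terms. Lemma~\ref{lem:hkz-bounds} is stated per lattice in the limit of large dimension, uniformly in $t$. Since $K$ is fixed, there are finitely many blocks, each of dimension proportional to $n$. So the errors, together with the rounding in $k_i=\lfloor x_in\rfloor$ and the replacement of $\ln b_i$ by $\ln n$, are $o(n\ln n)$ uniformly over all prefixes. Taking the maximum over the $K$ blocks then proves \eqref{eq:heuristic-prefix-ceiling}.
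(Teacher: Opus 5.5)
Your proposal is correct and follows essentially the same route as the paper. Both use reversed dual-HKZ on the first block and HKZ on the remaining blocks; the paper allows either orientation in the middle blocks, which changes nothing since $\max p_{\mathrm H}=\max p_{\mathrm D}=1/(2e)$. Both bound each block by linear interpolation of the boundary heights plus the width times the ceiling from Lemma~\ref{lem:hkz-bounds}, and both note that the outer flank $x(1-\ln(x/a))/2$ increases on $[0,a]$ to its maximum $c$. Your explicit interpolation identity and your remark that the $o(1)$ terms are uniform over the finitely many blocks (and over all chains, since the bounds are deterministic) make precise what the paper states more tersely.
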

\begin{proof}
We reduce the first block by reversed dual-HKZ, the last block by HKZ, and each middle block by either orientation. Lifting and concatenating the resulting bases gives a basis of the full lattice. Within a block, the prefix profile is bounded by the linear interpolation of its two boundary heights plus its width times $p_{\mathrm H}$ or $p_{\mathrm D}$, by Lemma~\ref{lem:hkz-bounds}. Smaller boundary determinants only lower this interpolation.

The left outer block therefore has upper profile $x(1-\ln(x/a))/2$ for $0\leq x\leq a$, with maximum $a/2=c$; the right block is its reflection. Each middle block has both boundary heights at most $c$ and width $w$, so its peak is at most $c+w\max p_{\mathrm H}=c+w/(2e)$, up to $o(1)$. This proves the bound.
\qed\end{proof}

\begin{proof}[Theorem~\ref{thm:heuristic}]
Lemma~\ref{lem:heuristic-moments} and Assumption~\ref{heur:assumption} supply the chain with probability $1-o(1)$. Lemma~\ref{lem:heuristic-fill} and Proposition~\ref{prop:tree-prefix} then give \eqref{eq:heuristic-block-upper}, because
\begin{align}\label{eq:heuristic-peak}
 c+\frac{w}{2e}=\frac{2-\sqrt2}{4}\left(1+\frac{\sqrt2}{e(K-2)}\right).
\end{align}
For any fixed $\varepsilon>0$, we can choose $K$ independently of $n$ so that this bound is below $c+\varepsilon$. The resulting tree size is at most $n^{(c+\varepsilon)n+o(n)}$, proving \eqref{eq:heuristic}.

\qed\end{proof}

Figure~\ref{fig:stack-heuristic} shows this construction for total block counts $K=4,8,16,\ldots$, alternating the orientations of the middle blocks. For comparison, the one- and two-block rows show the HKZ and two-block constructions. As $K$ increases, the upper bound on the tree coefficient approaches the universal lower bound.

\begin{figure}[!t]
\centering
\includegraphics[width=\linewidth]{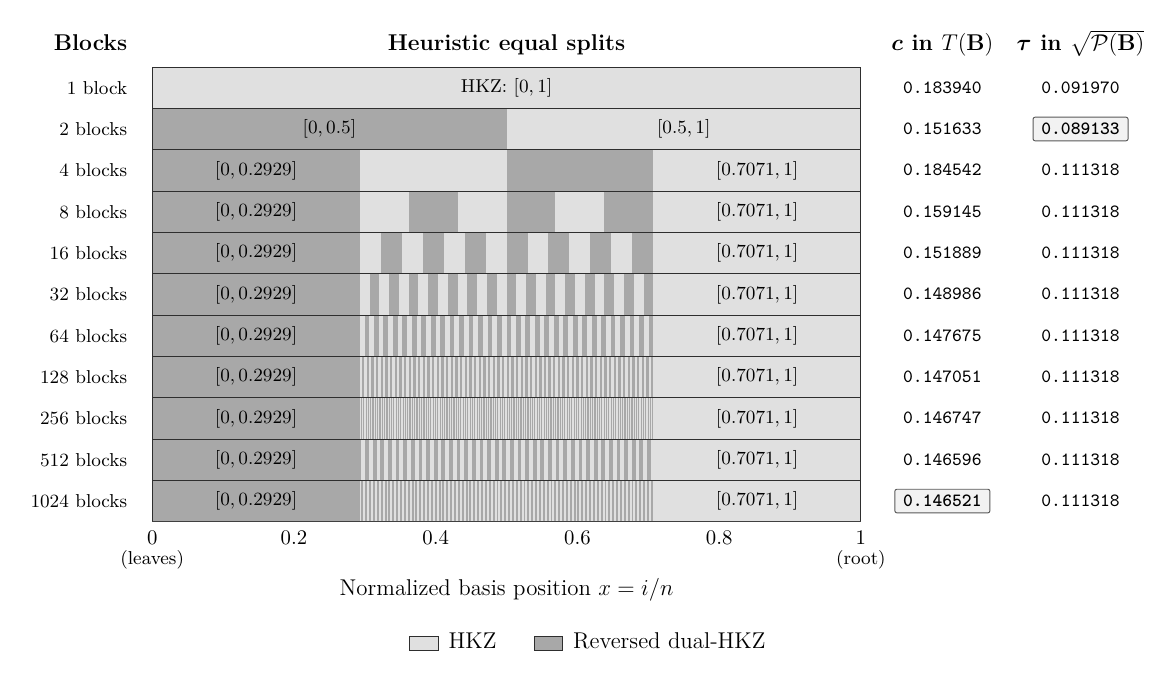}
\caption{Heuristic splits with equally sized middle blocks, using the same block counts as the recursive binary splits. The upper bound approaches the lower bound as $c_K = c_{\textrm{flat}} + O(1 / K)$. The same construction does not lower the FGL collision search cost constant $\tau$ beyond the binary split.}
\label{fig:stack-heuristic}
\end{figure}


\section{Lower bounds and profile area}
\label{sec:bounds}

To obtain a lower bound for all bases, we use that their primitive prefixes form nested chains. Kim's mean-value formula gives a lower bound on the area under a feasible prefix profile. By first reducing the two outer blocks, we turn this area constraint into the stronger lower bound on the maximum prefix determinant stated in Theorem~\ref{thm:any-basis}.


\subsection{The area rule}
\label{sec:profile-area}

A basis determines one primitive sublattice at every prefix rank. To rule out bases whose prefix profiles all lie below a fixed curve, we count these complete nested chains. For a continuous ceiling $h:[0,1]\to\R$, set $A(h)=\int_0^1h(x)\dd x$ and
\begin{align}\label{eq:area-chain-count}
 F_n(h;\mathcal L)=\#\left\{\mathcal M_1\subset\cdots\subset\mathcal M_{n-1}\subset\mathcal L:
 \begin{array}{l}\mathcal M_k\text{ primitive},\ \rank\mathcal M_k=k,\\
 \det\mathcal M_k\leq n^{nh(k/n)}\quad(1\leq k<n)\end{array}\right\}.
\end{align}
We count chains of sublattices, so different bases representing the same chain contribute only once. The expectation of $F_n$ is the average number of admissible chains in a Haar-random lattice. The next proposition shows that this average tends to zero if the area under $h$ is below $1/8$. In that case, the probability that any such chain exists also tends to zero.

\begin{proposition}[Area rule]\label{prop:area-rule}
For every fixed continuous $h$,
\begin{align}\label{eq:area-rule}
 \E_{\mu_n}[F_n(h;\mathcal L)]=n^{(2A(h)-1/4+o(1))n^2}.
\end{align}
If $A(h)<1/8$, the probability that any basis satisfies all these prefix bounds is at most $n^{-(1/4-2A(h)+o(1))n^2}$.
\end{proposition}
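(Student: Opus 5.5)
The plan is to compute $\E[F_n]$ exactly with Kim's chain formula~\cite[Theorem~5]{kimflags}, which expresses the mean number of nested primitive chains with prescribed ranks and determinant bounds as an explicit integral. Here all ranks $1,\ldots,n-1$ are prescribed, so the chain is a complete flag. Equivalently, I would apply Lemma~\ref{lem:block-average} iteratively to the successive one-dimensional quotients, as in Proposition~\ref{prop:unfold}. Writing $D_k=\det\mathcal M_k$ and $u_k=\ln D_k$, unfolding the flag one vector at a time gives a density of the form
\begin{align}
 \E[F_n]=c_n\int\prod_{k=1}^{n-1}\ind\{u_k\leq nh(k/n)\ln n\}\,e^{\sum_k 2u_k}\dd u_1\cdots\dd u_{n-1}.
\end{align}
In the flag case, each $D_k$ enters the Haar density with exponent $2$: the $k$-th Gram--Schmidt length $a_k=D_k/D_{k-1}$ carries weight $a_k^{n-k}$, and summing these logs telescopes to $\sum_k 2u_k$ up to boundary terms. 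The prefactor $c_n$ is a product of ratios $V_j/\zeta(j)$, analogous to \eqref{eq:sublattice-constant}.

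For the asymptotics, I use $b_j=\ln(V_j/\zeta(j))=-\tfrac{j}{2}\ln n+O(n)$ from the proof of Lemma~\ref{lem:split-retention}. Summing this over the flag shows $\ln c_n=-\tfrac{1}{2}\sum_j j\ln n\cdot(\ldots)$, and I expect the total to be $-\tfrac14 n^2\ln n+O(n^2)$. As a consistency check, $\sum_{j}(n-j)/2\cdot\ln n\approx n^2\ln n/4$, which should match the normalization of the exponent of $a(n,r)$ summed appropriately.

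The integral is dominated by the corner where every $u_k$ equals its ceiling. Its logarithm is therefore $2\sum_k nh(k/n)\ln n+O(n\ln n)=2A(h)n^2\ln n+o(n^2\ln n)$, using Riemann sums for continuous $h$. The integration over each $u_k$ below the ceiling contributes only a factor $1/2$ per coordinate, hence $e^{O(n)}$ overall. Combining the two estimates gives \eqref{eq:area-rule}. As a sanity check, taking $h=p_{\mathrm H}$ has area $1/8$ and yields exponent $0$, consistent with HKZ bases existing.

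The main obstacle is getting the exact exponent in the flag density and the constant $-1/4$ right. To pin it down, I would verify it against the single-rank formula: integrating out all but one $u_r$ must reproduce $a(n,r)H^n$. That check forces the marginal exponents and fixes $c_n$ to within $n^{o(n^2)}$.

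The probability bound then follows from Markov's inequality. Any basis satisfying the prefix bounds yields an admissible chain via its prefixes $\mathcal M_k$, which are primitive with $\det\mathcal M_k=e^{P_k}$ by \eqref{eq:prefix-projection}. Hence
\begin{align}
 \PP\{F_n\geq1\}\leq\E[F_n]=n^{-(1/4-2A(h)+o(1))n^2}.
\end{align}
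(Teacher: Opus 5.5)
Your proposal is correct and is essentially the paper's proof: Kim's complete-flag formula, with every determinant power equal to two, gives the exact mean $\frac{n!}{2^{n-1}}\bigl(\prod_{d=2}^n\kappa_d\bigr)\prod_{k=1}^{n-1}n^{2nh(k/n)}$; Stirling's formula then gives $-\tfrac14 n^2\ln n+O(n^2)$ for the log-prefactor, a Riemann sum gives $2A(h)$, and Markov's inequality gives the probability bound. Two of your side remarks are wrong, though neither affects the argument because you take the exponent from Kim's theorem: the weight $a_k^{n-k}$ by itself telescopes to exponent one, not two, on the bulk coordinates $u_k$ (the second power comes from the extra factor $D_{k-1}$ that appears when the projected lattice of determinant $D_{k-1}^{-1}$ is rescaled to determinant one before applying Proposition~\ref{prop:unfold}), and your single-rank cross-check is ill-posed, since integrating out the unconstrained $u_k$ gives divergent integrals $\int_{\R}e^{2u}\dd u$.
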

\begin{proof}
For the complete rank sequence $(1,\ldots,n-1)$, Kim's formula~\cite[Theorem~5]{kimflags} has determinant powers all equal to two and gives
\begin{align}\label{eq:complete-chain-mean}
 \E_{\mu_n}[F_n(h;\mathcal L)]
 =\frac{n!}{2^{n-1}}\left(\prod_{d=2}^n\kappa_d\right)
 \exp\left(2n\ln n\sum_{k=1}^{n-1}h(k/n)\right).
\end{align}
Stirling's formula gives logarithm $-n^2\ln n/4+O(n^2)$ for the prefactor. A Riemann sum proves \eqref{eq:area-rule}, and Markov's inequality gives the probability bound.
\qed\end{proof}

Applying the proposition to $h=p+\varepsilon$ shows that any fixed continuous limiting profile occurring with probability bounded away from zero must have area at least $1/8$. The area also has a determinant-product interpretation. If the normalized prefix profiles converge uniformly to a continuous function $p$, then
\begin{align}
\int_0^1 p(x)\,dx = \lim_{n\to\infty} \frac{1}{n^2\ln n} \ln\!\left(\prod_{k=1}^{n-1}\det M_k\right).
\end{align}
Products of covolumes along nested chains also occur in the theory of flag heights and generalized Hermite constants~\cite{Meyer2009}.


\subsection{The universal lower bound}
\label{sec:any-basis}

The area rule alone gives a lower bound of $1/8$ on the maximum prefix determinant, after normalization by $n\ln n$. To obtain the stronger constant $(2-\sqrt2)/4$, we first replace the two outer parts of any hypothetical basis by reduced blocks. The reduced blocks admit curved upper bounds near the endpoints, while the middle prefix determinants remain unchanged.

\begin{proof}[Theorem~\ref{thm:any-basis}]
Fix $0<c<(2-\sqrt2)/4$, and suppose a basis satisfies $\max_kP_k\leq cn\ln n$. We keep its prefix chain between ranks $\lfloor2cn\rfloor$ and $\lfloor(1-2c)n\rfloor$, reduce the first block by reversed dual-HKZ, and reduce the final quotient by HKZ. The middle prefix determinants stay unchanged. Lemma~\ref{lem:hkz-bounds} bounds the new left flank by $x(1-\ln(x/(2c)))/2+o(1)$, and the right flank by its reflection. Thus the entire new profile lies below $g_c+o(1)$, where
\begin{align}\label{eq:plateau-family}
 u_c(x)=\min\{x,1-x,2c\},\qquad
 g_c(x)=\frac{u_c(x)}{2}\left(1-\ln\frac{u_c(x)}{2c}\right).
\end{align}
The plateau has height $c$ on $[2c,1-2c]$. Each flank has area $3(2c)^2/8$, so $A(g_c) = c(1-4c) + 3c^2 = c-c^2 < 1/8$. We can therefore choose a fixed $\varepsilon>0$ with $A(g_c+\varepsilon)<1/8$. For sufficiently large $n$, the modified basis satisfies this ceiling, which the area rule excludes with probability $1-o(1)$. Hence, with probability $1-o(1)$, no original basis has $\max_kP_k\leq cn\ln n$. Letting $c$ approach $(2-\sqrt2)/4$ sufficiently slowly as $n$ grows, we obtain, with probability $1-o(1)$:
\begin{align}\label{eq:universal-prefix-lower}
 \inf_{\mathbf B\text{ basis of }\mathcal L}\max_{0\leq k\leq n}P_k(\mathbf B)\geq\left(\frac{2-\sqrt2}{4}-o(1)\right)n\ln n.
\end{align}
The error is independent of the basis. Proposition~\ref{prop:tree-prefix} then proves Theorem~\ref{thm:any-basis}.
\qed\end{proof}


\subsection{Profile areas and the plateau}
\label{sec:plateau}

The HKZ/dual-HKZ profiles $p_{\mathrm H},p_{\mathrm D}$ both have area $1/8$. Integrating \eqref{eq:block-composition} gives
\begin{align}\label{eq:split-area}
 A(p)=\frac{q_\alpha}{2}+\alpha^2A(p_{\mathrm L})+(1-\alpha)^2A(p_{\mathrm R})=\frac{1}{8}
\end{align}
whenever both child areas are $1/8$. It follows that every upper profile obtained from a fixed finite binary construction, including the orange $128$-block example in Figure~\ref{fig:tree-profiles}, has this area. A first split at $1/2$ fixes $p(1/2)=1/8$, explaining the central dip retained by later refinements.

For comparison, the gray GSA profile gives:
\begin{align}\label{eq:gsa-area}
 p_{\mathrm{GSA}}(x)=\frac{x(1-x)}{2},\qquad
 A(p_{\mathrm{GSA}})=\frac{1}{12},\qquad
 \max_xp_{\mathrm{GSA}}(x)=\frac{1}{8}.
\end{align}
The area rule therefore excludes this profile on Haar-random lattices with probability $1-o(1)$. Its derivative $1/2-x$ is the gray line in Figure~\ref{fig:gs-profiles}.

Finally, consider the critical plateau profile, with $a=1-1/\sqrt2$. The red curve satisfies:
\begin{align}\label{eq:plateau-profile}
 p_{\mathrm{flat}}=g_{a/2},\qquad A(p_{\mathrm{flat}})=\frac{1}{8},\qquad
 \max_xp_{\mathrm{flat}}(x)=\frac{a}{2}=\frac{2-\sqrt2}{4}.
\end{align}
Its derivative, shown in Figure~\ref{fig:gs-profiles}, is
\begin{align}\label{eq:plateau-gs-profile}
 p_{\mathrm{flat}}'(x)=
 \begin{cases}
 \frac{1}{2}\ln(a/x),&0<x<a,\\
 0,&a\leq x\leq1-a,\\
 -\frac{1}{2}\ln(a/(1-x)),&1-a<x<1.
 \end{cases}
\end{align}
Under Assumption~\ref{heur:assumption}, the constructions in Section~\ref{sec:heuristic-upper} give upper bounds on the prefix profile that approach $p_{\mathrm{flat}}$, and attain its leading tree coefficient. This controls the prefix determinants, but does not imply that the individual Gram--Schmidt lengths converge to the derivative shown above. Whether the critical coefficient can be attained without a heuristic assumption remains open.


\section{Collision search with split bases}
\label{sec:fgl-plateau-note}

Using the search radius $R = \sqrt n$ in the Feng--Gao--Liu (FGL) sampling and collision procedure~\cite{fgl2026} gives the geometric product
\begin{equation}
  \mathcal P(\mathbf B)=\prod_{i=1}^n\max\{1,a_i^{-1}\}.
  \label{eq:fgl-product}
\end{equation}
The dominant geometric factor in the FGL running-time bound is $\sqrt{\mathcal P(\mathbf B)}$. We first bound this factor from below for all bases, and then determine it for the two-block construction. To obtain an algorithmic upper bound, we also control the cost of sampling from the chosen basis. Theorem~\ref{thm:fgl-lower-note} concerns the geometric factor in the analysis; it does not give a lower bound on the actual running time of the procedure.

The key difference with enumeration is that collision search depends on the total variation of the prefix profile, whereas enumeration depends on its maximum. To make this precise, write $[u]_+=\max\{u,0\}$, let $\operatorname{TV}$ denote total variation on $[0,1]$, and let $p_n$ linearly interpolate $p_n(k/n)=P_k/(n\ln n)$.


\subsection{A lower bound for all bases}
\label{sec:fgl-lower}

\begin{proof}[Theorem~\ref{thm:fgl-lower-note}]
Since $\sum_i y_i=0$,
\begin{equation}
  \frac{\ln\sqrt{\mathcal P(\mathbf B)}}{n\ln n}
  =\frac{\sum_i[-y_i]_+}{2n\ln n}
  =\frac14\operatorname{TV}(p_n)
  \ge\frac{\max_kP_k}{2n\ln n}.
  \label{eq:fgl-variation}
\end{equation}
The universal prefix lower bound \eqref{eq:universal-prefix-lower} gives $\max_kP_k\ge((2-\sqrt2)/4-o(1))n\ln n$, simultaneously for all bases, with probability $1-o(1)$. Because both the error term and the inequality above are uniform over the bases, the resulting lower bound holds simultaneously for all bases, with the leading constant divided by two. Multiplying the search radius by any fixed positive constant changes the logarithm of the cost factor by at most $O(n)$, uniformly in the basis. The same leading lower bound therefore applies to the radius used for CVPP below.
\qed\end{proof}


\subsection{The two-block cost factor}
\label{sec:fgl-two-block}

\begin{proof}[Theorem~\ref{thm:fgl-two-way}]
The normalized HKZ blocks used in Section~\ref{sec:two-way} have dimensions $d=n/2+O(1)$ and shortest-vector lengths $\rho_1\asymp\sqrt d$, by the shortest-vector properties imposed in the construction. Let $\rho_i$ denote the Gram--Schmidt lengths of either normalized block. The HKZ product bound~\cite[Lemma~1]{fgl2026} gives, for every subset $I$ of size $k$,
\begin{align}
  -\sum_{i\in I}\ln\rho_i
  \le \frac{k}{2}\ln(d/k)\ln n+O(n).
\end{align}
Here and below the error is uniform in $I$, and $0\ln(d/0)=0$. Consequently,
\begin{equation}
 \begin{split}
  \sum_{i=1}^d\left[\frac14\ln n-\ln\rho_i\right]_+
  &\le \ln n\max_{0\le k\le d}
       \left\{\frac{k}{4}+\frac{k}{2}\ln(d/k)\right\}+O(n) =\left(\frac1{2\sqrt e}+o(1)\right)d\ln n.
 \end{split}
 \label{eq:fgl-hkz-subset}
\end{equation}
Rescaling the second block multiplies its lengths by $n^{-1/4}e^{O(1)}$, so the estimate above bounds its contribution to $\sum_i[-y_i]_+$. The first block is reversed and dualized, then scaled by $n^{1/4}e^{O(1)}$. Since $\sum_i\ln\rho_i=0$, its contribution is bounded by $(1/(2\sqrt e)-1/4+o(1))d\ln n$. Adding both contributions and dividing by $2n\ln n$ proves the upper bound in~\eqref{eq:fgl-two-way}. In fact, these estimates give $\ln\mathcal P(\mathbf B_2)\le2\tau_2n\ln n+O(n)$.

For the reverse inequality, Section~\ref{sec:two-way} gives uniform convergence $p_n\to p_2$. The two peaks of $p_2$ have height $1/(4\sqrt e)$, the intervening dip has height $1/8$, and its endpoints are zero. Evaluating the variation along these five points gives $\operatorname{TV}(p_n)\ge1/\sqrt e-1/4-o(1)$. Equation~\eqref{eq:fgl-variation} therefore gives a matching lower bound, showing that the constructed basis satisfies $\sqrt{\mathcal P(\mathbf B_2)}=n^{\tau_2 n+o(n)}$.
\qed\end{proof}

\paragraph{Search on the preprocessed basis.}
To turn the product estimate into a search bound, we must control the sampling cost and choose an appropriate radius. Applying the same HKZ product bound with $k=1$ and $k=d-1$, and using $\prod_i\rho_i=1$, gives $\max_i\max\{1,a_i^{-1}\}\le e^{O(n)}=n^{o(n)}$ for $\mathbf B_2$. This makes the cost of each sample $n^{o(n)}$.

For the radius, we cannot use the length of the first basis vector: by \eqref{eq:split-first-vector}, this length is not within a constant factor of $\lambda_1(\mathcal L)$. We therefore replace the cutoff $\|\mathbf b_1\|$ in \cite[Sections~3--4]{fgl2026} by $R=\sqrt n$ and retain the shortest nonzero difference found. Minkowski gives $\lambda_1(\mathcal L)\le R$, while the Haar shortest-vector bound gives $R/\lambda_1(\mathcal L)=O(1)$ with high probability. The short-translation argument therefore still applies, and grid cells contain at most $\exp(O(n\ln\ln n))$ lattice points by the same packing bound. Together with the sampling bound above, the collision analysis gives $n^{o(n)}\sqrt{\mathcal P(\mathbf B_2)}=n^{\tau_2 n+o(n)}$ time and polynomial space, with constant success probability and polynomial encoding factors suppressed. This bounds the search time once the basis is available; constructing the basis is part of preprocessing.


\subsection{Affine collision search for CVPP}
\label{sec:fgl-cvpp}

\begin{proof}[Theorem~\ref{thm:cvpp-collision}]
We use the two-block basis from Section~\ref{sec:two-way}, together with the product and coordinate estimates proved above. With probability $1-o(1)$, this basis exists and the shortest-vector bounds in \eqref{eq:short-tail} hold for both the lattice and its dual. We work on this event, on which $\lambda_1(\mathcal L)\asymp\sqrt n$. By Section~\ref{sec:prefix-nodes}, the fixed radius $R=10\sqrt n$ contains a closest vector to every target. To account for this radius, define:
\begin{equation}
 \mathcal P_R(\mathbf B)=\prod_i\max\left\{1,\frac{R}{\sqrt n\,a_i}\right\}.
 \label{eq:fgl-radius-product}
\end{equation}
Thus $\mathcal P(\mathbf B)\le\mathcal P_R(\mathbf B)\le 10^n\mathcal P(\mathbf B)$, and the maximum coordinate scale remains $n^{o(n)}$.

Fix any target $\boldsymbol t$. If $\boldsymbol t\in\mathcal L$, return $\boldsymbol t$. Otherwise, let $\boldsymbol z^*$ be a closest lattice point and put $\boldsymbol v^*=\boldsymbol t-\boldsymbol z^*$, so $\|\boldsymbol v^*\|\le R$. We sample one implicit array in $\mathcal L$ and another in $\boldsymbol t+\mathcal L$, both using rounded Gram--Schmidt coordinates centered at the origin:
\begin{align}
 \kappa_i(\boldsymbol x)=
 \left\lfloor\frac{\langle \boldsymbol x,\boldsymbol b_i^*\rangle}{a_i^2}\right\rceil.
\end{align}
Here $\lfloor u\rceil=\lfloor u+1/2\rfloor$. Restricted to either coset, $\kappa$ is a bijection with $\mathbb Z^n$. Indeed, for $\boldsymbol x=\boldsymbol t+\mathbf B\boldsymbol z$, each integer $z_i$ is recovered uniquely from $\kappa_i(\boldsymbol x)$ once $z_{i+1},\ldots,z_n$ are known. The same applies with $\boldsymbol t=\boldsymbol0$. We apply the FGL product distribution to these integer coordinates, using its rounded weights $\alpha_i\le\min\{1,na_i^2/R^2\}$. This gives distributions $\nu_0$ on $\mathcal L$ and $\nu_{\boldsymbol t}$ on $\boldsymbol t+\mathcal L$.

Translation by $\boldsymbol v^*$ maps $\mathcal L$ bijectively onto $\boldsymbol t+\mathcal L$. The rounding error in the difference of each coordinate is at most one, and hence, for every $\boldsymbol x\in \mathcal L$,
\begin{align}
 \sum_i\alpha_i
       |\kappa_i(\boldsymbol x+\boldsymbol v^*)-\kappa_i(\boldsymbol x)|^2
 \le \frac{2n}{R^2}\|\boldsymbol v^*\|^2+2n\le4n.
\end{align}
The central-box argument of~\cite[Theorem~3.1]{fgl2026} therefore applies to these two coset bijections without change. It gives maximum point masses at most $2^{O(n)}/\mathcal P_R(\mathbf B)$ and
\begin{equation}
 \sum_{\boldsymbol x\in \mathcal L}\nu_0(\boldsymbol x)\nu_{\boldsymbol t}(\boldsymbol x+\boldsymbol v^*)
 \ge\frac{2^{-O(n)}}{\mathcal P_R(\mathbf B)}.
 \label{eq:fgl-affine-overlap}
\end{equation}
Neither sampler requires knowledge of $\boldsymbol z^*$ or $\boldsymbol v^*$. Independent arrays of length $2^{O(n)}\mathcal P_R(\mathbf B)$, generated from short seeds as in FGL, consequently have the same representation and second-moment bounds, up to exponential factors. Those arguments use only the point-mass and overlap bounds and remain valid for the two different marginal distributions.

For rationally encoded bases and targets, a common scaling clears their denominators and places both cosets in $\mathbb Z^n$. We scale the search radius by the same factor, leaving the coordinate scales and all ratios used below unchanged. This permits the exact discrete-grid implementation of FGL for arbitrary encoded targets.

We then apply the FGL random grid at scale $W=O(R\sqrt{\ln n})$. A pair with difference $\boldsymbol v^*$ shares a cell with probability $2^{-O(n)}$. Packing each coset separately bounds the number of distinct points of their union in a cell by
\begin{align}
 2\left(1+\frac{2W}{\lambda_1(\mathcal L)}\right)^n
 =\exp(O(n\ln\ln n)).
\end{align}
This bound is uniform in $\boldsymbol t$, even when the two cosets are very close. The collision lemma~\cite[Lemma~4]{fgl2026} can therefore be applied with the good pairs restricted to cross-array pairs whose difference is $\boldsymbol v^*$. We initialize the retained candidate with $\boldsymbol0$ and ignore pairs from the same array. For every reported cross-array pair $X\in\mathcal L$, $Y\in\boldsymbol t+\mathcal L$, the vector $\boldsymbol t-(Y-X)$ belongs to $\mathcal L$. Throughout the prescribed search budget, we retain the candidate closest to $\boldsymbol t$. With constant probability a pair with difference $\boldsymbol v^*$ is reported, making the retained candidate exact. A constant number of independent runs raises this probability above $2/3$.

As in FGL, the resulting query time is at most $n^{o(n)}\sqrt{\mathcal P_R(\mathbf B)} =n^{(\tau_2+o(1))n}$, with polynomial space. Every geometric bound is independent of the target, so one basis supports all targets, with fresh randomness for each query.

It remains to describe the preprocessing and the size of the retained advice. Preprocessing may enumerate primitive half-rank sublattices, perform the two block reductions, and test the resulting product and coordinate bounds. The estimates above have $e^{O(n)}$ slack: for example, thresholds $\mathcal P(\mathbf B)\le n^{2\tau_2n}e^{n\sqrt{\ln n}}$ and $\max_i\max\{1,a_i^{-1}\}\le e^{n\sqrt{\ln n}}$ therefore suffice for all sufficiently large $n$. Rounding these thresholds upward to powers of two makes the tests exact and changes the query exponent only by $o(1)$.

The product bound also gives polynomial encoding length after size reduction: since $\sum_i y_i=0$, one has $|y_i|\leq\ln\mathcal P(\mathbf B)=O(n\ln n)$ and $\|\boldsymbol b_i\|\leq\sqrt n\,\mathcal P(\mathbf B)$. As in Section~\ref{sec:prefix-nodes}, the corresponding integer basis changes have polynomial bit length relative to the input. One may therefore restrict preprocessing to a finite search over such changes; the bounds above guarantee success on the stated event. Only the chosen basis is retained as advice. Preprocessing has unrestricted running time and is independent of the target.
\qed\end{proof}


\subsection{Other split profiles}
\label{sec:fgl-other-profiles}

\paragraph{Further binary splits.}
The recursive splits in Section~\ref{sec:recursive-splits} were chosen to minimize the largest prefix determinant. For collision search, however, we need to minimize $\operatorname{TV}(p)/4$. Evaluating the binary profiles of Section~\ref{sec:recursive-splits} gives collision coefficients approximately $0.09504173$ for four blocks and $0.11758925$ for $16\,384$ blocks, compared with $0.08913266$ for two blocks. Continuing the balanced refinement suggests that the collision coefficient approaches $1/8$, even though the enumeration coefficient decreases towards $0.147397\ldots$. The limits refer to the analytic profiles, taking $n\to\infty$ at each fixed depth before increasing the depth. The reason is that further splits lower the largest prefix peak while introducing additional rises and falls. These particular splits therefore do not appear to improve collision search.

\paragraph{The heuristic block construction.}
The analytic profiles of Section~\ref{sec:heuristic-upper} also give a larger collision coefficient than the two-block construction. The limiting coefficient is $(2-\sqrt2)/8+(\sqrt2-1)/(4e)=0.11131847\ldots$. In fact, in the ideal block-profile model it already has this value for every fixed total number $K\geq3$ of blocks, with $K-2$ middle blocks. Smaller blocks reduce the height of each prefix ripple, but their total variation stays unchanged. This value exceeds both the single-split coefficient and the HKZ coefficient $1/(4e)$, although it is below the predicted limit $1/8$ of the deeply refined binary profiles.

These are calculations for the analytic profiles: the first-moment assumption in Section~\ref{sec:heuristic-upper} supplies prefix ceilings and does not by itself establish the corresponding Gram--Schmidt variation or search cost. These comparisons concern the specific enumeration constructions above; they do not establish optimality of the two-block basis among all configurations allowed by Assumption~\ref{heur:assumption}. Constructing bases specifically optimized for collision search, and determining whether its lower bound $(2-\sqrt2)/8$ is attainable, remain questions for future work.


\bibliographystyle{splncs04}
\bibliography{references}

@inproceedings{ajtai01,
  series = {STOC01},
  title = {{A sieve algorithm for the shortest lattice vector problem}},
  doi = {10.1145/380752.380857},
  booktitle = {Proceedings of the thirty-third annual ACM symposium on Theory of computing},
  publisher = {ACM},
  author = {Ajtai, Mikl{\'{o}}s and Kumar, Ravi and Sivakumar, D.},
  year = {2001},
  pages = {601--610}
}

@inproceedings{bdgl,
  title = {{New directions in nearest neighbor searching with applications to lattice sieving}},
  doi = {10.1137/1.9781611974331.ch2},
  booktitle = {Proceedings of the Twenty-Seventh Annual ACM-SIAM Symposium on Discrete Algorithms},
  publisher = {Society for Industrial and Applied Mathematics},
  author = {Becker, Anja and Ducas, L{\'{e}}o and Gama, Nicolas and Laarhoven, Thijs},
  year = {2016},
  pages = {10--24}
}

@inproceedings{chennguyen,
  title = {{BKZ 2.0: Better Lattice Security Estimates}},
  doi = {10.1007/978-3-642-25385-0_1},
  booktitle = {Advances in Cryptology -- ASIACRYPT 2011},
  publisher = {Springer Berlin Heidelberg},
  author = {Chen, Yuanmi and Nguyen, Phong Q.},
  year = {2011},
  pages = {1--20}
}

@inproceedings{cvpp-bloemer00,
  title = {{Closest Vectors, Successive Minima, and Dual HKZ-Bases of Lattices}},
  doi = {10.1007/3-540-45022-x_22},
  booktitle = {Automata, Languages and Programming},
  publisher = {Springer Berlin Heidelberg},
  author = {Bl{\"{o}}mer, Johannes},
  year = {2000},
  pages = {248--259}
}

@article{cvpp-bls16,
  title = {{Tuple lattice sieving}},
  volume = {19},
  doi = {10.1112/s1461157016000292},
  number = {A},
  journal = {LMS Journal of Computation and Mathematics},
  publisher = {Wiley},
  author = {Bai, Shi and Laarhoven, Thijs and Stehl{\'{e}}, Damien},
  year = {2016},
  pages = {146--162}
}

@article{cvpp-fu16,
  title = {{Improvements in the analysis of Kannan's CVP algorithm}},
  volume = {621},
  doi = {10.1016/j.tcs.2016.01.038},
  journal = {Theoretical Computer Science},
  publisher = {Elsevier BV},
  author = {Fu, Lianzhi},
  year = {2016},
  pages = {103--112}
}

@article{cvpp-laarhoven21,
  title = {{Approximate Voronoi cells for lattices, revisited}},
  volume = {15},
  doi = {10.1515/jmc-2020-0074},
  number = {1},
  journal = {Journal of Mathematical Cryptology},
  publisher = {Walter de Gruyter GmbH},
  author = {Laarhoven, Thijs},
  year = {2021},
  pages = {60--71}
}

@inproceedings{dadushmicciancio,
  title = {{Algorithms for the Densest Sub-Lattice Problem}},
  doi = {10.1137/1.9781611973105.79},
  booktitle = {Proceedings of the Twenty-Fourth Annual ACM-SIAM Symposium on Discrete Algorithms},
  publisher = {Society for Industrial and Applied Mathematics},
  author = {Dadush, Daniel and Micciancio, Daniele},
  year = {2013},
  pages = {1103--1122}
}

@article{ducasloyer,
  title = {{Lattice Reduction via Dense Sublattices: A Cryptanalytic No-Go}},
  volume = {2},
  doi = {10.62056/ae0fhsfg},
  number = {3},
  journal = {IACR Communications in Cryptology},
  publisher = {International Association for Cryptologic Research},
  author = {Ducas, L{\'e}o and Loyer, Johanna},
  year = {2025},
  pages = {12:1--12:15}
}

@article{finckepohst,
  title = {{Improved methods for calculating vectors of short length in a lattice, including a complexity analysis}},
  volume = {44},
  doi = {10.1090/s0025-5718-1985-0777278-8},
  number = {170},
  journal = {Mathematics of Computation},
  publisher = {American Mathematical Society (AMS)},
  author = {Fincke, Ulrich and Pohst, Michael},
  year = {1985},
  pages = {463--471}
}

@inproceedings{g6k,
  title = {{The General Sieve Kernel and New Records in Lattice Reduction}},
  doi = {10.1007/978-3-030-17656-3_25},
  booktitle = {Advances in Cryptology -- EUROCRYPT 2019},
  publisher = {Springer International Publishing},
  author = {Albrecht, Martin R. and Ducas, L{\'{e}}o and Herold, Gottfried and Kirshanova, Elena and Postlethwaite, Eamonn W. and Stevens, Marc},
  year = {2019},
  pages = {717--746}
}

@inproceedings{gnr,
  title = {{Lattice Enumeration Using Extreme Pruning}},
  doi = {10.1007/978-3-642-13190-5_13},
  booktitle = {Advances in Cryptology -- EUROCRYPT 2010},
  publisher = {Springer Berlin Heidelberg},
  author = {Gama, Nicolas and Nguyen, Phong Q. and Regev, Oded},
  year = {2010},
  pages = {257--278}
}

@article{haque,
  title = {{Analysing recursive preprocessing of BKZ lattice reduction algorithm}},
  volume = {11},
  doi = {10.1049/iet-ifs.2016.0049},
  number = {2},
  journal = {IET Information Security},
  publisher = {Institution of Engineering and Technology (IET)},
  author = {Haque, Md. Mokammel and Pieprzyk, Josef},
  year = {2017},
  pages = {114--120}
}

@inproceedings{hs2007,
  title = {{Improved Analysis of Kannan's Shortest Lattice Vector Algorithm}},
  doi = {10.1007/978-3-540-74143-5_10},
  booktitle = {Advances in Cryptology - CRYPTO 2007},
  publisher = {Springer Berlin Heidelberg},
  author = {Hanrot, Guillaume and Stehl{\'{e}}, Damien},
  pages = {170--186},
  year = {2007}
}

@inproceedings{kannan83,
  series = {STOC '83},
  title = {{Improved algorithms for integer programming and related lattice problems}},
  doi = {10.1145/800061.808749},
  booktitle = {Proceedings of the fifteenth annual ACM symposium on Theory of computing  - STOC '83},
  publisher = {ACM Press},
  author = {Kannan, Ravi},
  year = {1983},
  pages = {193--206}
}

@article{kannan87,
  title = {{Minkowski's Convex Body Theorem and Integer Programming}},
  volume = {12},
  doi = {10.1287/moor.12.3.415},
  number = {3},
  journal = {Mathematics of Operations Research},
  publisher = {Institute for Operations Research and the Management Sciences (INFORMS)},
  author = {Kannan, Ravi},
  year = {1987},
  pages = {415--440}
}

@article{kimflags,
  title = {{Mean value formulas on sublattices and flags of the random lattice}},
  volume = {241},
  doi = {10.1016/j.jnt.2022.03.013},
  journal = {Journal of Number Theory},
  publisher = {Elsevier BV},
  author = {Kim, Seungki},
  year = {2022},
  pages = {330--351}
}

@article{linguyen14,
  title = {{Approximating the densest sublattice from Rankin's inequality}},
  volume = {17},
  doi = {10.1112/s1461157014000333},
  number = {A},
  journal = {LMS Journal of Computation and Mathematics},
  publisher = {Wiley},
  author = {Li, Jianwei and Nguyen, Phong Q.},
  year = {2014},
  pages = {92--111}
}

@article{lll,
  title = {{Factoring polynomials with rational coefficients}},
  volume = {261},
  doi = {10.1007/bf01457454},
  number = {4},
  journal = {Mathematische Annalen},
  publisher = {Springer Science and Business Media LLC},
  author = {Lenstra, A. K. and Lenstra, H. W. and Lov{\'{a}}sz, L.},
  year = {1982},
  pages = {515--534}
}

@incollection{nguyenhermite,
  title = {{Hermite's Constant and Lattice Algorithms}},
  doi = {10.1007/978-3-642-02295-1_2},
  booktitle = {The LLL Algorithm},
  publisher = {Springer Berlin Heidelberg},
  author = {Nguyen, Phong Q.},
  year = {2009},
  pages = {19--69}
}

@article{schnorreuchner,
  title = {{Lattice basis reduction: Improved practical algorithms and solving subset sum problems}},
  volume = {66},
  doi = {10.1007/bf01581144},
  number = {1-3},
  journal = {Mathematical Programming},
  publisher = {Springer Science and Business Media LLC},
  author = {Schnorr, C. P. and Euchner, M.},
  year = {1994},
  pages = {181--199}
}

@inproceedings{schnorrgsa,
  title = {{Lattice Reduction by Random Sampling and Birthday Methods}},
  doi = {10.1007/3-540-36494-3_14},
  booktitle = {STACS 2003},
  publisher = {Springer Berlin Heidelberg},
  author = {Schnorr, Claus Peter},
  year = {2003},
  pages = {145--156}
}

@article{siegel,
  title = {{A Mean Value Theorem in Geometry of Numbers}},
  volume = {46},
  doi = {10.2307/1969027},
  number = {2},
  journal = {The Annals of Mathematics},
  publisher = {JSTOR},
  author = {Siegel, Carl Ludwig},
  year = {1945},
  pages = {340--347}
}

@article{sodergren,
  title = {{On the Poisson distribution of lengths of lattice vectors in a random lattice}},
  volume = {269},
  doi = {10.1007/s00209-010-0772-8},
  number = {3-4},
  journal = {Mathematische Zeitschrift},
  publisher = {Springer Science and Business Media LLC},
  author = {S{\"{o}}dergren, Anders},
  year = {2011},
  pages = {945--954}
}

@article{strom,
  title = {{On the probability of a random lattice avoiding a large convex set}},
  volume = {103},
  doi = {10.1112/plms/pdr021},
  number = {6},
  journal = {Proceedings of the London Mathematical Society},
  publisher = {Wiley},
  author = {Str{\"{o}}mbergsson, Andreas},
  year = {2011},
  pages = {950--1006}
}

@misc{fgl2026,
  author = {Yansong Feng and Yiming Gao and Jiaqi Liu},
  title = {{Faster {SVP} in Polynomial Space}},
  year = {2026},
  note = {arXiv:2609.21612},
  url = {https://arxiv.org/abs/2609.21612}
}

@misc{hscomplete,
  author = {Guillaume Hanrot and Damien Stehl{\'e}},
  title = {{A Complete Worst-Case Analysis of {Kannan}'s Shortest Lattice Vector Algorithm}},
  year = {2010},
  note = {Extended manuscript},
  url = {https://perso.ens-lyon.fr/damien.stehle/downloads/KANNAN_EXTENDED.pdf}
}

@techreport{hsworst,
  author = {Guillaume Hanrot and Damien Stehl{\'e}},
  title = {{Worst-Case {Hermite--Korkine--Zolotarev} Reduced Lattice Bases}},
  institution = {INRIA},
  number = {RR-6422},
  year = {2008},
  url = {https://arxiv.org/abs/0801.3331}
}

@inproceedings{cvpp-drs14,
  author = {Daniel Dadush and Oded Regev and Noah Stephens-Davidowitz},
  title = {{On the Closest Vector Problem with a Distance Guarantee}},
  booktitle = {Proceedings of the 29th Conference on Computational Complexity},
  pages = {98--109},
  year = {2014},
  url = {https://arxiv.org/abs/1409.8063}
}

@misc{svpchallenge,
  author = {Michael Schneider and Nicolas Gama},
  title = {{{SVP Challenge}}},
  year = {2010},
  note = {Benchmark website; accessed 30 September 2026},
  url = {https://www.latticechallenge.org/svp-challenge/}
}

@book{folland,
  author = {Gerald B. Folland},
  title = {{Real Analysis: Modern Techniques and Their Applications}},
  edition = {2},
  publisher = {John Wiley \& Sons},
  year = {1999}
}

@inproceedings{faster,
  author = {Martin R. Albrecht and Shi Bai and Pierre-Alain Fouque and Paul Kirchner and Damien Stehl{\'e} and Weiqiang Wen},
  title = {{Faster Enumeration-Based Lattice Reduction: Root {Hermite} Factor $k^{1/(2k)}$ in Time $k^{k/8+o(k)}$}},
  booktitle = {Advances in Cryptology -- CRYPTO 2020, Part II},
  volume = {12171},
  series = {Lecture Notes in Computer Science},
  pages = {186--212},
  publisher = {Springer},
  year = {2020},
  doi = {10.1007/978-3-030-56880-1_7}
}

@inproceedings{anss,
  author = {Yoshinori Aono and Phong Q. Nguyen and Takenobu Seito and Junji Shikata},
  title = {{Lower Bounds on Lattice Enumeration with Extreme Pruning}},
  booktitle = {Advances in Cryptology -- CRYPTO 2018, Part II},
  volume = {10992},
  series = {Lecture Notes in Computer Science},
  pages = {608--637},
  publisher = {Springer},
  year = {2018},
  doi = {10.1007/978-3-319-96881-0_21}
}

@article{banaszczyk,
  author = {Wojciech Banaszczyk},
  title = {{New Bounds in Some Transference Theorems in the Geometry of Numbers}},
  journal = {Mathematische Annalen},
  volume = {296},
  pages = {625--635},
  year = {1993},
  doi = {10.1007/BF01445125}
}

@inproceedings{DLW19,
  author = {Emmanouil Doulgerakis and Thijs Laarhoven and Benne de Weger},
  title = {{Finding Closest Lattice Vectors Using Approximate {Voronoi} Cells}},
  editor = {Jintai Ding and Rainer Steinwandt},
  booktitle = {Post-Quantum Cryptography -- PQCrypto 2019},
  series = {Lecture Notes in Computer Science},
  volume = {11505},
  pages = {3--22},
  publisher = {Springer},
  year = {2019},
  doi = {10.1007/978-3-030-25510-7_1}
}

@article{helfrich,
  author = {Bettina Helfrich},
  title = {{Algorithms to Construct {Minkowski} Reduced and {Hermite} Reduced Lattice Bases}},
  journal = {Theoretical Computer Science},
  volume = {41},
  number = {2--3},
  pages = {125--139},
  year = {1985},
  doi = {10.1016/0304-3975(85)90067-2}
}

@article{cvpp-hrs20,
  author = {Christoph Hunkenschr{\"o}der and Gina Reuland and Matthias Schymura},
  title = {{On Compact Representations of {Voronoi} Cells of Lattices}},
  journal = {Mathematical Programming},
  volume = {183},
  pages = {337--358},
  year = {2020},
  doi = {10.1007/s10107-019-01463-3},
  url = {https://arxiv.org/abs/1811.08532}
}

@article{kz1873,
  author = {A. Korkine and G. Zolotareff},
  title = {{Sur les formes quadratiques}},
  journal = {Mathematische Annalen},
  volume = {6},
  number = {3},
  pages = {366--389},
  year = {1873},
  doi = {10.1007/BF01442795}
}

@article{cvpp-wagnermaze14,
  author = {Urs Wagner and G{\'e}rard Maze},
  title = {{Improvements in Closest Point Search Based on Dual {HKZ}-Bases}},
  journal = {Theoretical Computer Science},
  volume = {520},
  pages = {62--74},
  year = {2014},
  doi = {10.1016/j.tcs.2013.10.005},
  url = {https://arxiv.org/abs/1201.5273}
}

@inproceedings{hps,
  author = {Guillaume Hanrot and Xavier Pujol and Damien Stehl{\'e}},
  title = {Analyzing Blockwise Lattice Algorithms Using Dynamical Systems},
  booktitle = {Advances in Cryptology -- {CRYPTO} 2011},
  editor = {Phillip Rogaway},
  series = {Lecture Notes in Computer Science},
  volume = {6841},
  pages = {447--464},
  publisher = {Springer},
  year = {2011},
  doi = {10.1007/978-3-642-22792-9_25}
}

@inproceedings{aeps2025,
  author    = {Divesh Aggarwal and Thomas Espitau and
               Spencer Peters and Noah Stephens-Davidowitz},
  title     = {Recursive Lattice Reduction---A Framework for
               Finding Short Lattice Vectors},
  booktitle = {Proceedings of the 2025 Symposium on Simplicity
               in Algorithms ({SOSA})},
  pages     = {533--573},
  publisher = {SIAM},
  year      = {2025},
  doi       = {10.1137/1.9781611978315.40},
  url       = {https://arxiv.org/abs/2311.15064}
}

@inproceedings{ghkn2006,
  author    = {Nicolas Gama and Nick Howgrave-Graham and
               Henrik Koy and Phong Q. Nguyen},
  title     = {{Rankin}'s Constant and Blockwise Lattice Reduction},
  booktitle = {Advances in Cryptology -- {CRYPTO} 2006},
  series    = {Lecture Notes in Computer Science},
  volume    = {4117},
  pages     = {112--130},
  publisher = {Springer},
  year      = {2006},
  doi       = {10.1007/11818175_7}
}

@inproceedings{yuducas2018,
  author    = {Yang Yu and L{\'e}o Ducas},
  title     = {Second Order Statistical Behavior of {LLL} and {BKZ}},
  booktitle = {Selected Areas in Cryptography -- {SAC} 2017},
  series    = {Lecture Notes in Computer Science},
  volume    = {10719},
  pages     = {3--22},
  publisher = {Springer},
  year      = {2018},
  doi       = {10.1007/978-3-319-72565-9_1}
}

@article{shapiraweiss2014,
  author  = {Uri Shapira and Barak Weiss},
  title   = {A Volume Estimate for the Set of Stable Lattices},
  journal = {Comptes Rendus Math{\'e}matique},
  volume  = {352},
  number  = {11},
  pages   = {875--879},
  year    = {2014},
  doi     = {10.1016/j.crma.2014.08.019}
}

@article{LiNguyen2025,
  author  = {Li, Jianwei and Nguyen, Phong Q.},
  title   = {A Complete Analysis of the {BKZ} Lattice Reduction Algorithm},
  journal = {Journal of Cryptology},
  volume  = {38},
  year    = {2025},
  note    = {Article 12},
  doi     = {10.1007/s00145-024-09527-0}
}

@article{Meyer2009,
  author  = {Meyer, Bertrand},
  title   = {Generalised {Hermite} constants, {Voronoi} theory
             and heights on flag varieties},
  journal = {Bulletin de la Soci{\'e}t{\'e} Math{\'e}matique de France},
  volume  = {137},
  number  = {1},
  pages   = {127--158},
  year    = {2009},
  doi     = {10.24033/bsmf.2571}
}

@article{Thunder1998,
  author  = {Thunder, Jeffrey Lin},
  title   = {Higher-dimensional analogs of {Hermite}'s constant},
  journal = {Michigan Mathematical Journal},
  volume  = {45},
  number  = {2},
  pages   = {301--314},
  year    = {1998},
  doi     = {10.1307/mmj/1030132184}
}

@article{Neumaier2017,
  author  = {Neumaier, Arnold},
  title   = {Bounding basis reduction properties},
  journal = {Designs, Codes and Cryptography},
  volume  = {84},
  pages   = {237--259},
  year    = {2017},
  doi     = {10.1007/s10623-016-0273-9}
}

\end{document}